\newcommand{\bra}[1]{\mbox{$\langle #1 |$}}
\newcommand{\ket}[1]{\mbox{$| #1 \rangle$}}
\newtheorem{defn}{Definition}[section]
\newtheorem{prop}{Proposition}[section]
\newtheorem{thrm}{Theorem}[section]
\newcommand{\Tr}{\mathrm{Tr}}
\renewcommand{\chaptermark}[1]%
         {\markboth{\thechapter.\ #1}{}}
\renewcommand{\sectionmark}[1]%
         {\markright{\thesection\ #1}}
\newcommand{\LMUTitle}[9]{
  \thispagestyle{empty}
  \vspace*{\stretch{1}}
  {\parindent0cm
   \rule{\linewidth}{.7ex}}
  \begin{flushright}

    \vspace*{\stretch{1}}
    \sffamily\bfseries\Huge
    #1\\
    \vspace*{\stretch{1}}
    \sffamily\bfseries\large
    #2
    \vspace*{\stretch{1}}
  \end{flushright}
  \rule{\linewidth}{.7ex}
  \vspace*{\stretch{5}}
  \begin{center}
    \includegraphics[width=2in]{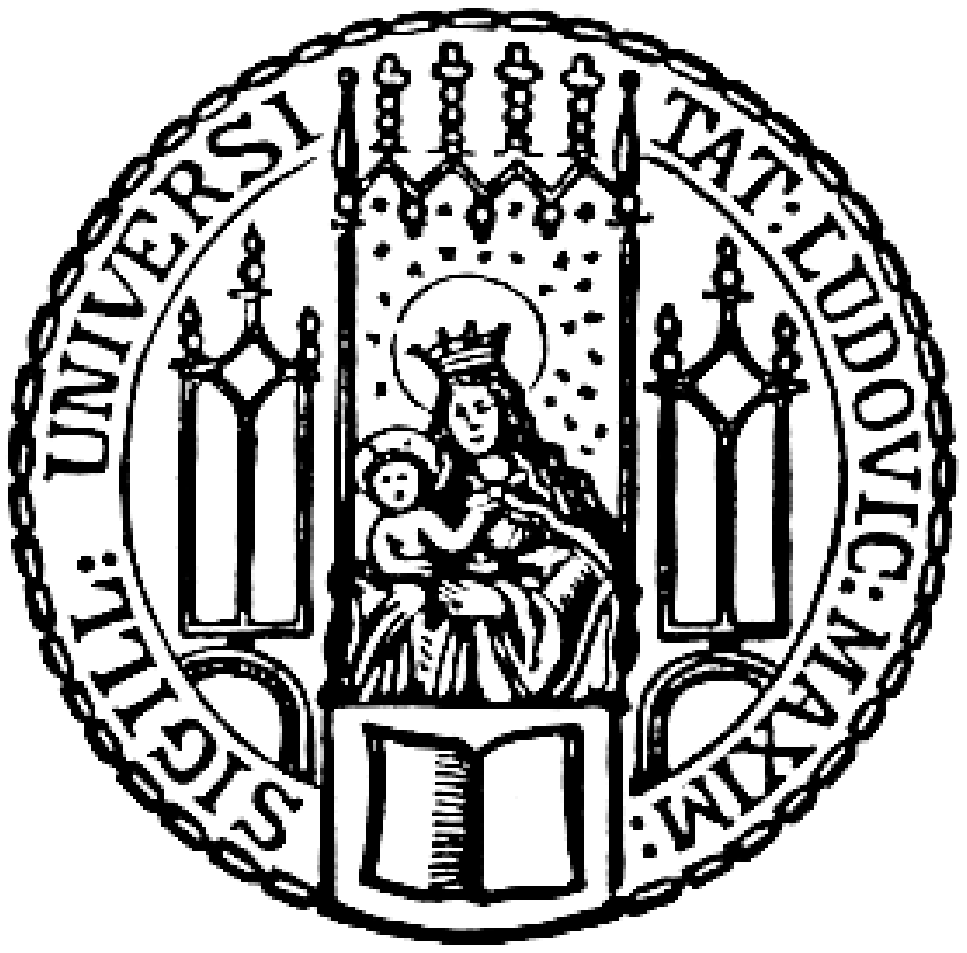}
  \end{center}
  \vspace*{\stretch{1}}
  \begin{center}\sffamily\LARGE{#5}\end{center}
  \newpage
  \thispagestyle{empty}

  \cleardoublepage
  \thispagestyle{empty}

  \vspace*{\stretch{1}}
  {\parindent0cm
  \rule{\linewidth}{.7ex}}
  \begin{flushright}
    \vspace*{\stretch{1}}
    \sffamily\bfseries\Huge
    #1\\
    \vspace*{\stretch{1}}
    \sffamily\bfseries\large
    #2
    \vspace*{\stretch{1}}
  \end{flushright}
  \rule{\linewidth}{.7ex}

  \vspace*{\stretch{3}}
  \begin{center}
    \Large \textbf{Master's Thesis}\\
    \Large 
    \textbf{Ludwig-Maximilians-Universit{\"a}t}\\
    \Large \textbf{M{\"u}nchen}\\
    \vspace*{\stretch{2}}
    \Large \textbf{Munich, #6}\\
    \vspace*{\stretch{2}}
    \Large \textbf{Supervisor:  #7} \\
  \end{center}

  \newpage
  \thispagestyle{empty}

  \vspace*{\stretch{1}}


  \cleardoublepage
}
\begin{document}

  \frontmatter

  \LMUTitle
      {Fermionic Entanglement and Correlation}
      {Lexin Ding}
      {Guangzhou} 
      {Faculty of Physics}                         
      {Munich 2020}                          
      {October 15, 2020}                            
      {Dr. Christian Schilling}                          
      {Zweitgutachter}                         
      {Pr"ufungsdatum}                         

  \mainmatter\setcounter{page}{1}
  \chapter*{}
\section*{Declaration of Authorship}

The author hereby declares that the master thesis has been written solely by himself. Results not obtained by the author have been cited in the bibliography and/or acknowledged.

\bigskip
\bigskip
\bigskip
\bigskip

Lexin Ding

Munich, October 15, 2020

\chapter*{Abstract}

Entanglement is one of the most striking features of quantum mechanics. Although the theory of entanglement for systems with distinguishable particles is well-developed, it is not directly applicable to identical fermions, as the $N$-fermion Hilbert space does not enjoy a tensor product structure. In this thesis, we study the concepts of mode entanglement and particle entanglement in fermionic systems. In particular, in the mode picture, we derived analytic formula for the entanglement between two sites/orbitals, an effective $\mathbb{C}^4 \otimes \mathbb{C}^4$ setting, while respecting the fundamental fermionic superselection rules. Using these results, we quantitatively resolved the correlation paradox in the dissociation limit, and showed that infinitesimal noise completely wipes out all the physical entanglement in the ground state of two dissociating nuclei with marginalized interaction. In molecules, we successfully separated entanglement from the total correlation between molecular orbitals. Our analysis demonstrated the drastic effect of superselection rules on the accessible entanglement between molecular orbitals, while at the same time revealed the mostly classical nature of the correlation shared between them.
  \tableofcontents

  \chapter{Introduction} \label{chap:intro}

Entanglement plays a central role in quantum information theory, where it is regarded as a highly valuable non-local resource. One can harness entanglement to perform various quantum information processing tasks that are beyond local and classical means, such as quantum teleportation  \cite{bennett1993teleporting,bouwmeester1997experimental}, quantum cryptography \cite{ekert1991quantum,bennett1992experimental} and superdense coding \cite{bennett1992communication,mattle1996dense}. Thus successfully quantifying entanglement in an operationally meaningful way is of tremendous importance. This motivation is further bolstered by recent studies that revealed a connection between entanglement and novel physical phenomena in strongly correlated many-body systems\cite{amico2008entanglement}, such as quantum phase transition \cite{vidal2003entanglement,osborne2002entanglement,osterloh2002scaling}, topological order \cite{kitaev2006topological,levin2006detecting} and chemical bonding \cite{boguslawski2013orbital,szalay2017correlation}. This is not surprising as highly entangled and complex ground states are typically responsible for such phenomena. Moreover, the theory of entanglement provides the theoretical foundation and diagnostic tools for numerical methods for solving the ground state problem such as the density matrix renormalization group (DMRG) method \cite{white1992density,legeza2003optimizing,schollwock2011density,stein2016automated}.

Despite the ubiquitous relevance of entanglement in electronic systems, the framework from quantum information theory is not immediately applicable to the fermionic setting. An illustrative example for the questionable application of quantum information theoretical tools is the attempt to quantify the ``correlation'' contained in an $N$-electron quantum state in terms of the one-particle reduced density matrix, e.g., in Refs.~\cite{Ziesche97b,huang2005entanglement}. The common reasoning is the following one: First, one defines the configuration states
\begin{equation}\label{Psiconf}
\ket{\Psi}=f_{\chi_1}^\dagger f_{\chi_2}^\dagger\cdot \ldots \cdot f_{\chi_N}^\dagger\ket{0}
\end{equation}
as being ``uncorrelated''. This seems to be plausible since ground states of \emph{non-interacting} electrons are exactly of that form
\eqref{Psiconf}, exhibiting a product structure of $N$ fermionic creation operators $f^\dagger_{\chi_j}$, populating the $N$ energetically lowest spin-orbitals $\ket{\chi_1},\ldots, \ket{\chi_N}$.

To apply the quantum information theoretical formalism which refers to \emph{distinguishable} subsystems one describes fermions by antisymmetric states within the Hilbert space $\mathcal{H}_1^{\otimes^N}$ of $N$ distinguishable particles  (``first quantization''). By referring to the tensor product $\mathcal{H}_1^{\otimes^N}$, each electron is assigned \emph{its own} one-particle Hilbert space $\mathcal{H}_1$ and  algebra of observables and the notion of reduced density operators follows then accordingly. Yet the unpleasant surprise is that even for an ``uncorrelated'' state \eqref{Psiconf} each of the $N$ electrons is still entangled with the complementary $N-1$ electrons. Indeed, the
von Neumann entropy
\begin{equation}\label{entropy}
S(\gamma)= -\mbox{Tr}[\gamma \log \gamma] = -\sum_{j} \lambda_j \log \lambda_j
\end{equation}
of the one-particle reduced density matrix (1RDM) $\gamma \equiv \Tr_{N-1}[\ket{\Psi}\!\bra{\Psi}]= 1/N\sum_{j=1}^{N}\ket{\chi_j}\!\bra{\chi_j}$ does not vanish. One tries to ``fix'' this issue by normalizing $\gamma$ to the particle number $N$ instead. This has the effect that $\gamma$'s non-vanishing eigenvalues $\lambda_j$ change from $1/N$ to $1$ and $S$ would consequently vanish as desired \cite{Ziesche97}. Yet, the von Neumann entropy \eqref{entropy} has an information theoretical origin and meaning based on probability theory\cite{Jozsa97book,plenio2014introduction} which is now unfortunately lost.

To circumvent this issue, we follow two natural routes and define new correlation quantities that respect the fermionic nature of the system. One route is to define correlation in the mode picture, where we embed the $N$-fermion Hilbert space into the total Fock space of the associated modes. A tensor product structure is naturally recovered when we separate the total set of modes into two subsets. Using this structure, correlation and entanglement between the two subsets of modes can be defined and measured the same way as in between distinguishable systems, as the two subsets of modes are distinct. This process is of course not free of issues. Namely, not all observables on the local Fock spaces are physical due to the superselection rules (SSR), which alters what we perceive as correlation and entanglement \cite{bartlett2003entanglement,banuls2007entanglement}. Thus the notion of entanglement between fermionic modes must be defined with great care. The second route leads us into the particle picture, but without the $N$-fermion Hilbert space embedding into the $N$ distinguishable particle space. Inspired by concepts from resource theory, configuration states are considered ``free'' of the resource of ``particle correlation'', in analogy to uncorrelated states in the distinguishable particle setting, and their convex combinations are denoted as ``quantum-free'', in analogy to separable states. 

Having defined these new concepts of fermionic correlation and entanglement, the underdeveloped practical aspect of entanglement measure theory then becomes the biggest hindrance to direct applications in electronic systems. Many fruitful results were obtained on the formal definitions of different types of operationally meaningful entanglement measures \cite{wootters1998entanglement,vedral1998entanglement,bennett1996mixed}. In practice, however, they can rarely be computed with ease. So far no closed formula for a faithful measure of entanglement for general mixed states is known beyond two-qubit setting \cite{hill1997entanglement,wootters1998entanglement,miranowicz2008closed}, which excludes even the most primitive setting of two electronic orbitals (with a total Hilbert space isomorphic to $\mathbb{C}^4 \otimes \mathbb{C}^4$). To fill this important gap, we seek to calculate the entanglement between orbitals/sites measured by the \textit{relative entropy of entanglement} \cite{vedral1998entanglement}. 

The thesis is structured as follows. In Chapter \ref{chap:foundations} we review important concepts in quantum information theory, including theories of entanglement and its measures for distinguishable particle systems. In Chapter \ref{chap:fermion} we introduce the concepts of mode- and particle- correlation and entanglement for fermions. In Chapter \ref{chap:quantifying} we derive our main results, namely the analytic formula for mode entanglement between two sites/orbitals. These results are applied to concrete systems in Chapter \ref{chap:App}, where we fully resolve the correlation paradox in the dissociation limit in Section \ref{sec:diss}, and study the correlation and entanglement between orbitals in molecular systems in Section \ref{sec:qchem}.
  \chapter{Foundations} \label{chap:foundations}

In this chapter, we will review the fundamental aspects of the theory of correlation and entanglement. Starting from the mathematical definition of quantum states in Section \ref{sec:quant_state} and measurement in Section \ref{sec:meas}, the concepts of bipartite correlation and entanglement in distinguishable systems will be addressed in Section \ref{sec:distinguishable_ent}, as well as various ways of quantifying entanglement in Section \ref{sec:ent_meas}. Additionally, we will also briefly discuss how to identify the quantum and classical part of the correlation in a quantum state in Section \ref{sec:quantvsclass}.

\section{Quantum States} \label{sec:quant_state}

Quantum mechanics postulate that every physical system is associated with a complex vector space, a Hilbert space $\mathcal{H}$\cite{nielsen2002quantum}. The quantum states describing the system are elements (rays) in $\mathcal{H}$, which contain complete information of the system. On one hand, it is remarkable that quantum states should form such high level structure. For one, closure of the Hilbert space under linear combination immediately give rise to superposition, the key ingredient for entanglement. On the other hand, the postulate tells us nothing about \textit{how} to identify these quantum descriptors. We know that the content of a quantum state is two-fold: 1) It is the end results of a sequence of operations, \textit{a preparation}, that represents physical manipulation of the system given the initial condition. 2) It contains all information regarding the probabilistic distributions of outcomes regarding any physically implementable measurement. As many preparations can lead to the same state of the system, we should be able to talk about a quantum state as a mathematical object without referring to its preparation. This object should serve as an oracle, a map, that contains answers to all the expectation values of physical observables, and the higher order moments (leading to knowledge of variance and so on). 

The existence of a representation of quantum states as vectors in Hilbert spaces is proved by the so called Gelfand-Naimark-Segal (GNS) construction\cite{gelfand1943imbedding,segal1947irreducible}. More precisely, a representation is first established for physical observables which then allows for representation of states. Since one can add two physical observables or measure them in sequence, this give rise to an algebraic structure among them. That is, sums and products of physical observables should also be physical observables. The closure of the set of physical observables under addition and multiplication is called the \textit{algebra of observables} $\mathcal{A}$. Additionally each element $A$ in $\mathcal{A}$ is associated with an \textit{adjoint} (Hermitian conjugation) denoted as $A^\dagger$ which is also contained in $\mathcal{A}$. We also assume $\mathcal{A}$ is unital, i.e. it contains the identity element $\mathbbm{1}$ as it corresponds to doing nothing to the quantum state. A quantum state $\omega$ is a map from the algebra of observables $\mathcal{A}$ to the complex numbers $\mathbb{C}$
\begin{equation}
\begin{split}
\omega : \mathcal{A} &\longrightarrow \mathbb{C}
\\
A &\longmapsto \omega(A),
\end{split}
\end{equation}

that satisfies the following conditions:
\begin{enumerate}
\item  $\omega(\mathbbm{1}) = 1$. (Normalization)
\item  $\omega(A+B) = \omega(A) + \omega(B)$ for all $A,B \in \mathcal{A}$. (Linearity)
\item  $\omega(A^\dagger A) \geq 0$ for all $A \in \mathcal{A}$. (Positivity)
\end{enumerate}

Using this map $\omega$, we can define an \textit{inner product} $\langle A, B \rangle = \omega(A^\dagger B)$ for $A,B \in \mathcal{A}$. This allows for an identification of $\mathcal{A}$ as a Hilbert space $\mathcal{H}_{\omega}$. In case $\omega$ has non-trivial kernel, the identification takes the general form
\begin{equation}
    \begin{split}
        \pi : \mathcal{A}/\mathcal{I} &\longrightarrow \mathcal{H}_\omega,
        \\
        [A] & \longmapsto |[A]\rangle,
    \end{split}
\end{equation}
where the ideal $\mathcal{I} = \{ A \in \mathcal{A}\,|\, \omega(A^\dagger A) = 0\}$. The equivalence classes in the quotient $\mathcal{A}/\mathcal{I}$\footnote{To be precise this would be the Cauchy completion of $\mathcal{A}/\mathcal{I}$.} are denoted as $[A]$ for $A \in \mathcal{A}$ and $[A] = [B]$ if $A + \mathcal{I} = B + \mathcal{I}$. In this case the inner product becomes $\langle [A], [B]\rangle = \omega(A^\dagger B)$. We define the \textit{action} of an element of the algebra $A$ on a vector $|[B]\rangle$ as $A|[B]\rangle = |[AB]\rangle$. Then $\mathcal{A}$ can be mapped to the algebra of endomorphisms on $\mathcal{H}$ with the map $\Pi: \mathcal{A} \rightarrow \mathrm{End}(\mathcal{H}_\omega)$. Finally the state $\omega$ can be rewritten as
\begin{equation}
A \longmapsto \omega(A) = \omega(\mathbbm{1}A) = \langle [1], [A]\rangle = \langle [\mathbbm{1}], A [\mathbbm{1}]\rangle,
\end{equation}
and we identify the vector $|\Psi\rangle \equiv |[\mathbbm{1}]\rangle$ as the representation of $\omega$ in $\mathcal{H}_\omega$, and the value $\omega(A)$ are given by the expectation $\langle \Psi, A \Psi\rangle$.

The purpose of this grossly abbreviated version of GNS construction is not only to introduce the abstract definition of quantum states, but also to stress that even though our view of quantum mechanics usually revolves around the notion of Hilbert space, and more than often this is useful and constructive, the starting point of the theory is actually much earlier, namely from the algebra of observables. Later we shall see that changes in the algebra of observables can lead to ambiguities and subtleties in our way of describing quantum states, and have drastic effects on the notion of entanglement.

That being said, the abstract definition of quantum state is equivalent to the density matrix formalism, where the axiomatic conditions are translated to
\begin{enumerate}
\item  $\Tr[\rho] = 1$. (Normalization)
\item  $\Tr[\rho \, O] \geq 0$ for any positive matrix $O$. (Positivity)
\end{enumerate}
The linearity condition is omitted as the map $\Tr[\rho \cdot ] : \mathcal{A}\cong \mathcal{B}(\mathcal{H}) \rightarrow \mathbb{C}$ is explicitly linear.

Before we move on, a few remarks on some useful properties of the set of quantum states are due. The set of all quantum states $\mathcal{D}$ is convex. That is, if $\omega_1$ and $\omega_2$ is convex, so is their arbitrary convex combination $p \omega_1 + (1-p) \omega_2$ where $p \in [0,1]$. The boundary of $\mathcal{D}$ are the states $\omega$ with non-trivial kernels. These states are represented by rank-deficit density matrices. The extreme points of $\mathcal{D}$ are the states that cannot be written as a non-trivial ($p \neq 0,1$) convex combinations of other states. Such states are called \textit{pure} states, and are represented by rank-1 density matrices. These extreme points generate $\mathcal{D}$ as any quantum states can be decomposed as convex sums of pure states via spectral decomposition. In other words $\mathcal{D}$ is the convex hull of the set of pure states.

\section{Quantum Measurement} \label{sec:meas}

The notion of measurement and its relation to information play an important role in concepts of different type of correlations. In this section we will go over some basic concepts on quantum measurement, including projective measurement, and the more general case of positive operator-valued measurement (POVM).

\subsection{Projective Measurement}

Projective measurement is commonly known and used due to its connection to physical observables. Given a quantum state represented by a density operator $\rho$, one can obtain information regarding a physical observable $M$ by performing a projective measurement with respect to $M$. If all possible measurement outcomes form the set $\{m\}$, then one can write $M$ as its spectral decomposition $M = \sum_m m P_m$ where $P_m$ are projections onto the eigen-sector labeled by the eigenvalue $m$. The projective measurement associated with $M$ is characterized by the set of orthogonal projectors $\{P_m\}$ which satisfy the completeness relation $\sum_m P_m = \mathbbm{1}$ and orthogonality $P_m P_n = \delta_{m,n} P_m$.

Provided the set of projective operators $\{P_m\}$, we can calculate the probability of any measurement outcome by
\begin{equation}\label{eqn:prob}
    p(m) = \Tr[\rho P_m].
\end{equation}
The completeness relation guarantees the total probability of any possible measurement outcome occurring to be $1$. The expectation value of the associated observable $M$ can be recovered as $\langle M \rangle_\rho = \sum_m m p(m) = \Tr[\rho \sum_m m P_m] = \Tr[\rho M]$. Another key property of these projective operators are their idempotency. Applying the same projective measurement for the second time would only lead to the same result. The final quantum state after a projective measurement with outcome $m$ is an eigenstate of $P_m$.

\subsection{POVM}

Apart from projective measurement, there exists a more general type of measurement that are not charaterised by a set of orthogonal projectors, but rather a set of positive operators $\{E_m\}$ that satisfy the completeness relation $\sum_m E_m = \mathbbm{1}$, each associating with a measurement outcome $m$. This genearlised form of measurement is called positive operator-valued measurement (POVM). The previously introduced projective measurement of course falls under this umbrealla.  The probability for obtaining the result $m$ is given similarly as \eqref{eqn:prob} by $p(m) = \Tr[\rho E_m]$. 

Due to the relaxed orthogonality restriction, the number of positive operators $E_i$ describing a general measurement can exceed the dimension of the Hilbert space. That is, on the same Hilbert space a POVM can produce in general more number of outcomes than a projective measurement. This special property makes POVM sometimes more suitable for certain tasks. We shall illustrate this with the following example\footnote{This example is taken from the book \textit{Quantum Computation and Quantum information} by Micheal Nielsen and Issac Chuang \cite{nielsen2002quantum}.}.

Suppose Alice prepared two states, $|\psi_1\rangle = |0\rangle$ and $|\psi_2\rangle = (|0\rangle + |1\rangle)/\sqrt{2}$. She randomly gave one of them to Bob, who knew beforehand that the state was one of the two, and asked him to find out which state he was given. Since $|\psi_1\rangle$ and $\psi_2\rangle$ are not orthogonal they cannot be distinguished in a deterministic manner. Furthermore, Bob can never arrive at a definite conclusion on which state he has given any outcome of a projective measurement. However, with a carefully designed POVM, Bob can reliably answer whether his state is $|\psi_1\rangle$ or $|\psi_2\rangle$ when a subset of measurement outcomes 
are obtained. The optimal POVM for this discrimination task is charaterised by the following three positive operators
\begin{equation}
    \begin{split}
        E_1 &= \frac{\sqrt{2}}{1+\sqrt{2}} |1\rangle \langle 1 |,
        \\
        E_2 &= \frac{\sqrt{2}}{1+\sqrt{2}} \frac{(|0\rangle - |1\rangle)(\langle 0| - \langle 1 |)}{2},
        \\
        E_3 &= \mathbbm{1} - E_1 - E_2.
    \end{split}
\end{equation}
Because $E_1$ is orthogonal to $|\psi_1\rangle \langle \psi_1|$, the probability of obtaining result $E_1$ is given by $p_1 = |\langle 0 | 1\rangle|^2 = 0$ if Bob's state is $|\psi_1\rangle$. In case of outcome $|\psi_1\rangle$ Bob can safely conclude that his state is $|\psi_2\rangle$. Likewise, if Bob's measurement result is $E_2$, he knows for certain that his state is $|\psi_1\rangle$. The only downside to this procedure appears when Bob obtains the outcome $E_3$. The probability of obtaining this outcome is $p_3 = 1/2$ for both $|\psi_1\rangle$ and $|\psi_2\rangle$. In this case Bob cannot infer anything from the result.

Each of the positive operators comprising a POVM can be expressed as a product $E_i = K_i^\dagger K_i$, due to the positivity of $E_i$. These $K_i$'s are the Kraus operators describing this measurement. The final quantum state associating with the outcome $E_i$ is given by
\begin{equation}
    \rho^{(i)} = \frac{K^\dagger_i \rho K_i}{\Tr[K^\dagger_i \rho K_i]},
\end{equation}
occurring with the probability $p_i = \Tr[K^\dagger_i \rho K_i]$. 

As POVMs are not associated with any physical observables like projective measurements, one might wonder its importance or relevance. As a matter of fact, POVMs naturally appears in a subsystem, as the effect of a projective measurement on the total system. The connection between POVMs and projective measurements is precisely described in Naimark's dilation theorem\cite{gelfand1943imbedding}, which states that all POVM on a quantum system can be realised by a projective measurement on a larger system that contains it.
\begin{thrm}[Naimark's Theorem] \label{thrm:naimark}
For any POVM described by $\{E_m\}$ acting on $\mathcal{H}$, there exists an isometry $V: \mathcal{H} \rightarrow \mathcal{H}'$ with $\text{dim}(\mathcal{H}') \geq \text{dim}(\mathcal{H})$ and a projective measurement $\{P_m\}$ acting on $\mathcal{H}'$ such that $E_i = V^\ast P_m V$.
\end{thrm}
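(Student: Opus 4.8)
The plan is to prove the theorem constructively: exhibit the enlarged Hilbert space $\mathcal{H}'$, the isometry $V$, and the orthogonal projectors $\{P_m\}$ explicitly, and then verify the three required properties (isometry, projective measurement, and the dilation identity) by direct computation. Write the POVM as $\{E_m\}_{m=1}^{n}$ with $\sum_{m=1}^{n} E_m = \mathbbm{1}_{\mathcal{H}}$; the argument is unchanged for a countable index set, modulo routine convergence remarks. First I would set $\mathcal{H}' \equiv \mathcal{H}\otimes\mathbb{C}^{n}$ with $\{\ket{m}\}_{m=1}^{n}$ an orthonormal basis of $\mathbb{C}^{n}$, so that $\dim(\mathcal{H}') = n\cdot\dim(\mathcal{H}) \geq \dim(\mathcal{H})$ as required. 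Since each $E_m$ is positive it has a unique positive square root $\sqrt{E_m}$, and I would define
\begin{equation}
V : \mathcal{H} \longrightarrow \mathcal{H}', \qquad V\ket{\psi} = \sum_{m=1}^{n}\big(\sqrt{E_m}\ket{\psi}\big)\otimes\ket{m}.
\end{equation}

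Second, I would check that $V$ is an isometry. For $\ket{\psi},\ket{\phi}\in\mathcal{H}$,
\begin{equation}
\langle V\psi | V\phi\rangle = \sum_{m,m'}\bra{\psi}\sqrt{E_m}\sqrt{E_{m'}}\ket{\phi}\,\langle m | m'\rangle = \sum_{m}\bra{\psi}E_m\ket{\phi} = \langle\psi|\phi\rangle,
\end{equation}
using orthonormality of the $\ket{m}$ and the completeness relation $\sum_m E_m = \mathbbm{1}$; hence $V^\dagger V = \mathbbm{1}_{\mathcal{H}}$. From the definition of $V$ one reads off that its adjoint acts on product vectors by $V^\dagger(\ket{\phi}\otimes\ket{m}) = \sqrt{E_m}\ket{\phi}$, a fact I would record for the final step.

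Third, on $\mathcal{H}'$ define $P_m \equiv \mathbbm{1}_{\mathcal{H}}\otimes\ket{m}\!\bra{m}$. These are manifestly self-adjoint and idempotent with $P_m P_{m'} = \delta_{m,m'}P_m$, and they are complete, $\sum_m P_m = \mathbbm{1}_{\mathcal{H}}\otimes\sum_m\ket{m}\!\bra{m} = \mathbbm{1}_{\mathcal{H}'}$, so $\{P_m\}$ is a genuine projective measurement on $\mathcal{H}'$. Finally, since $P_m V\ket{\psi} = (\sqrt{E_m}\ket{\psi})\otimes\ket{m}$, applying $V^\dagger$ gives $V^\dagger P_m V\ket{\psi} = \sqrt{E_m}\sqrt{E_m}\ket{\psi} = E_m\ket{\psi}$, i.e. $V^\dagger P_m V = E_m$ for every $m$, which is the assertion (with $V^\dagger$ the adjoint written $V^\ast$ in the statement).

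As for difficulty: in finite dimensions the argument is entirely mechanical, and the only points requiring mild care are that the positive square roots $\sqrt{E_m}$ exist and are themselves positive operators, and that the enlarged measurement $\{P_m\}$ satisfies orthogonality and completeness on \emph{all} of $\mathcal{H}'$, not merely on the subspace $V(\mathcal{H})$. The genuinely delicate regime, which is not needed for any application in this thesis, is the infinite-dimensional or continuous-outcome case, where $\mathcal{H}'$ must be taken as a direct integral (or one passes to a minimal Stinespring-type dilation) and one has to justify norm convergence of $V\ket{\psi}$ and self-adjointness of the resulting projection-valued measure; I would flag this and defer to the operator-algebra literature rather than carry it out here.
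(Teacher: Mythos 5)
Your construction is exactly the one the paper uses: the dilated space $\mathcal{H}\otimes\mathbb{C}^{n}$, the isometry $V=\sum_m \sqrt{E_m}\otimes\ket{m}$, and the projectors $P_m=\mathbbm{1}\otimes\ket{m}\!\bra{m}$, with the same two verification computations. The proof is correct and matches the paper's argument; your added remarks on completeness of $\{P_m\}$ on all of $\mathcal{H}'$ and on the infinite-dimensional case are sensible but not needed here.
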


To prove Theorem \ref{thrm:naimark}, let $\{E_m\}_{m=1}^M$ be a POVM on system $A$. We will show that $\{E_m\}$ can be realised by the projective measurement $\{ \mathbbm{1}_A \otimes |e_m\rangle \langle e_m|\}$ acting on the composite system $\mathcal{H}_A \otimes \mathcal{H}_B$ where $\mathcal{H}_B \cong \mathbb{C}^M$. We define the isometry $V$ as
\begin{equation}
    V = \sum_m \sqrt{E_m} \otimes |e_m\rangle.
\end{equation}
Here we used the property that every positive operator has a positive square root. First we check that $V$ is indeed an isometry
\begin{equation}
    V^\ast V = \sum_{m,n} \sqrt{E_m}^\dagger \sqrt{E_n} \delta_{m,n} = \sum_m E_m = \mathbbm{1}_A.
\end{equation}
Secondly we check that $\{E_m\}$ is indeed realised by
\begin{equation}
    V^\ast P_m V = \sum_{i,j} \sqrt{E_i}^\dagger \mathbbm{1}_A \sqrt{E_j} \langle e_i|e_m\rangle \langle e_m|e_j\rangle = E_m.
\end{equation}
We would like to remark that such construction is not unique. Namely more than one projective measurements acting on a larger system can realise the same POVM on a subsystem. This is also not to suggest that a POVM cannot be implemented without a projective measurement on the total system.

\section{Bipartite Correlation and Entanglement in Distinguishable Systems\label{sec:distinguishable_ent}}

In this section, we review the concepts of correlation and entanglement in the common context of distinguishable subsystems as studied in quantum information theory. We restrict ourselves to the most important case of bipartite settings and refer the reader to Refs.~\cite{Geza09,Horo09} for an introduction into the concept of multipartite correlation and entanglement.

Let us consider in the following a quantum system which can be split into two subsystems $A$ and $B$. In the common quantum information theoretical formalism those two subsystems are assumed to be distinguishable and its states are described by density operators $\rho_{AB}$ on the total Hilbert space $\mathcal{H}_{AB}\equiv \mathcal{H}_A\otimes \mathcal{H}_B$, where $\mathcal{H}_{A/B}$ denotes the local Hilbert space of subsystem $A/B$. The underlying algebra $\mathcal{A}_{AB}$ of observables of the total system follows in the same way from the local algebras, $\mathcal{A}_{AB}\equiv \mathcal{A}_{A}\otimes \mathcal{A}_{B}$. A particularly relevant class of observables are the local ones, i.e, those of the form $O_A\otimes O_B$. As a matter of fact, they correspond to simultaneous measurements of $O_A$ on subsystem $A$ and $O_B$ on subsystem $B$. To understand the relation between both subsystems, one would be interested in understanding how the respective measurements of both local measurements are correlated. As a matter of definition, they are uncorrelated if the expectation value of $A\otimes B$ factorizes,
\begin{eqnarray}\label{ABzeroC}
\langle O_A\otimes O_B\rangle_{\rho_{AB}} &\equiv&\Tr_{AB}[\rho_{AB}\,O_A\otimes O_B] \nonumber \\
&=& \Tr_{AB}[\rho_{AB}\,O_A\otimes \mathbbm{1}_B]\, \Tr_{AB}[\rho_{AB}\,\mathbbm{1}_A\otimes O_B] \nonumber   \\
&\equiv & \Tr_{A}[\rho_{A}\,O_A]\, \Tr_{B}[\rho_{B}\,O_B] \equiv \langle O_A\rangle_{\rho_{A}} \langle O_B\rangle_{\rho_{B}}.
\end{eqnarray}
In the second line we introduced the identity operator $\mathbbm{1}_{A/B} \in \mathcal{A}_{A/B}$ and the last line gives rise to the reduced density operators $\rho_{A/B}\equiv \Tr_{B/A}[\rho_{AB}]$ of subsystems $A/B$ obtained by tracing out the complementary subsystem $B/A$. To quantify the correlation between the measurements of $O_A$ and $O_B$ one thus introduces the correlation function
\begin{equation}\label{ABcorfunc}
C_{\rho_{AB}}(O_A,O_B) \equiv  \langle O_A\otimes O_B\rangle_{\rho_{AB}}-  \langle O_A\rangle_{\rho_{A}} \langle O_B\rangle_{\rho_{B}}.
\end{equation}
Popular examples are the spin-spin or the density-density correlation functions, i.e., the local operators $A, B$ are given by some spin-component operator ${S}_\tau(\vec{x})$ or the particle density operator ${n}(\vec{x})$ at two different positions $\vec{x}_{A/B}$ in space.

The vanishing of the correlation function for a specific pair of observables $O_A,O_B$ does not imply by any means that the same will be the case for any other pair $O_A',O_B'$ of local observables. One idea would be to determine an average of the correlation function $C_{\rho_{AB}}(O_A,O_B)$  or its maximal possible value with respect to all possible choices of local observables $O_A,O_B$. At first sight, those two possible measures of total correlation seem to be very difficult (if not impossible) to calculate for a given $\rho_{AB}$. To achieve this, we define the uncorrelated states as the following.
\begin{defn}[Uncorrelated States]\label{def:uncorr}
Let $\mathcal{H}_{AB} \equiv \mathcal{H}_A \otimes \mathcal{H}_B$ be the Hilbert space and $\mathcal{A}_{AB}\equiv \mathcal{A}_A\otimes \mathcal{A}_B$ the algebra of observables of a bipartite system $A:B$, with local Hilbert spaces $\mathcal{H}_{A/B}$ and local algebras $\mathcal{A}_{A/B}$. A state $\rho_{AB}$ on $\mathcal{H}_{AB}$ is called uncorrelated, if and only if
\begin{equation}
    \langle O_A \otimes O_B \rangle_{\rho_{AB}} = \langle O_A \rangle_{\rho_A} \langle O_B \rangle_{\rho_B},
\end{equation}
for all local observables $O_A\in \mathcal{A}_A$, $O_B\in \mathcal{A}_B$.
The set of uncorrelated states is denoted by $\mathcal{D}_0$ and states $\rho_{AB}\notin \mathcal{D}_0$ are said to be \textit{correlated}.
\end{defn}
A comment is in order regarding the local algebras $\mathcal{A}_{A/B}$ that playing a crucial role in Definition \ref{def:uncorr}. In the context of distinguishable subsystems one typically assumes that $\mathcal{A}_{A/B}$ comprises all Hermitian operators on the local space $\mathcal{H}_{A/B}$. As a consequence, a state $\rho_{AB}$ is then uncorrelated if and only if it is a product state, $\rho_{AB}= \rho_A \otimes \rho_B$. This conclusion is, however, not true anymore if one would consider in Definition \ref{def:uncorr} smaller sub-algebras\cite{zanardi2001virtual}. Actually, exactly this will be necessary in fermionic quantum systems due to the number parity superselection rule \cite{SSR}.

Once the set of uncorrelated states $\mathcal{D}_0$  is specified, we then define the set of \textit{separable states} as the classical mixtures of uncorrelated states. Mathematically, separable states are convex combinations of uncorrelated states. The set of separable states $\mathcal{D}_\text{sep}$ is the convex hull $\mathrm{Conv}(\mathcal{D}_0)$, which is illustrated in Figure \ref{fig:states}. 

\begin{defn}[Separable States]\label{def:sep}
A state $\rho$ is separable if it can be written as $\rho_{AB} = \sum_i p_i \rho_i$ where $p_i \geq 0$, $\sum_i p_i=1$ and $\rho_i \in \mathcal{D}_0$. The set of separable state is $\mathcal{D}_\text{sep} = \mathrm{Conv}(\mathcal{D}_0)$. States $\rho_{AB} \notin \mathcal{D}_{sep}$ are called entangled.
\end{defn}

While the uncorrelated states are the ones that can be generated using local operators, separable states can be generated with the additional help of classical communication\cite{werner1989quantum}. Local operation and classical communication (LOCC) form an important class of actions that will later play a central role in quantifying entanglement. From Definitions \ref{def:uncorr} and \ref{def:sep}, we see that entanglement is a relative concept. Whether a state is entangled or not depends not only on the particular bipartition, but also on the local algebras of observables $\mathcal{A}_{A/B}$. 

\begin{figure}[ht]
    \centering
    \includegraphics[scale=0.4]{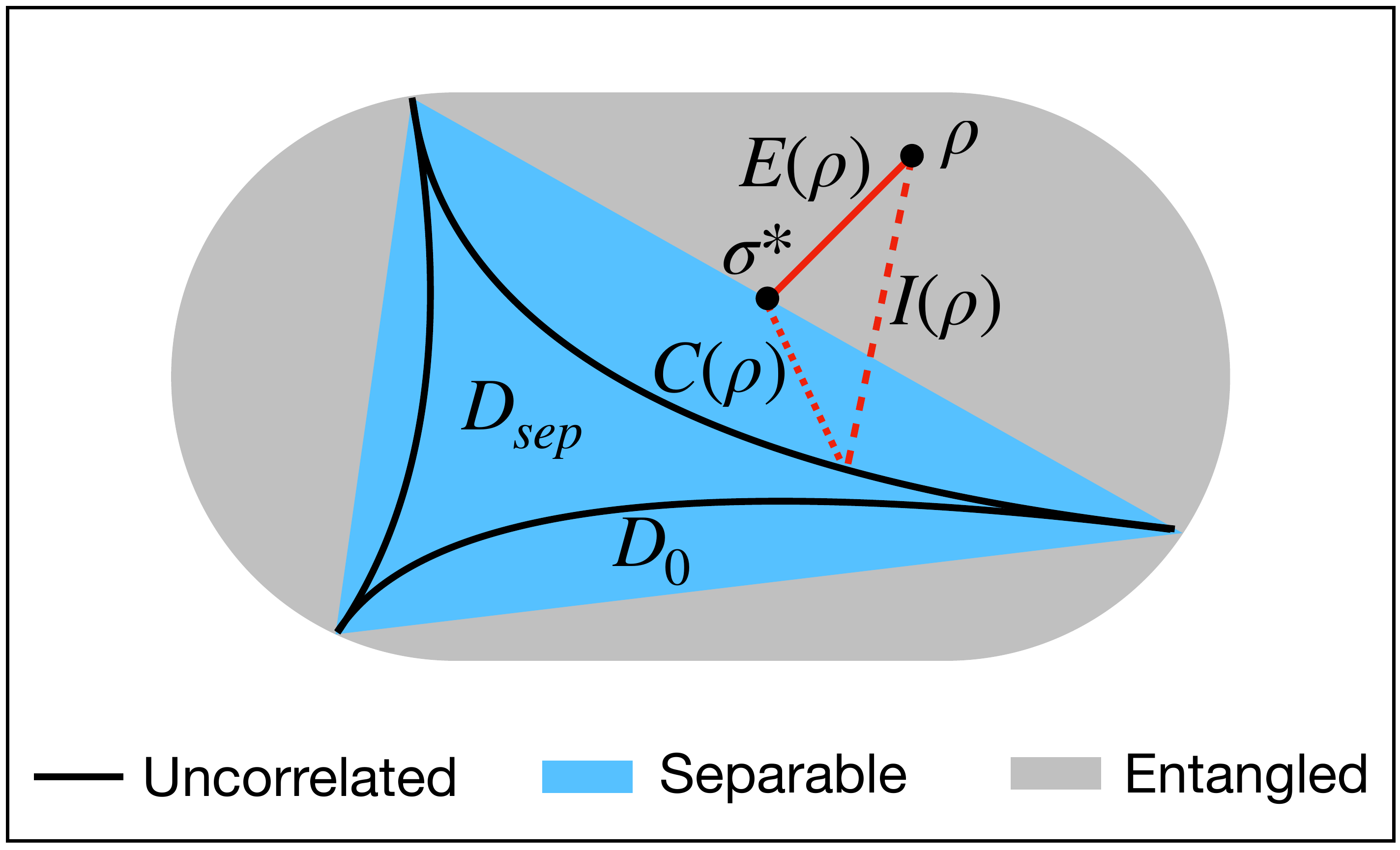}
    \caption{Schematic illustration of the space of quantum states, including the uncorrelated ($\mathcal{D}_0$, black curve) and separable states ($\mathcal{D}_{sep}$, blue). The total correlation $I$ (red dashed) and entanglement $E$ (red solid) of a state $\rho$ is its distances to $\mathcal{D}_0$ and $\mathcal{D}_{sep}$, respectively, measured by the relative entropy. The classical correlation $C$ is the distance from the closest separable state $\sigma^\ast$ to the closest product state (red dotted).}
    \label{fig:states}
\end{figure}

\section{Entanglement Detection}

In the previous section we defined the uncorrelated and separable states for arbitrary local algebras of observables. In this section we will focus on the case where the local algebras of observables are generated by all local Hermitian operators. That is, $\mathcal{A}_{A/B} \cong \mathcal{B}(\mathcal{H}_{A/B})$. In this case the uncorrelated states are exactly the product states of the form $\rho_{AB} = \rho_A \otimes \rho_B$, and the separable states can be written as their convex combinations $\rho_{AB} = \sum_i p_i \rho_A^{(i)} \otimes \rho_B^{(i)}$.

Before we can measure the correlation/entanglement in a state $\rho_{AB}$, the first essential question is, how do we tell whether $\rho_{AB}$ is correlated/entangled or not? The criterion for correlation is rather simple. One can easily detect correlation when a state $\rho_{AB}$ violates the condition
\begin{equation}
    \rho_{AB} = \Tr_B[\rho_{AB}] \otimes \Tr_A[\rho_{AB}].
\end{equation}
Detecting entanglement on the other hand is not so straightforward. Working only with Definition \ref{def:sep}, one would have to compare the state $\rho_{AB}$ with all (infinite) convex combinations of product states. The definition itself as a criterion is only conclusive when a convex decomposition of $\rho_{AB}$ into uncorrelated states is exactly found. The impracticality of the original definition necessitates the need of more practical entanglement/separability criteria. 

Partial transposition is perhaps the best known and often first applied operational separability criterion. Also named after its discoverers as the Peres-Horodecki criterion\cite{peres1996separability,horodecki1997separability}, it utilises the fact that separable states are invariant under partial transposition (on the $B$ subsystem), defined as 
\begin{equation}
\begin{split}
(\cdot)^{T_B} : \mathcal{B}(\mathcal{H}_A) \otimes \mathcal{B}(\mathcal{H}_B) &\longrightarrow \mathcal{B}(\mathcal{H}_A \otimes \mathcal{H}_B)
\\
O_A \otimes O_B &\longmapsto O_A \otimes O_B^{T}.
\end{split}
\end{equation}
The domain can be extended to $\mathcal{B}(\mathcal{H}_A \otimes \mathcal{H}_B)$ by linearity. The partial transposition acts as the identity on $\mathcal{D}_\text{sep}$ as
\begin{equation}
\rho_{AB}^{T_B} = \sum_i p_i \rho_A^{(i)} \otimes (\rho_B^{(i)})^T = \rho_{AB}, \quad \forall \rho_{AB} \in \mathcal{D}_\text{sep}.
\end{equation}
The image of $\mathcal{D}_\text{sep}$ under partial transposition consists of positive operators (or more precisely, quantum states). Therefore the partial transposition (on either subsystem) of a separable state must have a positive spectrum. Although in general the converse is not true, namely positive partial transposition does not guarantee separability, a state is conclusively entangled if the spectrum of its partial transposition contains a negative eigenvalue (see Figure \ref{fig:PPT}). This is truly remarkable given the easy implementation. Moreover, this criterion is even both sufficient and necessary when the dimensions of the Hilbert spaces are lower than or equal to $2\times 3$. 

\begin{figure}[h]
\centering
\includegraphics[scale=0.3]{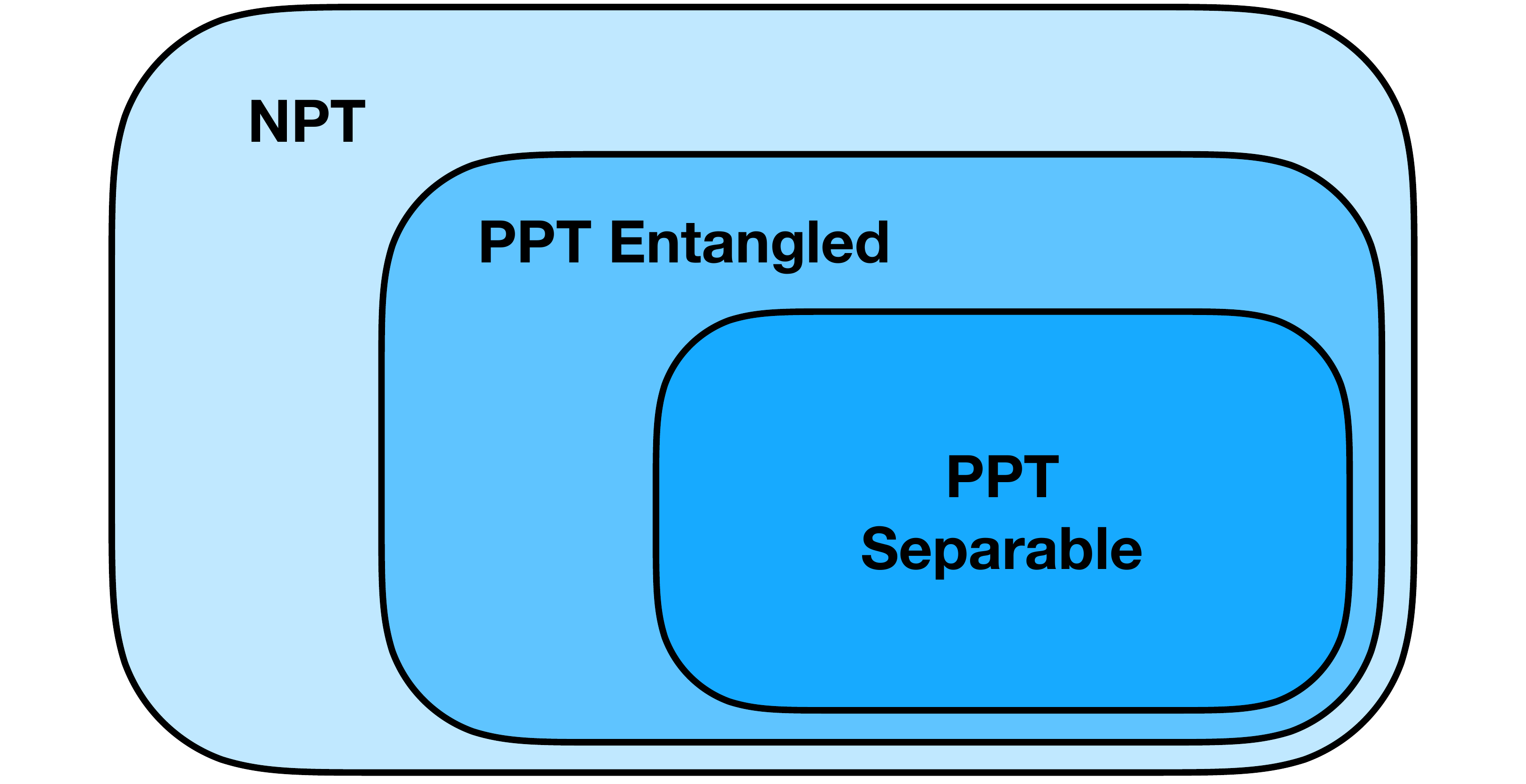}
\caption{Relations among separable states, states with positive partial transposition (PPT) and states with negative partial transposition (NPT). }
\label{fig:PPT}
\end{figure}

There exist other operational ways of detecting entanglement, even though rarely any methods can conclusively detect both separability and entanglement (i.e. a strict binary classification of separable and entangled states), such as matrix realignment\cite{chen2002matrix} and covariance matrix\cite{guhne2007covariance, gittsovich2008unifying}.

These aforementioned separability/entanglement criterion are called \textit{operational}, because they can be implemented independently of the quantum state of interest. There are, however, \textit{non-operational} way of determining the separability of a given quantum state. Entanglement witness, for example, falls under this category. For any entangled state $\rho$, the convexity of $\mathcal{D}_\text{sep}$ allows the existence of a separating hyperplane in the space of density matrices between $\rho$ and $\mathcal{D}_\text{sep}$, mathematically representing a Hermitian operator $W_\rho$ that satisfies\cite{horodecki1996}
\begin{equation}
\Tr[\rho W_\rho] < 0, \quad \mathrm{and\:} \Tr[\sigma W_\rho] \geq 0, \:\forall \sigma \in \mathcal{D}_\text{sep}.
\end{equation}This Hermitian operator $W_\rho$ is therefore called an entanglement witness, as entangled states (not all) are detected by negative expectation values. A prime example of entanglement witness is perhaps the earliest attempt at entanglement detection, namely the Bell inequalities\cite{bell1964einstein,clauser1969proposed}. As Bell inequalities are not violated by any separable states, they are regarded as non-optimal entanglement witnesses\cite{hyllus2005relations}. Although a general entanglement witness can detect more than one entangled states, the analytic form of an effective witness $W_\rho$ for an entangled state $\rho$ has to be determined in a case by case basis. 

Giving a full review over the subject of separability/entanglement criteria would be far beyond the scope of this thesis (for review papers please see Ref.~\cite{plenio2014introduction,bruss2002characterizing,Horo09,terhal2002detecting}). We make one additional remark that the separability problem is actually NP-hard\cite{gurvits2003classical}. This is precisely why separability/entanglement criteria that can be easily or quickly implemented often sacrifice some level of competency. Doherty \textit{et al.}\cite{doherty2002distinguishing} proposed a remarkable hierarchy of separability criteria using the existence of symmetric extension for separable states, that can detect any entangled states after finitely many steps (the number of which depends on the state). If the state is separable however, one can only confirm its separability after infinitely amount of time.   

\section{Entanglement Measures} \label{sec:ent_meas}

In the last section we discussed several methods of entanglement detection, which can actually be seen as a special case of the subject of this section: entanglement measures. As opposed to an entanglement detection method which can be summarised as a function on the space of density matrices with binary outputs, $0$ for separable states and $1$ for entangled ones, entanglement measure goes one step beyond, and assigns different positive values to entangled states.

The initial motivation for entanglement measure was closely linked to a few of the earliest quantum information protocols. In the early 90's researchers found that a Bell state describing two distinguishable spin-$\frac{1}{2}$ particles
\begin{equation}
|\Psi_-\rangle = \frac{1}{\sqrt{2}} (|\!\uparrow\rangle \otimes |\!\downarrow\rangle - |\!\downarrow\rangle \otimes |\!\uparrow\rangle)
\end{equation}
can assist in performing novel information processing tasks involving two distant parties otherwise impossible under the constraint of local operations and classical communication (LOCC), such as teleporting an unknown quantum state of another spin-$\frac{1}{2}$ particle\cite{bennett1993teleporting} and communicating two bits of information by sending through only one spin-$\frac{1}{2}$ particle\cite{bennett1992communication,bennett1996purification}. The entanglement in the state $|\Psi_-\rangle$ is a non-local resource that allows one to overcome the constraint of LOCC, of which the precise quantification is highly instructive.

Formally, an entanglement measure is a function from the space of quantum states to the set of non-negative real numbers $E: \mathcal{D} \rightarrow [0,\infty)$. Additionally, $E$ has to fulfill the following conditions\cite{vedral1997quantifying,plenio2014introduction}:
\begin{enumerate}
\item  \textit{$E(\rho) = 0$ if and only if $\rho$ is separable.} Separable states are the ones that can be generated using LOCC only, and therefore contain no entanglement resource. Sometimes this condition is relaxed to the one that only requires $E(\rho) \geq 0$ for all entangled states $\rho$. An entanglement measure that satisfies the original condition is called \textit{faithful} as it reveals all entanglement.
\item  \textit{$E$ is invariant under local unitary transformations, i.e. $E(\rho) = E(U_A \otimes U_B \rho U_A \otimes U_B)$. }This correspond to a change of local basis which does not affect the results of any local measurements, and leaves the entanglement unchanged.
\item  \textit{$E$ does not increase under LOCC.}  Local operations and classical communication cannot turn separable states into entangled ones, nor can they increase the entanglement resource in a quantum state. For this $E$ is also called an entanglement monotone.
\end{enumerate}

For bipartite pure states, the unique\cite{popescu1997thermodynamics} entanglement measure is the well known entanglement entropy, which is von Neumann entropy on the reduced states
\begin{equation}
E(|\Psi\rangle\langle\Psi|) = S(\rho_{A/B}), \quad \text{where } \rho_{A/B} = \Tr_{B/A} [|\Psi\rangle\langle\Psi|], \label{eqn:vonNeumann}
\end{equation}
and $S(\sigma) \equiv -\Tr[\sigma \log(\sigma)]$. Quantifying the entanglement in a bipartite pure state is equivalent to quantifying the mixedness of its reduced states. In this case choosing subsystem $A$ or $B$ will yield the same result as the respective reduced states are isospectral due to the existence of Schmidt decomposition for bipartite pure states\cite{nielsen2002quantum}. One can easily check that this measure for pure states fulfill all the conditions above. For mixed states, however, \ref{eqn:vonNeumann} is not an appropriate entanglement measure. For one, condition 1 is violated as $E(\rho_A\otimes \rho_B) = S(\rho_A) > 0$ when $\rho_A$
is mixed, even though $\rho_A \otimes \rho_B$ is a product state and contains no correlation at all. 

Quantifying entanglement for mixed states is in general difficult. But before we introduce a suitable general entanglement measure, let us first take a detour and talk about quantifying total correlation. Recall that a state is uncorrelated if it is a product state $\rho_{AB} = \rho_A \otimes \rho_B$. The amount of information contained in the total system but not yet in the two subsystems is the total correlation, quantified by the mutual information\cite{Lindblad73,vedral1997quantifying} defined as
\begin{equation}
I(\rho) = S(\rho_A) + S(\rho_B) - S(\rho). \label{eqn:MI}
\end{equation}Interestingly, the mutual information has a pleasing geometric interpretation. Namely, it coincides with the minimum distance from a quantum state $\rho_{AB}$ to any uncorrelated state measured by the quantum relative entropy. Moreover, the closest uncorrelated state to $\rho_{AB}$ is none other than the tensor product of the two reduced states $\rho_A \otimes \rho_B$. To summarised these points,
\begin{equation}
I(\rho) = \min_{\sigma \in \mathcal{D}_0} S(\rho_{AB}||\sigma) = S(\rho || \rho_A \otimes \rho_B),
\end{equation}
where $S(\cdot||\cdot)$ is the quantum relative entropy defined as
\begin{equation}
S(\rho||\sigma) = \Tr[\rho(\log(\rho)-\log(\sigma))].
\end{equation}
To see this, we derive the following inequality starting from $S(\rho||\sigma_A\otimes \sigma_B)$ where $\sigma_A$ and $\sigma_B$ are general states on the local subsystems.
\begin{equation}\label{eqn:MIproof}
\begin{split}
S(\rho||\sigma_A \otimes \sigma_B) =& \Tr[\rho\log(\rho)] - \Tr[\rho\log(\sigma_A \otimes \sigma_B)]
\\
=& \Tr[\rho\log(\rho)] - \Tr[\rho\log((\sigma_A \otimes \mathbbm{1})(\mathbbm{1}\otimes \sigma_B))] 
 \\
=& \Tr[\rho\log(\rho)] - \Tr[\rho_A \log(\sigma_A)] - \Tr[\rho_B \log(\sigma_B)] 
 \\
=& S(\rho||\rho_A\otimes \rho_B) + S(\rho_A||\sigma_A) + S(\rho_B || \sigma_B) 
\\
\geq &S(\rho||\rho_A\otimes \rho_B) = I(\rho).
\end{split}
\end{equation}Here we used the non-negativity of the relative entropy. \eqref{eqn:MIproof} becomes an equality when $\sigma_{A}=\rho_A$ and $\sigma_B = \rho_B$. Inspired by this geometric property of mutual information as a measure for total correlation, we can quantify entanglement in a similar manner, by defining it as the minimum distance from a quantum state $\rho$ to the set of separable states $\mathcal{D}_\text{sep}$ measured by the relative entropy\cite{vedral1997quantifying},
\begin{equation}
E(\rho) = \min_{\sigma \in \mathcal{D}_\text{sep}} S(\rho||\sigma). \label{eqn:rel_ent}
\end{equation}This quantity is therefore called the relative entropy of entanglement.  This entanglement measure has several remarkable properties. First of all, all conditions for a proper entanglement measures are satisfied. The relative entropy of entanglement is faithful, invariant under local unitary transformation and non-increasing under LOCC\cite{vedral1998entanglement}. Secondly, for pure states it reduces to the von Neumann entropy of the reduced states\cite{vedral1998entanglement}, just like in \eqref{eqn:vonNeumann}. Thirdly, \eqref{eqn:rel_ent} puts entanglement and total correlation on an equal footing, allowing us to identify the entanglement as ``a part of'' the total correlation (see Figure \ref{fig:states}). As $\mathcal{D}_\text{sep} \subseteq \mathcal{D}_0$, the entanglement in a state $\rho$ can never exceed its mutual information
\begin{equation}
E(\rho) = \min_{\rho \in \mathcal{D}_\text{sep}} S(\rho||\sigma) \leq \min_{\sigma \in \mathcal{D}_0} S(\rho||\sigma) = I(\rho).
\end{equation} This unifying perspective on correlation and entanglement is our major reason for picking \eqref{eqn:rel_ent} as our first choice for an entanglement measure. 

This choice is of course not a unique one. There exist different kinds of entanglement measure that are well suited for different purposes. For example, using the convex rule construction, the entanglement of formation\cite{bennett1996mixed} is defined as
\begin{equation}
E_F (\rho) = \min_{\{p_i, |\Psi_i\rangle\}} p_i E(|\Psi_i\rangle \langle \Psi_i|)
\end{equation}where the minimum is taken over all pure state decompositions of $\rho$, and the entanglement of the pure states are measured by the entanglement entropy \eqref{eqn:vonNeumann}. $E_F$ is closely related to the entanglement cost\cite{bennett1996mixed}
\begin{equation}
E_C(\rho) = \inf_\Phi \{ r \, | \, \lim_{n \rightarrow \infty} \Tr[\rho^{\otimes n} - \Phi((|\Psi_-\rangle\langle\Psi_-|)^{\otimes rn})] \},
\end{equation}
where $\Phi$ represents an arbitrary LOCC operation. This quantity $E_c$ is the minimum rate $r$ of converting $nr$ copies of Bell states into $n$ copies of $\rho$ through LOCC. It tells us how expensive it is in terms of the currency of Bell states to create an entangled state $\rho$ via LOCC. Even though it remains an open question, it is strongly believed that $E_F$ is equal to $E_C$, which would significantly simplify the task of calculating the entanglement cost. In fact, it is already proven that at least in the asymptotic limit the equality holds\cite{hayden2001asymptotic}, namely
\begin{equation}
E_F^\infty(\rho) \equiv \lim_{n \rightarrow \infty} \frac{E_F(\rho^{\otimes n})}{n} = E_C. 
\end{equation}For this reason, the calculation of entanglement measure has always been of great interest. However, so far $E_F$ can only be analytically calculated for a general mixed state $\rho$ in a two-qubit system, by relating $E_F$ to Wootter's concurrence\cite{wootters1998entanglement}.

Another example goes in the opposite direction, and asks how many Bell pairs can one extract from an many copies of entangled state $\rho$. This quantity is called the entanglement of distillation\cite{bennett1996concentrating,rains1999rigorous}
\begin{equation}
E_D = \inf_\Phi \{ r \, | \, \lim_{n \rightarrow \infty} \Tr[\Phi(\rho^{\otimes n}) - (|\Psi_-\rangle\langle\Psi_-|)^{\otimes rn}] \},
\end{equation}where the LOCC is applied to $n$ copies of $\rho$. $E_D$ is a quantity of great relevance, especially in realistic experimental setups. Suppose Alice and Bob wish to perform a quantum teleportation protocol. For this they need to share between them beforehand several copies ($n$) of Bell pairs $|\psi_-\rangle$, depending on the size of information. In real life, however, decoherence may turn these $n$ copies of $|\psi_\rangle$ into $n$ copies a mixed state $\rho$ even before the distribution is finished, due to interaction of the qubit with the environment. What Alice and Bob must do is turn these $n$ copies of $\rho$ into $m$ copies of ``concentrated'' entangled states, namely the Bell state $|\Psi_-\rangle$, by performing local operations on their qubits and send each other classical information. This then constitutes an entanglement distillation protocol. The ratio $r = \lim_{n\rightarrow \infty} m / n$ is precisely entanglement of distillation $E_D$. On the other hand, the entanglement of distillation is also a challenging quantity to compute. There is no known analytic formula of $E_D$ for general mixed states.

Entanglement of distillation $E_D$, and its mirroring quantity $E_C$ satisfy $E_D \leq E_C$ for all quantum states\cite{horodecki2000limits}. This striking result has a rather intuitive interpretation. Much like energy, the process of distilling entanglement (using energy to do work) is ``dissipative''. In fact, this is best examplified by the existence of bound entangled states\cite{horodecki1998mixed}. Bound entangled states are inseparable states that cannot be distilled into any copies of Bell states, yet still having finite entanglement cost\cite{vidal2001irreversibility}. Furthermore, $E_C$ and $E_D$ serve as the upper and lower bounds , respectively, for all normalised (scaled to maximum of $1$) entanglement measures that satisfy convexity, in addition to the conditions mentioned above\footnote{For asymptotic quantities additional conditions apply\cite{horodecki2000limits}.}\cite{horodecki2000limits}. For example, if $\mathcal{H}_A \otimes \mathcal{H}_B \cong \mathbb{C}^2 \otimes \mathbb{C}^2$, and we define the relative entropy with $\log_2$ instead of the natural logarithm for normalisation, the relative entropy of entanglement $E$ in \eqref{eqn:rel_ent} is bounded by
\begin{eqnarray}
E_D(\rho) \leq E(\rho) \leq E_C(\rho)
\end{eqnarray}
for all states $\rho$.

The last example we would like to mention is the logarithmic negativity
\begin{equation}
N(\rho) = \log(||\rho^{T_B}||), \label{eqn:negativity}
\end{equation}which quantifies the extend to which $\rho$ violates the Peres-Horodecki separability criterion\cite{peres1996separability,horodecki1997separability}. If the partial transposition $\rho^{T_B}$ has negative eigenvalues, then the sum of the absolute values of its eigenvalues would deviate from $1$. However, \eqref{eqn:negativity} is not faithful, in the sense that entangled state with positive partial transposition would be deemed unentangled by this measure. Moreover, unlike the relative entropy of entanglement $E$ and entanglement of formation $E_F$, \eqref{eqn:negativity} does not reduce to the von Neumann entropy \eqref{eqn:vonNeumann} for pure states. It is also not linked to any operational meaning. That being said, the logarithmic negativity distinguishes itself from other entanglement measures for its easy implementation.

\section{Quantum vs Classical Correlation} \label{sec:quantvsclass}

After establishing the relative entropy of entanglement as our first choice of entanglement measure, and thus putting entanglement and correlation at equal footing, a natural question to ask is whether one can separate it from the total correlation and identify the classical part of correlation at the same time. For a very long time, entanglement was thought to be responsible for all the correlation of quantum mechanical origin. But the discovery of quantum discord reveals that quantum correlation can still be present even if a state is separable\cite{ollivier2001quantum}. But due to the information processing value of entanglement, it remains important to compare the amount of entanglement and classical correlation in the system. In this section we outline two schemes of separating the total correlation into quantum and classical parts, the first one being the division into entanglement and classical correlation, and the second scheme referring to the concept of quantum discord.

Recall that in the geometric picture in Figure \ref{fig:states}, the total correlation and entanglement in a state $\rho$ is defined as the distances to its closest uncorrelated state $\rho_A \otimes \rho_B$ and its closest separable state $\sigma^\ast$, respectively. The geometric classical correlation can be defined as the remaining distance from $\sigma^\ast$ to $\rho_A\otimes \rho_B$, measured by the quantum relative entropy\cite{henderson2001classical} (also see Figure \ref{fig:states})
\begin{equation}
    C(\rho) = S(\sigma^\ast || \rho_A \otimes \rho_B). \label{eqn:class_geo}
\end{equation}
For a Bell state $|\Psi_-\rangle = (|00\rangle - |11\rangle)/\sqrt{2}$, the closest uncorrelated state is the maximally mixed state $\sum_{i,j=0}^1|ij\rangle\langle ij|/4$, and the closest separable state can be calculated as the mixture $\sigma^\ast = |00\rangle\langle00|/2 + |11\rangle\langle11|/2$. From these states we calculate the total correlation, relative entropy of entanglement and classical correlation to be $2\log(2)$, $\log(2)$ and $\log(2)$, respectively. In this case the relation $I(\rho) = C(\rho) + E(\rho)$ is fulfilled, although entanglement and classical correlation defined this way do not sum up to be the total correlation for general mixed states. One of the reasons for this is that for pure states the notion of entanglement and quantum correlation coincide, whereas for mixed states they are distinct concepts. 

Although entanglement is known for being responsible for violating Bell inequalities and enhancing performance in information tasks beyond classical capability, there has been reports of quantum nonlocality beyond entanglement\cite{bennett1999quantum}. This nonlocality can even exists in separable states and be harnessed for quantum speed-up\cite{lanyon2008experimental}. 

This brings us to our second way of dividing the total correlation, namely into quantum discord $Q$ and classical correlation $C'$ (the prime is to distinguish the following measure from the classical correlation defined in \eqref{eqn:class_geo} associated with the geometric picture in Figure \ref{fig:states}). The original definition of quantum discord, first formulated by Olliver and Zurek\cite{ollivier2001quantum}, is defined as the difference
\begin{equation}
    D^{(A)}(\rho) = I(\rho)-C'_\text{op}{}^{(A)}(\rho) \label{eqn:disc_1}
\end{equation}
between the total correlation and the operational classical correlation\cite{henderson2001classical} $C'_\text{op}$ defined as\footnote{In the original paper by Olliver and Zurek the quantity $C'_\text{op}$ was implicitly defined as the maximum taken over all \textit{projective measurements} on the subsystem\cite{ollivier2001quantum}. The maximum was later generalised to be taken over all POVMs on the subsystem by Henderson and Vedral\cite{henderson2001classical}.}
\begin{equation}
    C'_\text{op}{}^{(A)}(\rho) = \max_{B_i^\dagger B_i} S(\rho_A) - p_i S(\rho_A^{(i)}), \label{eqn:henderson}
\end{equation}where $\rho_A$ is the reduced state on subsystem $A$ and
\begin{eqnarray}
p_i = \Tr[B^\dagger_i \rho B_i], \quad \rho_A^{(i)} = \frac{1}{p_i} \Tr_B[\rho_A^{(i)}].
\end{eqnarray}The maximum is taken over all possible POVMs acting on subsystem $B$. It is possible to define $C'_\text{op}{}^{(B)}$ by swapping the two subsystems. To understand why \eqref{eqn:disc_1} quantifies the amount of quantum correlation in a generic quantum state $\rho$, we must first discuss the meaning of its classical counterpart \eqref{eqn:henderson}. As pointed out by Olliver and Zurek in their seminal paper\cite{ollivier2001quantum}, classical information can be obtained locally from a quantum state without disturbing it. If a state contains only classical correlation between the two parties, then one can perform a measurement on only one of the subsystems without altering the state as a whole. This is most certainly not true for arbitrary quantum state, most drastically examplified by the Bell states, e.g. $|\Psi_+\rangle$. A single measurement on the first qubit in the reference basis immediately causes the total state to collapsed into a product state with the same bit values. As it turns out, even some separable states can exhibit this nonlocal effect. For example, the following separable state\cite{datta2008quantum}
\begin{equation}
    \rho = \frac{1}{4}(|\psi_+\rangle\langle \psi_+| \otimes |0\rangle\langle 0| + |\psi_-\rangle \langle \psi_-| + |0\rangle\langle 0| \otimes |\psi_+\rangle\langle \psi_+| + |1\rangle\langle 1| \otimes |\psi_-\rangle\langle \psi_-|)
\end{equation}
is altered by any local measurement (projective or POVM). In fact, in this spirit, the only type of state with non-zero discord ($D^{(A)}(\rho) = D^{(B)}(\rho) = 0$) are those with the following property
\begin{equation}
    \sum_k P_k^{(A)}\otimes \mathbbm{1} \, \rho \, P_k^{(A)}\otimes \mathbbm{1} = \sum_k \mathbbm{1} \otimes P_k^{(B)} \, \rho \, \mathbbm{1} \otimes P_k^{(B)} = \rho \label{eqn:disc_cond}
\end{equation}
for some projective measurements $\{P_k^{(A/B)}\}$ on the local subsystems $A/B$\cite{dakic2010necessary}.

In general $C'_\text{op}{}^{(A)}$ and $C'_\text{op}{}^{(B)}$, however, do not coincide, when $\rho_A \neq \rho_B$. Due to this asymmetry, the quantum discord \eqref{eqn:disc_1} is also asymmetric when the two subsystems are swapped. This is somewhat undesirable, despite the operationally meaningful construction. Moreover, the maximisation over all local POVMs is also difficult to implement in general. \eqref{eqn:disc_cond} provides a strict criterion for detecting non-zero discord, but in terms of the measure \eqref{eqn:disc_1} closed formula is only available for two-qubit systems\cite{dakic2010necessary}.

\begin{algorithm}[t] \label{alg:disc_1}
\SetAlgoLined
\KwResult{$D(\rho)$ and the closest classical state $\sigma_\textrm{cl}^\ast$ to $\rho$}
Initialise $\min$;
\\
$n = 0$;
\\
\While{$n < N$}{
\:\: Generate two set of orthonormalized states $\{|a\rangle\}$ and $\{|b\rangle\}$;
\\
\:\: Construct candidate $\sigma_\text{cl} = \sum_{a,b} \mu_{ab} |a\rangle\langle a| \otimes |b\rangle\langle b|$;
\\
\:\: Evaluate $D = S(\rho||\sigma_\text{cl})$;
\\
\:\: Update $\min = D$ if $D < \min$.
\\
\:\: $n = n + 1$;
 }
 Output $D(\rho) = \min$.
 \caption{Calculating discord (direct search)}
\end{algorithm}

For this reason, another closely related quantity was proposed as an alternative definition of quantum discord, which relies on the concepts of classical states, inspired by the condition \eqref{eqn:disc_cond}, of the form\cite{modi2010unified}
\begin{equation}
    \sigma_\text{cl} = \sum_{a,b} \mu_{ab} |a\rangle\langle a| \otimes |b\rangle\langle b| \label{eqn:closest_cl}
\end{equation}
where $\{|a\rangle\}$ and $\{|b\rangle\}$ are local orthonormal bases on the subsystems. In other words, the eigenstates of a classical state are mutually orthogonal and locally distinguishable product states\cite{horodecki2003local}. Such states no nonlocal properties. The set of classical states is denoted as $\mathcal{D}_\text{cl}$. The quantum discord is defined as the minimum distance to the set of classical states
\begin{equation}
    Q(\rho) = \min_{\sigma \in \mathcal{D}_\text{cl}} S(\rho||\sigma). \label{eqn:disc}
\end{equation} The set of classical states $\mathcal{D}_\text{cl}$ is clearly a proper subset of $\mathcal{D}_\text{sep}$. Therefore the quantum discord in a state $\rho$ is no smaller than its entanglement. Although it also enjoys the property of local unitary invariance, $\mathcal{D}_\text{cl}$ lacks convexity, in contrast to $\mathcal{D}_\text{sep}$. The accompanying classical correlation $C'$ is then defined as the mutual information of the closest classical state $\sigma_\text{cl}^\ast$
\begin{equation}
C'(\rho) = I(\sigma_\text{cl}^\ast).
\end{equation}
It was shown in Ref.\cite{modi2010unified} that the eigenvalues for the closest classical state $\sigma_\text{cl}^\ast$ to $\rho$ are given by
\begin{equation}
    \mu_{ab} = (\langle a |\otimes \langle b|) \rho (|a\rangle \otimes |b\rangle). \label{eqn:disc_eig}
\end{equation}
\eqref{eqn:disc_eig} effectively transformed the task of finding the closest classical state into the task of finding the optimal local bases. This computational feasibility is also why we prefer this particular definition of quantum discord. Here we propose two efficient ways of calculating \eqref{eqn:disc} for general mixed states of arbitrary dimensions in Algorithm \ref{alg:disc_1} and \ref{alg:disc_2}.

Algorithm \ref{alg:disc_1} search for the closest classical state by brute force, by sampling all possible combinations of local eigenstates $\{|a/b\rangle\}$. The eigenvalues of the candidates of the closest classical state is given by \eqref{eqn:disc_eig}. Algorithm \ref{alg:disc_1} is easy to implement and sufficiently fast in the case of two-qubit systems, and accurate with only 100 iterations. When we extend to higher dimensional subsystems however, this method becomes expensive. We can improve Algorithm \ref{alg:disc_1} by utilising the fact that a set of orthonormal basis states are the column vectors of a unitary matrix $U$. It would then suffice to find the optimal unitary matrix. With this in mind, we would like to find a faster path from a random element the group of unitary matrices, to the optimal one, $U$. Inspired by the Markov Chain Monte Carlo (MCMC) methods, we propose the following random walk solution.

\begin{algorithm}[t] \label{alg:disc_2}
\SetAlgoLined
\KwResult{$D(\rho)$ and the closest classical state $\sigma_\textrm{cl}^\ast$ to $\rho$}
Initialise $U$ as random unitary, construct $\sigma_{\textrm{cl}}^{(0)}$;
\\
$D^{(0)}(\rho) = S(\rho||\sigma_{\textrm{cl}}^{(0)})$;
\\
$n = 0$;
\\
\While{$n < N$}{
\:\: $n = n + 1$;
\\
\:\: Generate ramdon Hermitian matrix $H$;
\\
\:\: $U' = \exp( i \eta H) U$;
\\
\:\: Take columns of $U'$ as $\{|a/b\rangle\}$;
\\
\:\: Construct candidate $\sigma_\text{cl}^{(n)} =\sum_{a,b} \mu_{ab} |a\rangle\langle a| \otimes |b\rangle\langle b|$;
\\
\:\: Evaluate $D^{(n)} = S(\rho||\sigma_\text{cl}^{(n)})$;
\\
\:\: Randomise uniformly $p \in [0,1]$;
\\
\:\: \textbf{if} $p < \exp(-\beta(D^{(1)} - D^{(0)}) )$
\\
\quad \quad $U = U'$;
\\
\:\: \textbf{else}
\\
\quad \quad $D^{(n)} = D^{(n-1)}$;
\\
\:\: \textbf{end};}
Output $D = D^{(N)}$, $\sigma_\textrm{cl}^\ast = \sigma_\textrm{cl}^{(N)}$.
\caption{Calculating discord (MCMC)}
\end{algorithm}

We first initiate $U$ as a random unitary, from which we can calculate $D^{(0)}(\rho) = S(\rho||\sigma_{\textrm{cl}}^{(0)})$ where $\sigma_{\textrm{cl}}^{(0)}$ is constructed using Eq.\eqref{eqn:closest_cl} and the column vectors of $U$ as the proposed local eigenstates of $\sigma^{(0)}_\text{cl}$. We randomly generate a ``small'' unitary matrix $V = \exp ( i \eta H )$ where $H$ is a random Hermitian matrix and $\eta$ a small hyper-parameter representing a step size. We then construct a new unitary $U' = VU$ and calculate $D^{(1)} = S(\rho||\sigma_{\textrm{cl}}^{(1)})$ using the column vectors of $U'$ as eigenstates of the new classical state $\sigma_\text{cl}^{(1)}$. $U$ is updated as $U = U'$ if $D^{(1)} < D^{(0)}$, or with the probability given by $p = \exp(-\beta(D^{(1)} - D^{(0)})$ if $D^{(1)} \geq D^{(0)}$, where $\beta > 0$ is an adjustable hyper-parameter. The algorithm then repeats for sufficient ($N$) steps and converges to a closest classical state $\rho_\text{cl}^\ast$. This random walk favours lower values of the discord $D$ at each step while maintaining some level of stochasticity, and hence it is much more efficient than a random search as described in Algorithm \ref{alg:disc_1}. We summarise this approach in Algorithm \ref{alg:disc_2}.

\begin{figure}[h]
    \centering
    \includegraphics[scale=0.4]{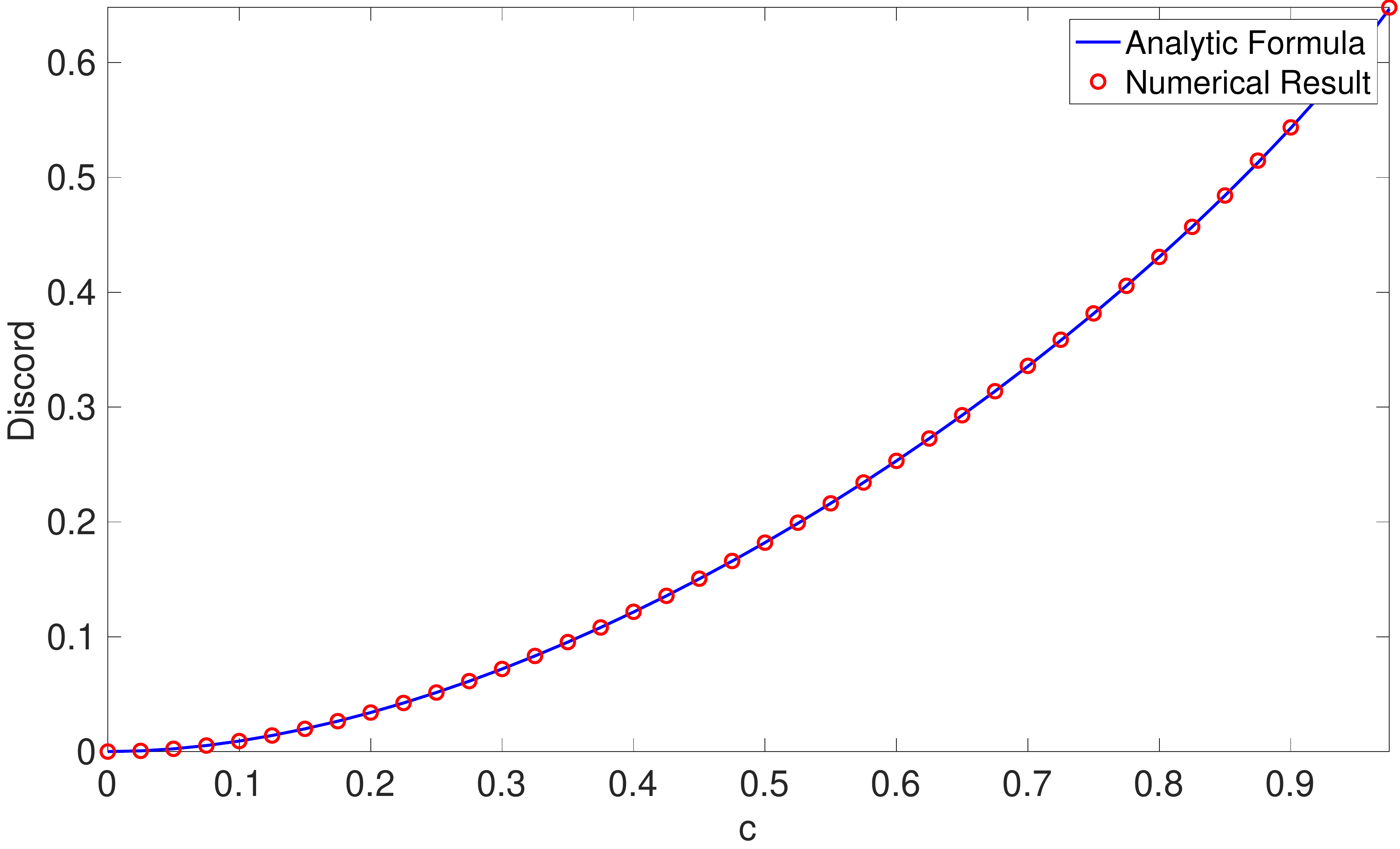}
    \caption{Quantum discord of $\rho(c)$ as a function of $c$, calculated analytically according to Eq.~\eqref{eqn:luo} (blue curve), and numerically with Algorithm \ref{alg:disc_2} (red circles).}
    \label{fig:dicord_c}
\end{figure}

For demonstration, we calculated the discord of the following family of states
\begin{eqnarray}
\rho(c) = (1-c) \frac{\mathbbm{1}}{4} + c |\Psi_-\rangle \langle\Psi_-|, \quad c \in [0,1]
\end{eqnarray}
analytically with the known discord formula for this particular family of two-qubit states in Ref.\cite{luo2008quantum}
\begin{equation}
    D(\rho(c)) = \frac{1-c}{4}\log(1-c) - \frac{1+c}{2}\log(1+c) + \frac{1+3c}{4}\log(1+3c), \label{eqn:luo}
\end{equation}
as well as numerically with the newly introduced Algorithm \ref{alg:disc_2}. As shown in Figure \ref{fig:dicord_c}, the two approaches match perfectly. Moreover, unlike the limited analytic formula in Ref.\cite{luo2008quantum}, our numerical method can be applied to arbitrary mixed states in arbitrary dimensions.

  \chapter{Concepts of Fermionic Correlation and Entanglement} \label{chap:fermion}

In Chapter \ref{chap:foundations} we discussed the notion of entanglement and correlation, and how to quantify them in an operationally meaningful way in systems consisting of distinguishable particles. In this chapter we will turn to systems consisting of identical particles, in particular fermions. We first elucidate its departure from the distinguishable case, and then introduce two paths of redefining the notion of entanglement, namely the notion of mode entanglement and particle entanglement. Partial results of this section has been published/archived on Refs.~\cite{ding2020correlation,ding2020concept,ding2022quantifying}.

\section{Challenges with Fermions}

The concepts of entanglement and correlation, as reviewed in Section \ref{sec:distinguishable_ent}, refer to a well-defined separation of the total system into two (or more) distinguishable subsystems. In the simplest case, this separation emerges naturally from the physical structure of the total system, namely by referring to a possible spatial separation of two subsystems. In that case, it will be also easier to experimentally access both subsystems to eventually extract and utilise the entanglement from their joint quantum state. Nonetheless, the notion of bipartite correlation and entanglement is by no means unique for a given system since one just needs to identify some tensor product structure in the total system's Hilbert space, $\mathcal{H}\equiv \mathcal{H}_A\otimes \mathcal{H}_B$. In the most general approach, one even defines subsystems by choosing two commuting subalgebras $\mathcal{A}_A, \mathcal{A}_B$ of observables\cite{zanardi2001virtual}. This also highlights the crucial fact that entanglement and correlation are relative concepts since they refer to a choice of subsystems/subalgebras of observables.

In case of identical fermions the identification of subsystems is not obvious at all. For instance, how could one decompose the underlying $N$-fermion Hilbert space $\mathcal{H}_N \equiv \wedge^N[\mathcal{H}_1]$ where $\mathcal{H}_1$ is the one-particle Hilbert space, or the Fock space $\mathcal{F}\equiv \oplus_{N\geq 0}\mathcal{H}_N$?
Actually, there are two natural routes to overcome these issues. The first one (presented in Section \ref{sec:mode}) refers naturally to the 2nd quantised formalism and leads to the notion of \emph{mode} (sometimes also called orbital or site) entanglement and correlation \cite{Friis13,Friis16,Sergey17}. The second route (discussed in Section \ref{sec:part}) is related more to first quantisation and defines correlation and entanglement in the \emph{particle} picture.

\section{Mode Picture\label{sec:mode}}

\subsection{Finding Tensor Product}

A natural tensor product structure emerges in the formalism of second quantisation, facilitating a bipartition on the set of spin-orbitals. To explain this, let us fix a reference basis for the one-particle Hilbert space $\mathcal{H}_1$. We then introduce the corresponding fermionic creation $f_i^{\dagger}$ and annihilation operators $f_j$, fulfilling the fermionic commutation relations,
\begin{equation}
    \{ f^{(\dagger)}_i, f^{(\dagger)}_j \} = 0, \quad \{f^\dagger_i, f_j \} = \delta_{ij}. \label{anticommute}
\end{equation}
In the quantum information community the one-particle reference states are often referred to as \textit{modes}, or (lattice) sites by condensed matter physicists. Each spin-orbital or generally mode $i$ can be either empty or occupied by a fermion. In this picture, the quantum states are naturally represented in the occupation number basis. The respective \emph{configuration states}
\begin{equation}\label{config}
    |n_1,n_2,\ldots , n_d \rangle = (f^\dagger_1)^{n_1}(f^\dagger_2)^{n_2} \cdots (f^\dagger_d)^{n_d} |0\rangle
\end{equation}
with $n_1,n_2\ldots,n_d \in \{0,1\}$ form a basis for the Fock space $\mathcal{F}(\mathcal{H}_1)$.
Bipartitions naturally arise as separations of the set of modes $\{1,2,\ldots,d\}$ into two, let's say the first $m$ and the last $d-m$, leading to
\begin{eqnarray}\label{slitconfig}
\lefteqn{|n_1, \ldots, n_m, n_{m+1},\ldots , n_d \rangle} && \nonumber \\
        &\mapsto &|n_1, n_2, \ldots n_{m} \rangle_A \otimes |n_{m+1}, n_{m+2} , \ldots ,n_{d}\rangle_B.
\end{eqnarray}
The total Fock space $\mathcal{F}(\mathcal{H}_1)$ admits then the tensor product structure
\begin{equation}\label{splitFock}
    \mathcal{F}_{AB}\equiv \mathcal{F}(\mathcal{H}_1) = \mathcal{F}(\mathcal{H}_1^{(A)}) \otimes \mathcal{F}(\mathcal{H}_1^{(B)})\equiv \mathcal{F}_A \otimes \mathcal{F}_B,
\end{equation}
where $\mathcal{H}_1^{(A/B)}$ denotes the one-particle Hilbert space spanned by the first $m$ and last $d-m$ modes, respectively. Actually, any splitting of the one-particle Hilbert space into two complementary subspaces, $\mathcal{H}_1 = \mathcal{H}_1^{(A)}\oplus \mathcal{H}_1^{(B)}$, induces a respective splitting \eqref{splitFock} on the Fock space level. Moreover, such a decomposition of the total Fock space into two factors allows us to introduce mode reduced density operators $\rho_{A/B}$ for the respective mode subsystem $A/B$. They are obtained by taking the partial trace of the total state $\rho$ with respect to the complementary factor $\mathcal{F}_{B/A}$. Consequently, $\rho_{A/B}$ is defined as an operator on the local space $\mathcal{F}_{A/B}$ and in general does not have a definite particle number anymore.

It seems that we can now readily apply the common quantum information theoretical formalism referring to distinguishable subsystems. Yet there is one crucial obstacle. Not every Hermitian operator acting on a fermionic Fock space is a physical observable, due to the so-called superselection rules.

\subsection{Superselection Rules} \label{sec:SSRIncorp}

A key ingredient in the physics of fermionic systems is the so-called  {\it parity superselection rule} (P-SSR). In its original form,  P-SSR forbids superpositions of even and odd fermion-numbers states. In a more modern version, P-SSR states that the operators belonging to physically measurable quantities must commute with the particle parity operator. This means they have to be linear combinations of even degree monomials of the creation and annihilation operators. This in turn implies that a superposition of two pure states with even and odd particle numbers cannot be distinguished from  an incoherent classical mixture of those states, thus one recovers the original formulation as a consequence.

The idea that the laws of nature impose P-SSR on fermionic systems was originally derived based on group theoretical arguments.\cite{SSR,wick1970superselection, wick1997intrinsic} However, the importance of P-SSR is also obvious from the fundamental fact that violation of P-SSR would lead to a contradiction to the no-signaling theorem, as we will explain in the following.  The no-signaling theorem states that two spatially separated parties cannot communicate  faster than the speed of light. To relate this to the P-SSR, let us assume that two distant parties Alice and Bob could violate P-SSR locally. That is they are able to locally superpose odd and even parity states. For our argument it is sufficient for Alice and Bob to have each access to one mode (e.g., an atomic spin-orbital).  Their local Fock spaces are thus generated by the fermionic annihilation and creation operators $(f_{A},f_{A}^{\dagger})$ and $(f_{B},f_{B}^{\dagger})$, respectively. Assume now that they share the state $|\psi\rangle_{AB} = \frac{1}{\sqrt{2}}(|0\rangle_{A}|0\rangle_{B} + |0\rangle_{A}|1\rangle_{B})$, which explicitly violates P-SSR. The protocol for Alice to communicate instantaneously one bit $b=0,1$ of classical information to Bob would be the following (see also Figure \ref{nosignalling}): before signaling both of them synchronize the clocks in their labs, and agree upon a signaling time. If Alice wants to communicate $1$, she does nothing (i.e., formally applies the unitary $U_1=\mathbbm{1}$), so $|\psi\rangle_{AB}$ remains unchanged; if she wishes to communicate $0$, Alice applies the unitary  $U_0=i(f^{\dagger}_{A} - f_{A}^{\phantom{\dagger}})$. The shared state then becomes
$|\psi^{\prime}\rangle_{AB} =\frac{i}{\sqrt{2}}(|1\rangle_{A}|0\rangle_{B} + |1\rangle_{A}|1\rangle_{B})$. At the same instance Bob measures the observable $\frac{1}{2}(f_{B} + f_{B}^{\dagger}+\mathbbm{1})$. One easily verifies that the outcome of Bob's measurement is deterministic and will be nothing else than the value of $b$, Alice's message. Hence, this protocol allows Alice to communicate instantaneously one bit ($b$) of information in contradiction to the no-signaling theorem and the laws of relativity.

\begin{figure}[ht]
\centering
\includegraphics[scale=0.4]{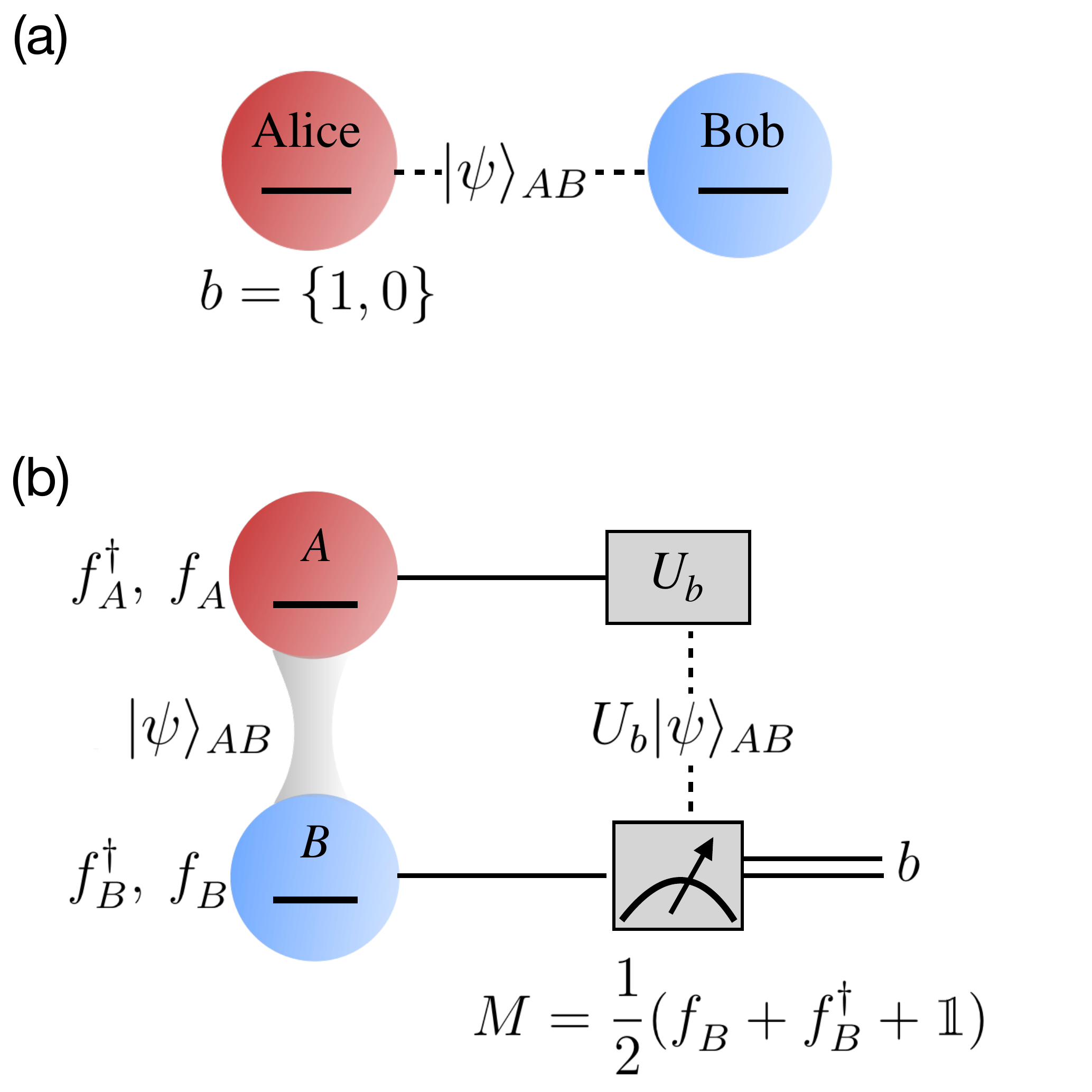}
\caption{(a) Two spatially separated parties Alice and Bob share a quantum state $|\psi\rangle_{AB}$. (b) The protocol showing how superluminal signaling is possible when parity superselection rule is broken: Alice communicates the bit value $b \in \{0,1 \}$ by applying the corresponding unitary $U_{b}$, Bob measures the observable $M$ and obtains instantaneously that bit value, as explained in the text.
}
\label{nosignalling}
\end{figure}

Beside the parity superselection rule, it is often pertinent to consider superselection rules due to some experimental limitations. One such rule is the fermion {\it particle number superselection rule} (N-SSR). Measurable quantities obeying N-SSR must commute with the particle parity operators.\cite{wick1970superselection} This, in the form of lepton number conservation, was once considered to be an exact symmetry of Nature. Recently, however, there have been indications that fundamental Majorana particles may exist which could lead to a violation of the N-SSR. Therefore N-SSR is now conservatively regarded exact only in the low energy regime, where a pair of electrons cannot be spontaneously created. That is to say, in the common settings in condensed matter physics and quantum chemistry, N-SSR is still of compelling relevance.

Having established the fundamental importance of superselection rules, we will now elucidate how they affect our description of quantum states, and consequently change the physically accessible correlation and entanglement in a quantum state. Accordingly, the SSRs will have important consequences for the realization of quantum information processing tasks (e.g., quantum teleportation\cite{debarba2020teleporting,olofsson2020quantum}).

To rephrase and summarise, SSRs are restrictions on local algebras of observables, resulting in physical algebras $\mathcal{A}_A$ and $\mathcal{A}_B$. If the SSR is related to some locally conserved quantity $Q_{A/B}$ (e.g. local parity, local particle number etc.), then local operators must also preserve this quantity. That is, all local observables satisfy
\begin{equation}
\mathcal{A}_{A/B} \ni O_{A/B} = \sum_q P_q O_{A/B} P_q,
\end{equation}
where $q$ ranges over all possible value of $Q_{A/B}$ and $P_q$'s are projectors onto the eigensubspaces, i.e. $O_{A/B}$ are block diagonal in any eigenbasis of $Q_{A/B}$.  It follows that different SSRs will lead to drastically different $\mathcal{A}_{A/B}$. The fact that we cannot physically implement every mathematical operator changes the accessibility of quantum states. The fully accessible states are called the physical states, and they satisfy
\begin{equation}
\rho = \sum_{q,q'} P_q \otimes P_{q'} \rho P_q \otimes P_{q'},
\end{equation}
or equivalently
\begin{equation}
[\rho, Q_A \otimes Q_B] = 0. \label{eqn:SSRcom}
\end{equation}
For a general state $\rho$ which does not satisfy \eqref{eqn:SSRcom}, we can obtain its physical part by the following projection
\begin{equation}
\rho^\textrm{Q} \equiv  \sum_{q,q'} P_q \otimes P_{q'} \rho P_q \otimes P_{q'}. \label{eqn:tilde}
\end{equation}

In Figure \ref{fig:sectors} we illustrate the process of obtaining the physical states under P-SSR and N-SSR from a family of two-orbital quantum states that commute with total particle number and magnetization, a common setting for quantum chemistry calculation which will also be featured in Section \ref{sec:qchem}. The projectors $P_q$'s simply take out coherent terms between sectors with different local parities or particle numbers.
\begin{figure}[!t]
\centering
\includegraphics[scale=0.40]{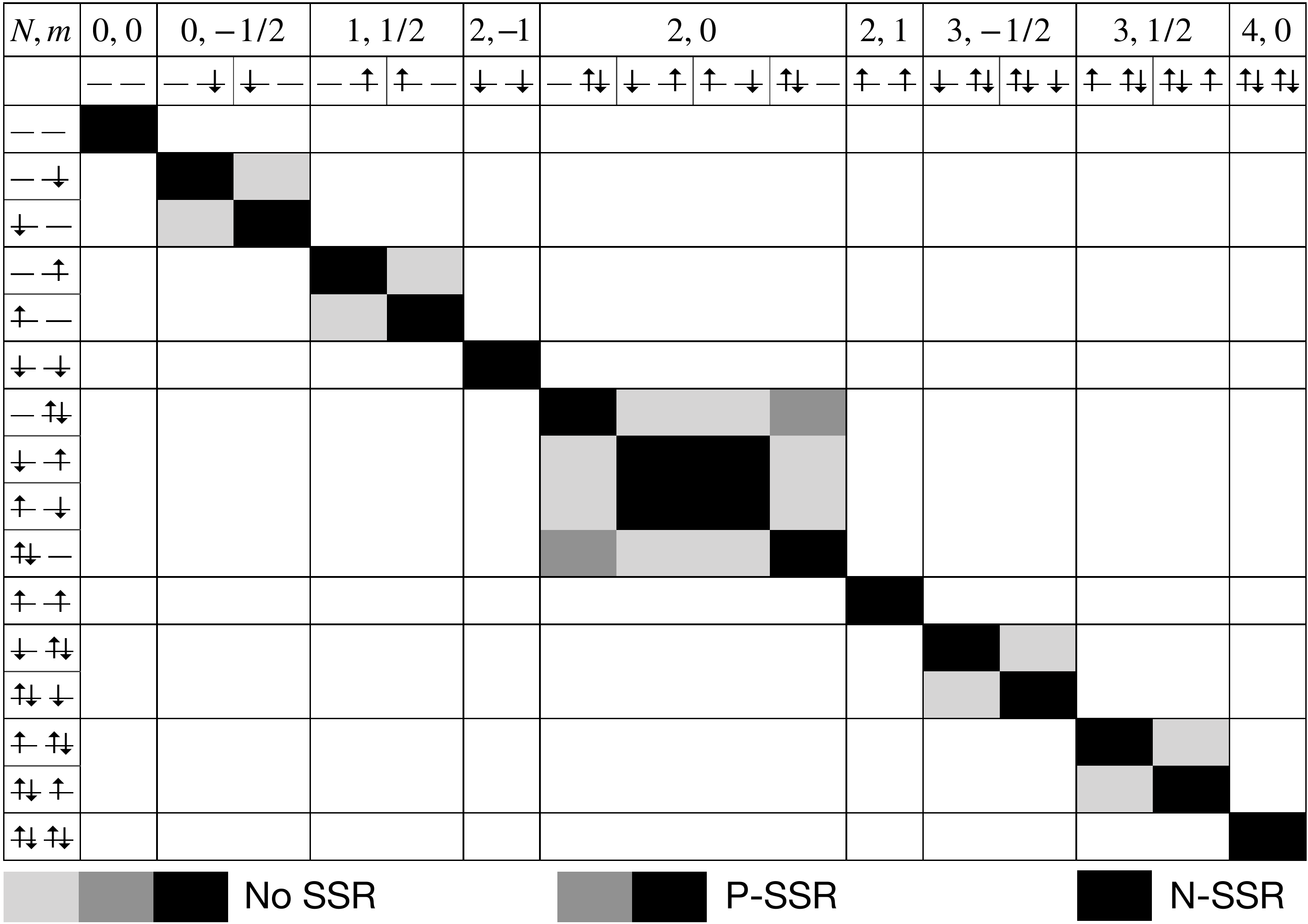}
\caption{Illustration of two-orbital reduced density matrix $\rho_{i,j}$. Most entries vanish as we assume spin and particle symmetry (white). According to \eqref{eqn:tilde} the P-SSR sets light gray entries to zero while N-SSR removes in addition two entries (gray).}
\label{fig:sectors}
\end{figure}

The physical state $\rho^\textrm{Q}$ gives the same expectation value as $\rho$ for all \textit{physical} observables. Therefore we can define a new class of uncorrelated states to be the ones with uncorrelated physical parts with respect to the physical algebra:
\begin{equation}
\begin{split}
\mathcal{D}_0^{\textrm{Q-SSR}} = \{ \rho \, | \,& \langle O_A \otimes O_B \rangle_{\rho} =   \langle O_A  \rangle_{\rho_A}  \langle O_B \rangle_{\rho_B},
\\
&\forall O_A \in \mathcal{A}_A, O_B \in \mathcal{A}_B \}.
\end{split}
\end{equation}
It is clear that the new set of uncorrelated states includes the one of the distinguishable setting, i.e. $\mathcal{D}_0 \subseteq \mathcal{D}^{\textrm{Q-SSR}}_0$. Consequently also more states are deemed separable. Relating to Figure \ref{fig:states}, both the correlation and entanglement measure become smaller in the presence of an SSR. There are two key messages here. First of all, correlation and entanglement are relative concepts. They depend not only on the particular division of the total system into two (or more) subsystems but also on the underlying SSRs, which eventually defines the physical local algebras of observables $\mathcal{A}_{A/B}$ and the global algebra $\mathcal{A}_A \otimes \mathcal{A}_B$. Secondly, by ignoring the fundamentally important SSRs, one may radically overestimate the amount of physical correlation and entanglement in a quantum state.

\begin{figure}[!t]
    \centering
    \includegraphics[scale=0.6]{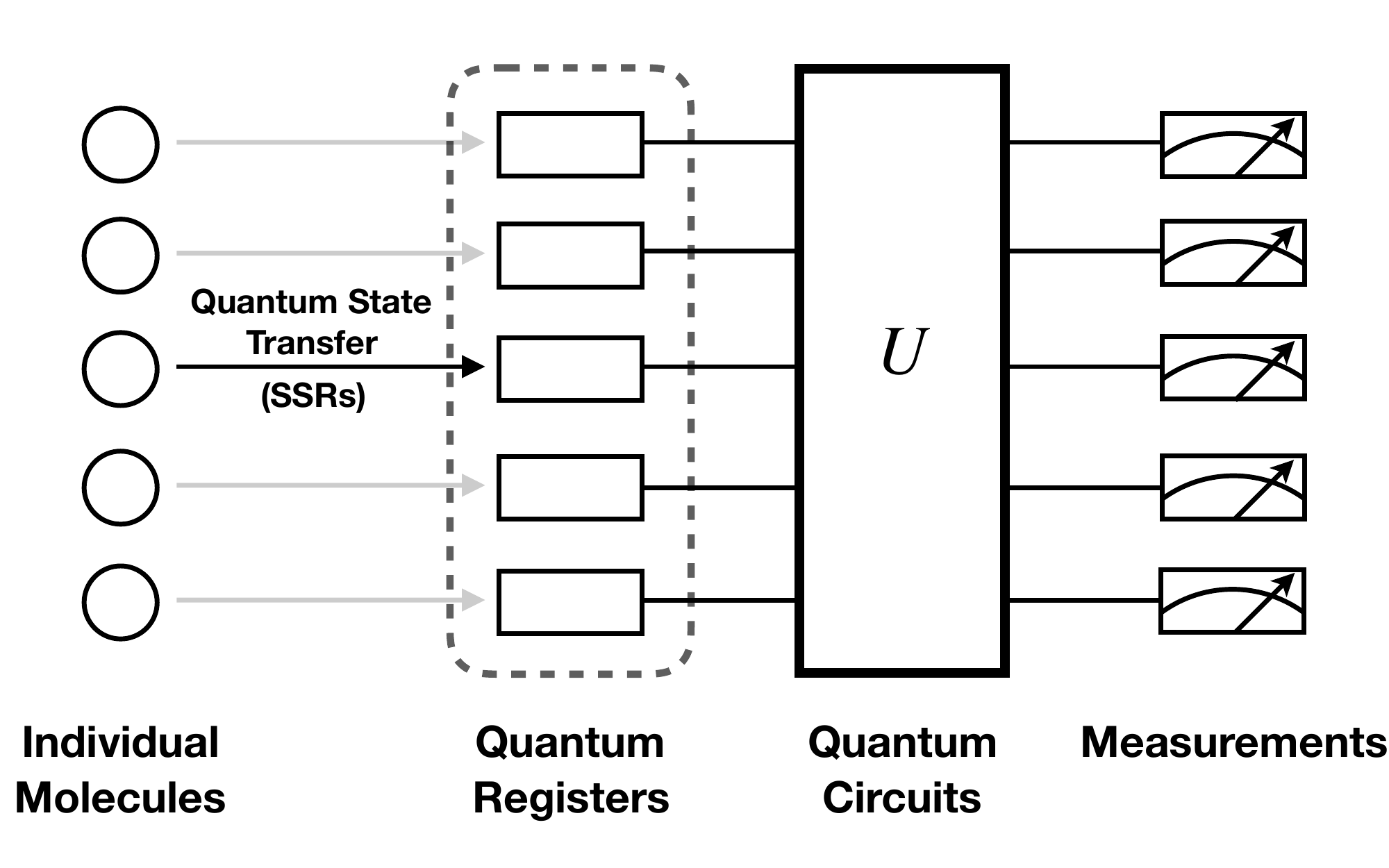}
    \caption{Schematic protocol for utilizing entanglement from molecular systems (see text for more details).}
    \label{fig:QCScheme}
\end{figure}

One of the biggest motivation for correctly identifying the amount of physical correlation and entanglement in a quantum state is its value for information processing tasks. An operationally meaningful quantification of entanglement does not only reveal non-local properties of a quantum state, but should also measure the amount of resource that can be extracted for performing various quantum information tasks mentioned in Chapter \ref{chap:intro}. In Figure \ref{fig:QCScheme} we illustrate the schematic protocol for utilizing entanglement from molecular systems. The quantum states of individual molecules are transferred to SSR-free quantum registers with Hilbert spaces of equal or higher dimensions, through local measurements and classical communication. A quantum circuit represented by a unitary gate $U$ in Figure \ref{fig:QCScheme} then acts on these quantum register states to perform computations. Finally, the end results of the computation are retrieved with carefully designed measurements. The key step that limits the extraction of entanglement is the transferring of the quantum state, which is constrained by the underlying  SSR\cite{bartlett2003entanglement}. What remains on the quantum registers after the transfer are the \emph{physical} parts defined in Eq.~\eqref{eqn:tilde}. From this perspective, the Q-SSR-constrained total correlation, entanglement and classical correlation of a single system in a state $\rho$ follow as
\begin{equation}
\begin{split}
I^{\textrm{Q-SSR}}(\rho) &= I(\rho^\textrm{Q}),
\\
E^{\textrm{Q-SSR}}(\rho) &= E(\rho^\textrm{Q}),
\\
C^{\textrm{Q-SSR}}(\rho) &= C(\rho^{\textrm{Q}}), \label{eqn:SSRmeasures}
\end{split}
\end{equation}
where $I$, $E$ and $C$ are the preferred measures for total correlation, entanglement and classical correlation.

\section{Particle Picture\label{sec:part}}

Besides partitioning modes in the second quantisation picture, the formalism of first quantisation seems to suggest another tensor product structure by exploiting the embedding
\begin{equation}
    \mathcal{H}_N \equiv  \wedge^N(\mathcal{H}_1) \leq \mathcal{H}_1^{\otimes N} ,
\end{equation}
of $N$-fermion Hilbert space into the one of $N$ distinguishable particles. An issue arises as the antisymmetry of $N$-fermion quantum states now erroneously would contribute to this particle correlation/entanglement. To see this, we write down a non-interacting two-fermion state by pseudo-labeling the particles
\begin{equation}
    |\psi\rangle = |\!\uparrow\rangle \wedge |\!\downarrow\rangle = \frac{|\!\uparrow\rangle_1 \otimes |\!\downarrow\rangle_2 - |\!\downarrow\rangle_1 \otimes |\!\uparrow\rangle_2}{\sqrt{2}}. \label{eqn:part_example}
\end{equation}
As the state is a single Slater determinant, it contains no correlation. However, when embedded into $\mathcal{H}^{\otimes 2}$, the state looks manifestly entangled. This confusing situation is caused by the pseudo-labeling. The state \ref{eqn:part_example} actually describe two \textit{distinguishable} particles whose state happens to be antisymmetrised. Therefore using the tensor product structure of $\mathcal{H}^{\otimes 2}$ one effectively overestimates the available particle entanglement by including that arising from the antisymmetrisation.

One the other hand, there is well-defined alternative concept inspired by resource theory\cite{ResT19} which looks rather similar: One defines the configuration states as the distinguished resource-free states
\begin{defn}[Free States]\label{def:uncorrP}
A fermionic state $\rho $ is called free in the particle picture, if and only if it can be represented by a single configuration state, i.e., $\rho\equiv \ket{\Psi}\!\bra{\Psi} $ with,
\begin{equation}\label{config1st}
\ket{\Psi}= \ket{\phi_1} \wedge \ldots \wedge \ket{\phi_N}
\end{equation}
for some (orthonormal) one-fermion states/modes $\ket{\phi_1}, \ldots, \ket{\phi_N} \in \mathcal{H}_1$.
\end{defn}
\noindent Furthermore, in analogy to the separable states one defines
\begin{defn}[Quantum-Free States]\label{def:sepP}
A fermionic state $\rho$ is called quantum-free in the particle picture, if and only if it can be written as a mixture of free states.
\end{defn}
\noindent A few comments are in order. First, the definition of free and quantum-free states could be applied in the context of both fixed particle number ($N$-fermion Hilbert space) and flexible particle number (Fock space). Second, since the definitions of free and quantum-free states look rather similar to those of uncorrelated and non-entangled states, we denote the respective sets by $\mathcal{D}_0^{(p)}$ and $\mathcal{D}_{sep}^{(p)}$, respectively. The superindex `$p$' refers to the particle picture and similarly we will add in the following a superindex `$m$' to the corresponding sets in the mode/orbital picture (as introduced by Definitions \ref{def:uncorr}, \ref{def:sep}) in case both pictures are discussed at the same time. 

\subsection{Quasifree States and Nonfreeness}

Measures for the nonfreeness and its quantum part can then be obtained by determining the minimal distances of a given quantum state $\rho$ to the sets $\mathcal{D}_0^{(p)}$ and $\mathcal{D}_{sep}^{(p)}$, respectively. Due to the close relation of this \emph{(quantum) nonfreeness} to the concepts of total (quantum) correlation we denote the respective measure by $(E^{(p)})\, C^{(p)}$.
Actually, the nonfreeness was first introduced by Gottlieb and Mauser\cite{gottlieb2005new,gottlieb2007properties,Gottlieb14,Gottlieb15} and they  observed\cite{gottlieb2007properties} that using the quantum relative entropy as distance function leads to an analytic formula  (referring to a Fock space-related Definition \ref{def:uncorrP}),
\begin{equation}\label{PartCorr}
    C^{(p)}(\rho) = S(\rho_1)+ S(\mathbbm{1}-\rho_1)-S(\rho).
\end{equation}
In this formula, the 1RDM $\rho_1$ of $\rho$ is trace-normalized to the particle number $N$. In case of pure total states $\rho$, $S(\rho)$ vanishes and the nonfreeness $C^{(p)}$ is nothing else than the particle-hole symmetrized von Neumann entropy of the 1RDM. Since this nonfreeness has a beautiful geometric meaning, the chances for discovering an underlying operational meaning might be better than for the quantity $S(\rho_1)$ as used in most works so far (see, e.g., Refs.~\cite{Ziesche97b,huang2005entanglement,Schoutens08}).

\subsection{Slater Rank and Slater Number}

Deriving an explicit analytic expression for the quantum part $E^{(p)}(\rho)$ of the nonfreeness seems to be a rather hopeless task again (as for the entanglement of mixed states in general). It is thus quite remarkable that at least for the case of two fermions in a four-dimensional one-particle Hilbert space an analytic procedure has been found\cite{schliemann2001quantum} (which, however, does not involve the quantum relative entropy and instead is based on a so-called convex-roof construction).

For a pure $N$-fermion state $|\psi\rangle$, the Slater Rank is defined as the minimum number of Slater determinants one needs to expand $|\psi\rangle$. This is extended to the case a general mixed state $\rho$, by minimizing the maximum Slater rank within a pure state decomposition of $\rho$, over all possible decompositions. This seems to be a daunting task, but the Slater number of an arbitrary two-fermion mixed state $\rho$ can be analytically found. In the first step, one determines the spectral decomposition of the given two-fermion density operator $\rho$ on $\wedge^2[\mathcal{H}_1]$ (here the respective eigenvalues are absorbed into the states $\ket{\Psi_i}$),
\begin{equation}
\rho= \sum_{i=1}^6 \ket{\Psi_i}\!\bra{\Psi_i}.
\end{equation}
By introducing an arbitrary reference basis for $\mathcal{H}_1$, one determines for all six contributions $\ket{\Psi_i}$ the antisymmetric expansion
matrices $w^{(i)}$,
\begin{equation}
    |\Psi_i\rangle = \sum_{a,b=1}^4 w^{(i)}_{ab} f^\dagger_a f^\dagger_b |0\rangle.
\end{equation}
Those are then used to calculate for $i,j=1,\ldots,6$
\begin{equation}
    K_{ij} = \sum_{a,b,c,d=1}^4 \epsilon^{abcd} w^{(i)}_{ab} w^{(j)}_{cd} \label{C_SL}.
\end{equation}
The quantum nonfreeness eventually follows as\cite{schliemann2001quantum}
\begin{equation} \label{PartEnt}
    E^{(p)}(\rho) = 2 \max_i |\kappa_i| - \Tr[|K|],
\end{equation}
where $\{\kappa_i\}$ are the eigenvalues of the matrix $K\equiv(K_{ij})$ and $\Tr[|K|]= \sum_{i=1}^{6}|\kappa_i|$.

  \chapter{Quantifying Mode Entanglement} \label{chap:quantifying}

In this Chapter we focus on a system of two sites/orbitals, and quantify the mode entanglement between these two sites/orbitals. The system can be in principle a subsystem of a larger set of sites/orbitals. Therefore the results are intended for general mixed states on the two sites/orbitals. Moreover, we quantify the accessible entanglement under the restriction of superselection rules according to Section \ref{sec:SSRIncorp}. Our goal is to find the minimizer to the relative entropy of entanglement \eqref{eqn:rel_ent}, i.e. the closest separable state $\sigma^\ast$ to a two-orbital physical state $\rho$ such that
\begin{equation}
    S(\rho||\sigma^\ast) = \min_{\sigma \in \mathcal{D}_\text{sep}} S(\rho||\sigma).
\end{equation}
With the help of symmetry, in some cases $\sigma^\ast$ can be exactly found, as will be shown in Section \ref{sec:analytic}. In case the minimization cannot be analytically solved, we resort to numerical means using semidefinite programming (SDP) in Section \ref{sec:SDP}. 

\section{Analytic Formula for Orbital-Orbital Entanglement} \label{sec:analytic}

\subsection{Symmetry and Entanglement} \label{sec:symm}

Finding the minimizer in \eqref{eqn:rel_ent} is not an easy task, even though from now on we restrict ourselves two-qudit systems where the qudits realized by two sites we labelled $A$ and $B$ that can host up to two electrons (spin $\uparrow$ and $\downarrow$) each. This is a common setting where the sites can be physical lattice sites in a many-electron system, or energy orbitals in a molecule. Even so, the total Fock space of the system is $4\times4$ dimensional, and a general density matrix has $2\sum_{n=1}^{16} n + 16 -1=255$ real-valued degrees of freedom. And more complexity will be introduced by separability constraints which set the boundaries of the range of minimization.

However, if the state of interest $\rho$ exhibits local unitary symmetries, the task of minimization can be greatly simplified. The minimizer $\sigma^\ast$, the closest separable state to $\rho$, shares the same local unitary symmetries as $\rho$\cite{vollbrecht2001entanglement}.

\begin{thrm} \label{thrm:symm}
If $E(\rho) \equiv \min_{\sigma \in D_{sep}} S(\rho||\sigma) = S(\rho||\sigma^\ast)$, and $U^\dagger(g) \rho U(g) = \rho$ for a set of local unitaries $U(g)$ representing elements of a group $g \in G$, then $\sigma^\ast$ satisfies
\begin{equation}
    \sigma^\ast = \mathcal{T}_G(\sigma^\ast),
\end{equation}
where
\begin{equation}
    \mathcal{T}_G(\cdot) = \begin{cases}
    \frac{1}{|G|} \sum_{g \in G} U(g)^\dagger (\cdot) U(g), \quad &G \:\: \textrm{discrete},
    \\
    \int_G \mathrm{d}\mu(g) \, U(g)^\dagger(\cdot) U(g), \quad &G \:\: \textrm{Lie group},
    \end{cases}
\end{equation}
and
\begin{equation}
    U^\dagger(g) \sigma^\ast U(g) = \sigma^\ast, \quad \forall g \in G.
\end{equation}
\end{thrm}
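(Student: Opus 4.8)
The plan is to exploit the convexity of the relative entropy together with the invariance of the distance function under local unitaries. The key observations are: (i) the set $\mathcal{D}_{\text{sep}}$ is invariant under the twirling map $\mathcal{T}_G$, since each $U(g)$ is a local unitary and local unitaries map separable states to separable states; (ii) the relative entropy is jointly convex in its two arguments; and (iii) for each fixed $g$, the unitary invariance $S(\rho\|\sigma) = S(U(g)^\dagger \rho\, U(g) \,\|\, U(g)^\dagger \sigma\, U(g))$ holds, and by hypothesis $U(g)^\dagger \rho\, U(g) = \rho$.

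First I would show that $\mathcal{T}_G(\sigma^\ast) \in \mathcal{D}_{\text{sep}}$. In the discrete case this is immediate: $\mathcal{T}_G(\sigma^\ast)$ is a uniform convex combination of the states $U(g)^\dagger \sigma^\ast U(g)$, each of which is separable because $U(g)$ is a local unitary and $\mathcal{D}_{\text{sep}}$ is closed under local unitaries and under convex combinations (Definition \ref{def:sep}). In the Lie group case the same holds with the Haar measure $\mathrm{d}\mu(g)$ in place of the normalized counting measure, using that $\mathcal{D}_{\text{sep}}$ is closed and convex so that it contains this average. Next I would estimate the distance from $\rho$ to $\mathcal{T}_G(\sigma^\ast)$:
\begin{equation}
S\big(\rho \,\big\|\, \mathcal{T}_G(\sigma^\ast)\big) = S\Big(\tfrac{1}{|G|}\textstyle\sum_{g} U(g)^\dagger \rho\, U(g) \,\Big\|\, \tfrac{1}{|G|}\textstyle\sum_{g} U(g)^\dagger \sigma^\ast U(g)\Big) \leq \tfrac{1}{|G|}\textstyle\sum_{g} S\big(U(g)^\dagger \rho\, U(g) \,\big\|\, U(g)^\dagger \sigma^\ast U(g)\big),
\end{equation}
where the first equality uses $U(g)^\dagger \rho\, U(g) = \rho$ for every $g$ and the inequality is joint convexity of the relative entropy. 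By unitary invariance each summand equals $S(\rho\|\sigma^\ast)$, so the right-hand side is exactly $S(\rho\|\sigma^\ast) = E(\rho)$. Hence $S(\rho\|\mathcal{T}_G(\sigma^\ast)) \leq E(\rho)$, and since $\mathcal{T}_G(\sigma^\ast)$ is separable the reverse inequality holds by definition of $E(\rho)$ as a minimum; therefore $\mathcal{T}_G(\sigma^\ast)$ is also a minimizer.

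Finally I would address uniqueness of the minimizer. Since $S(\rho\|\cdot)$ is strictly convex on the convex set $\mathcal{D}_{\text{sep}}$ (strict convexity of the relative entropy in its second argument on the set where it is finite, which one can invoke on the face of $\mathcal{D}_{\text{sep}}$ containing the minimizers, or simply cite \cite{vollbrecht2001entanglement}), the minimizer is unique; call it $\sigma^\ast$. Then both $\sigma^\ast$ and $\mathcal{T}_G(\sigma^\ast)$ are minimizers, so $\sigma^\ast = \mathcal{T}_G(\sigma^\ast)$, which is precisely the fixed-point claim. The stated symmetry $U^\dagger(g)\sigma^\ast U(g) = \sigma^\ast$ for all $g$ then follows: applying $U(h)^\dagger(\cdot)U(h)$ to $\mathcal{T}_G(\sigma^\ast)$ merely reindexes the average (by invariance of the measure under the group action), so $\mathcal{T}_G(\sigma^\ast)$ is $G$-invariant, and since it equals $\sigma^\ast$ we are done. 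The main obstacle I anticipate is the uniqueness step: the relative entropy is strictly convex in its second argument only where both arguments have appropriate support, so one must argue carefully that all minimizers lie in a region where strict convexity applies (e.g.\ they share the support of $\rho$, or one restricts to the minimal face) — alternatively one can bypass uniqueness and simply conclude that \emph{a} symmetric minimizer exists, which already suffices to reduce the optimization in the subsequent sections. Transcribing the discrete-case argument to the Lie-group case requires only replacing sums by Haar integrals and invoking continuity/measurability, which is routine.
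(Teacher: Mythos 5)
Your proposal is correct and follows essentially the same route as the paper: separability of the twirled state under local unitaries, unitary invariance of the relative entropy combined with convexity to show $S(\rho\|\mathcal{T}_G(\sigma^\ast))\leq S(\rho\|\sigma^\ast)$, and then either uniqueness or replacement of the minimizer by its twirl (a projection) to conclude $\sigma^\ast=\mathcal{T}_G(\sigma^\ast)$, with the final $G$-invariance obtained by reindexing. Your explicit caution about where strict convexity of $S(\rho\|\cdot)$ actually holds, and the fallback of merely asserting that \emph{a} symmetric minimizer exists, is a slightly more careful rendering of the same argument the paper gives.
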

\begin{proof}
We prove the case with discrete $G$. Same reasoning applies to those with Lie groups as well.
\begin{equation}
    \begin{split}
        S(\rho||\sigma^\ast) &= \frac{1}{|G|} \sum_{g \in G} S(U(g)^\dagger \rho U(g) || U(g)^\dagger \sigma^\ast U(g))
        \\
        &= \frac{1}{|G|} \sum_{g \in G} S(\rho || U(g)^\dagger \sigma^\ast U(g))
        \\
        &\geq S\left( \rho \left|\left| \frac{1}{|G|}\sum_{g \in G}\right.\right. U(g)^\dagger\sigma^\ast U(g)  \right)
        \\
        &= S(\rho||\mathcal{T}_G(\sigma^\ast)). \label{app:eq:entropy}
    \end{split}
\end{equation}
In the second line we used the unitary invariance of the relative entropy, and in the third line we used the convexity of the relative entropy. Since $U$ are local unitaries and they do not generate any entanglement, $\mathcal{T}_G(\sigma)$ is still inside the set of separable states $D_{sep}$. If we assume there exists a unique minimizer, namely $\sigma^\ast$, then \eqref{app:eq:entropy} shows a contradiction and $\sigma^\ast = \mathcal{T}_G(\sigma^\ast)$. If we assume the minimizer $\sigma^\ast$ is not unique, we can then replace it with $\sigma^\ast \rightarrow \mathcal{T}_G(\sigma^\ast)$ without altering the relative entropy of entanglement. Since $\mathcal{T}_G(\cdot)$ is a projection, we again arrive at $\sigma^\ast = \mathcal{T}_G(\sigma^\ast)$. We can check that
\begin{equation}
\begin{split}
    U^\dagger \mathcal{T}_G(\sigma^\ast) U &= \frac{1}{|G|} \sum_{U'\in G} U^\dagger U'^\dagger \sigma^\ast U' U = \frac{1}{|G|} \sum_{U''\in G} U''^\dagger \sigma^\ast U'' = \mathcal{T}_G(\sigma^\ast).
\end{split}
\end{equation}
\end{proof}
In fact, Theorem \ref{thrm:symm} can be generalized to non-local unitary group $G$, as long as $\mathcal{T}_G(\mathcal{D}_{sep}) \subseteq \mathcal{D}_{sep}$ is satisfied. That is, $\mathcal{T}_G$ is not entanglement generating.

To illustrate the action of $\mathcal{T}_G$, we turn to a simple unitary group generated by the particle number operator $ {N}$ defined as
\begin{equation}
     {N} = \mathbbm{1} \otimes  {N}_B +  {N}_A \otimes \mathbbm{1}.
\end{equation}
The unitaries generated by $ {N}$ are local and of the form
\begin{equation}
    U = e^{i\alpha  {N}} = e^{i\alpha  {N}_A} \otimes e^{i\alpha  {N}_B}, \quad \alpha \in [0,2\pi).
\end{equation}
We now compute $\mathcal{T}_G(\rho)$ by representing $\rho$ in the eigen-basis of $ {N}$

\begin{equation}
    \begin{split}
        \mathcal{T}_G(\rho) &= \frac{1}{2\pi} \int \mathrm{d}\alpha \, e^{-i\alpha  {N}_A} \otimes e^{-i\alpha  {N}_B} \rho \, e^{i\alpha  {N}_A} \otimes e^{i\alpha  {N}_B}
        \\
        &= \frac{1}{2\pi} \int \mathrm{d}\alpha \, e^{-i\alpha  {N}_A} \otimes e^{-i\alpha  {N}_B} \sum_{n,m,n',m'} \rho_{n,m,n',m'} |n\rangle\langle m| \otimes |n'\rangle \langle m'| \, e^{i\alpha  {N}_A} \otimes e^{i\alpha  {N}_B}
        \\
        &= \sum_{n,m,n',m'} \delta_{n+n',m+m'} \rho_{n,m,n',m'} |n\rangle\langle m| \otimes |n'\rangle \langle m'| = \sum_{N=1}^{D} P_N \rho P_N.
    \end{split}
\end{equation}

The effect of $\mathcal{T}_G$ is block diagonalizing $\rho$ into eigen-sectors of $ {N}$. One can show that with more than one commuting generators, these block will be simultaneous eigen-sectors. This interpretation allows us to quickly write down $\mathcal{T}_G(\rho)$ given the conserved quantities without having to go through the integration.

\subsection{Derivations and Results} \label{sec:derive}
Let us first ask ourselves what symmetries we may assume. The first conserved quantities we have are the local particle numbers ${N}_A,  {N}_B$, as we are for now interested in the N-SSR physical state $\rho^\text{N}$ defined as
\begin{equation}
    \rho^\text{N} = \sum_{m=0}^2 \sum_{n=0}^2 P_m \otimes P_n \, \rho \, P_m \otimes P_n.
\end{equation}
Additionally, there is a common symmetry for electron systems namely the $SU(2)$ symmetry associated with the total electron spin. A state $\rho$ that enjoys this symmetry commutes with two conserved quantities which are eigenvalues of the operators
\begin{equation}
\begin{split}
     S^z &= \mathbbm{1} \otimes S^z_B + S_A^z \otimes \mathbbm{1},
    \\
    \vec{S}^2 &= \frac{1}{2}(S_+S_- + S_-S_+) + (S^z)^2.
\end{split}
\end{equation}
Similar to the unitary operators generated by the particle number operator $N$, the unitary operators generated by $ {S}^z$ are local. However, those generated by $| {\vec{S}}|$ are not. Therefore in order to use this symmetry we need to show that $| {\vec{S}}|$ is not entanglement generating in the presence of other local symmetries mentioned above. To be precise, we state the following proposition.

\begin{prop}\label{prop:totalspin}
If $\rho$ is separable and commutes with $N_A \otimes N_B$ and $S_z$, then $\mathcal{T}_{G}(\rho)$ is separable where the twirl is with respect to the unitary group generated by $\vec{S}^2$.
\end{prop}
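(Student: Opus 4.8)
The plan is to identify the twirl over the group generated by $\vec{S}^2$ with the projection onto sectors of definite total spin, $\mathcal{T}_G(\cdot)=\sum_S \Pi_S(\cdot)\Pi_S$, and to exploit that $\vec{S}^2$ commutes with the local particle--number operators, so that this twirl leaves invariant the decomposition of the two-orbital Fock space into the nine sectors of fixed local occupations $(N_A,N_B)\in\{0,1,2\}^2$. (The twirl is well defined as written because within each such sector the eigenvalues $S(S+1)$ of $\vec{S}^2$ differ by integers, so $\mathcal{T}_G$ equals the $2\pi$-periodic average $\frac{1}{2\pi}\int_0^{2\pi}e^{-i\theta\vec{S}^2}(\cdot)\,e^{i\theta\vec{S}^2}\,\mathrm{d}\theta$.)

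First I would reduce the claim to a single sector. Since $\rho$ commutes with $N_A\otimes\mathbbm{1}$ and $\mathbbm{1}\otimes N_B$ it is block diagonal, $\rho=\bigoplus_{m,n}\rho_{mn}$ with $\rho_{mn}$ supported on $\mathcal{H}_A^{(m)}\otimes\mathcal{H}_B^{(n)}$ (the $m$- and $n$-particle subspaces of the local Fock spaces); projecting a separable decomposition $\rho=\sum_i p_i\,\rho_A^{(i)}\otimes\rho_B^{(i)}$ by $P_m^A\otimes P_n^B$ shows each $\rho_{mn}$ is a (subnormalized) separable state. Because $[\vec{S}^2,N_{A/B}]=0$ the twirl acts block-wise, $\mathcal{T}_G(\rho)=\bigoplus_{m,n}\mathcal{T}_G(\rho_{mn})$, and since each $\mathcal{T}_G(\rho_{mn})$, viewed in the full space, is a sum of product states, it suffices to show that $\mathcal{T}_G(\rho_{mn})$ is separable for each $(m,n)$. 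In a two-orbital system at least one orbital carries no spin in every sector except $(1,1)$: the sectors $(0,0),(0,2),(2,0),(2,2)$ have total spin $0$ and the sectors $(0,1),(1,0),(1,2),(2,1)$ are entirely spin $\tfrac12$, so there $\vec{S}^2$ is a multiple of the identity and $\mathcal{T}_G$ does nothing. This isolates the one interesting case, $(m,n)=(1,1)$, where $\mathcal{H}_A^{(1)}\otimes\mathcal{H}_B^{(1)}\cong\mathbb{C}^2\otimes\mathbb{C}^2$ is the space of two spins, decomposing into singlet ($S=0$) and triplet ($S=1$).

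For the $(1,1)$ block I would now use $[\rho,S_z]=0$. In the basis $\{\ket{\uparrow\uparrow},\ket{\uparrow\downarrow},\ket{\downarrow\uparrow},\ket{\downarrow\downarrow}\}$ the block has nonnegative diagonal entries $a,b,e,d$ and a single off-diagonal element $c=\bra{\uparrow\downarrow}\rho_{11}\ket{\downarrow\uparrow}$ in the $S_z=0$ subspace, with $be\ge|c|^2$ from positivity. Both local factors being qubits, the Peres--Horodecki criterion is necessary and sufficient, and computing the partial transpose over $B$ shows that $\rho_{11}$ is separable if and only if $ad\ge|c|^2$. The twirl only symmetrizes the $S_z=0$ block in the singlet/triplet basis, i.e., it removes the anti-Hermitian coherence $i\,\mathrm{Im}(c)\,(\ket{\uparrow\downarrow}\bra{\downarrow\uparrow}-\ket{\downarrow\uparrow}\bra{\uparrow\downarrow})$, which is exactly the singlet--triplet coherence; effectively $c\mapsto\mathrm{Re}(c)$ and $(b,e)\mapsto(\tfrac{b+e}{2},\tfrac{b+e}{2})$. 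Re-running the same partial-transpose computation, $\mathcal{T}_G(\rho_{11})$ is separable if and only if $ad\ge(\mathrm{Re}\,c)^2$, which follows from $ad\ge|c|^2$ since $|\mathrm{Re}\,c|\le|c|$. Therefore $\mathcal{T}_G(\rho_{11})$, and hence $\mathcal{T}_G(\rho)$, is separable.

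The heart of the proof --- and the only genuine computation --- is this $(1,1)$ sector: recognizing that fixing the local particle numbers together with $S_z$-invariance collapses everything to a $2\times2$ positive-partial-transpose question controlled by the single complex number $c$, and that the spin twirl can at most shrink $|c|$ to $|\mathrm{Re}\,c|$. The remaining points are routine: that $\mathcal{T}_G$ really is the total-spin projection (settled above by the integrality of the gaps in the $\vec{S}^2$-spectrum within each occupation sector), and that the whole argument uses only the convex decomposition of $\rho$ and the two stated commutation relations --- so it is insensitive to whether ``separable'' is meant plainly or relative to an SSR-restricted local algebra, and to the fermionic ordering signs, which are immaterial once $\rho$ commutes with the local parities.
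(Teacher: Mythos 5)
Your proof is correct and takes essentially the same route as the paper's Appendix~\ref{app:totalspin}: reduce to the $(N_A,N_B)=(1,1)$ block, apply the Peres--Horodecki criterion there, and note that the spin twirl only deletes the singlet--triplet coherence, which can only weaken the left-hand side of the PPT inequality. The difference is cosmetic --- you work in the product basis, where the paper's condition $\left(\tfrac{q_8-q_9}{2}\right)^2+\mathrm{Im}(b)^2\le q_{10}q_{11}$ is exactly your $|c|^2\le ad$ --- plus some additional (correct) care in justifying that the twirl is the total-spin pinching and that all other occupation sectors are left untouched.
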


A proof for Proposition \ref{prop:totalspin} is included in Appendix \ref{app:totalspin}.

In Table \ref{tab:sym} we listed the simultaneous eigen-states of $N_A\otimes N_B$, $S_z$ and $\vec{S}^2$. In this case all eigen-sectors are one-dimensional, and a state that satisfies $\sigma = \mathcal{T}_G(\sigma)$ can be written as
\begin{equation}
    \sigma = \sum_{i=1}^{16} p_i \sigma_i, \quad p_i \geq 0, \: \sum_i^{16} p_i = 1, \label{eqn:coeff}
\end{equation}
where $\sigma_i = |\Psi_i\rangle \langle \Psi_i|$ are the eigen-states listed in Table \ref{tab:sym}.

\begin{table}[h]
\centering
\begin{tabular}{|c|c|c|c|c|c|}
\hline
$N$                & $S^z$                   & $|\vec{S}|$            & $(N_A,N_B)$    & State                                                                                                                           & Ent.\rule{0pt}{2.6ex}\rule[-1.2ex]{0pt}{0pt} \\ \hline
0                  & 0                       & 0                      & $(0,0)$ & $|\Psi_1\rangle =|\Omega\rangle \otimes |\Omega\rangle$                                                                         & N \rule{0pt}{2.6ex}\rule[-1.2ex]{0pt}{0pt}   \\ \hline
\multirow{4}{*}{1} & \multirow{2}{*}{$1/2$}  & \multirow{2}{*}{$1/2$} & $(0,1)$ & $|\Psi_2\rangle =|\Omega\rangle \otimes |\!\uparrow\rangle$                                                                     & N    \rule{0pt}{2.6ex}\rule[-1.2ex]{0pt}{0pt}
\\ \cline{4-6}
                   &                         &                        & $(1,0)$ & $|\Psi_3\rangle =|\!\uparrow\rangle \otimes |\Omega\rangle$                                                                     & N   \rule{0pt}{2.6ex}\rule[-1.2ex]{0pt}{0pt}
                   \\ \cline{2-6}
                   & \multirow{2}{*}{$-1/2$} & \multirow{2}{*}{$1/2$} & $(0,1)$ & $|\Psi_4\rangle =|\Omega\rangle \otimes |\!\downarrow\rangle$                                                                   & N \rule{0pt}{2.6ex}\rule[-1.2ex]{0pt}{0pt}
                   \\ \cline{4-6}
                   &                         &                        & $(1,0)$ & $|\Psi_5\rangle = |\!\downarrow\rangle \otimes |\Omega\rangle$                                                                  & N    \rule{0pt}{2.6ex}\rule[-1.2ex]{0pt}{0pt}
                   \\ \hline
\multirow{6}{*}{2} & \multirow{4}{*}{$0$}    & \multirow{3}{*}{0}     & $(2,0)$ & $|\Psi_6\rangle =|\!\uparrow\downarrow\rangle \otimes |\Omega\rangle$                                                           & N    \rule{0pt}{2.6ex}\rule[-1.2ex]{0pt}{0pt}
\\ \cline{4-6}
                   &                         &                        & $(0,2)$ & $|\Psi_7\rangle =|\Omega\rangle \otimes |\!\uparrow\downarrow\rangle$                                                           & N   \rule{0pt}{2.6ex}\rule[-1.2ex]{0pt}{0pt}
                   \\ \cline{4-6}
                   &                         &                        & $(1,1)$ & $|\Psi_8\rangle =\frac{|\!\uparrow\rangle|\otimes|\downarrow\rangle - |\!\downarrow\rangle \otimes |\uparrow\rangle}{\sqrt{2}}$ & Y \rule{0pt}{2.6ex}\rule[-1.2ex]{0pt}{0pt}
                   \\ \cline{3-6}
                   &                         & 1                      & $(1,1)$ & $|\Psi_9\rangle =\frac{|\!\uparrow\rangle|\otimes|\downarrow\rangle + |\!\downarrow\rangle \otimes |\uparrow\rangle}{\sqrt{2}}$ & Y \rule{0pt}{2.6ex}\rule[-1.2ex]{0pt}{0pt}
                   \\ \cline{2-6}
                   & $1$                     & 1                      & $(1,1)$ & $|\Psi_{10}\rangle =|\!\uparrow\rangle \otimes |\!\uparrow\rangle$                                                              & N   \rule{0pt}{2.6ex}\rule[-1.2ex]{0pt}{0pt}
                   \\ \cline{2-6}
                   & $-1$                    & 1                      & $(1,1)$ & $|\Psi_{11}\rangle =|\!\downarrow\rangle \otimes |\!\downarrow\rangle$                                                          & N \rule{0pt}{2.6ex}\rule[-1.2ex]{0pt}{0pt}
                   \\ \hline
\multirow{4}{*}{3} & \multirow{2}{*}{$1/2$}  & \multirow{2}{*}{$1/2$} & $(2,1)$ & $|\Psi_{12}\rangle =|\!\uparrow\downarrow\rangle \otimes |\!\uparrow\rangle$                                                    & N \rule{0pt}{2.6ex}\rule[-1.2ex]{0pt}{0pt}
\\ \cline{4-6}
                   &                         &                        & $(1,2)$ & $|\Psi_{13}\rangle =|\!\uparrow\rangle \otimes |\!\uparrow\downarrow\rangle$                                                    & N   \rule{0pt}{2.6ex}\rule[-1.2ex]{0pt}{0pt}
                   \\ \cline{2-6}
                   & \multirow{2}{*}{$-1/2$} & \multirow{2}{*}{$1/2$} & $(2,1)$ & $|\Psi_{14}\rangle =|\!\uparrow\downarrow\rangle \otimes |\!\downarrow\rangle$                                                  & N \rule{0pt}{2.6ex}\rule[-1.2ex]{0pt}{0pt}
                   \\ \cline{4-6}
                   &                         &                        & $(1,2)$ & $|\Psi_{15}\rangle =|\!\downarrow\rangle \otimes |\!\uparrow\downarrow\rangle$                                                  & N \rule{0pt}{2.6ex}\rule[-1.2ex]{0pt}{0pt}
                   \\ \hline
4                  & $0$                     & $0$                    & $(2,2)$ & $|\Psi_{16}\rangle =|\!\uparrow\downarrow\rangle \otimes |\!\uparrow\downarrow\rangle$                                          & N \rule{0pt}{2.6ex}\rule[-1.2ex]{0pt}{0pt}
\\
\hline
\end{tabular}
\caption{Decomposition of the Fock space into eigen-sectors labeled by the tuple $({N}_A, {N}_B,S_z,\vec{S}^2)$.} \label{tab:sym}
\end{table}

Notice most of the eigen-states are already separable. We argue that all the entanglement is confined in the sector
\begin{equation}
    M = \mathrm{Span} \{ |\Psi_8\rangle, |\Psi_9\rangle, |\Psi_{10}\rangle, |\Psi_{11}\rangle \}. \label{eqn:sectorM}
\end{equation}

\begin{prop}\label{prop:M}
$\sigma$ is entangled if and only if $\sigma|_M$ is entangled.
\end{prop}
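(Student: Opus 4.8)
The plan is to exploit that the distinguished block $M$ is itself a tensor-product subspace of $\mathcal{F}_A\otimes\mathcal{F}_B$, so that $\sigma\mapsto\sigma|_M$ is merely a local projection — which can never create entanglement — while every basis state sitting outside $M$ is already a product state, hence never contributes any entanglement.

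The first point I would establish is the structural observation behind both directions: the four vectors $|\Psi_8\rangle,\dots,|\Psi_{11}\rangle$ of Table~\ref{tab:sym} all lie in the $(N_A,N_B)=(1,1)$ sector and together span exactly $M_A\otimes M_B$, where $M_{A/B}\subset\mathcal{F}_{A/B}$ is the singly-occupied subspace $\mathrm{Span}\{|\!\uparrow\rangle,|\!\downarrow\rangle\}$ of the local Fock space. Consequently the orthogonal projector onto $M$ factorizes, $P_M=P^{(1)}_A\otimes P^{(1)}_B$, as a product of local projectors. Since $\sigma$ is of the form \eqref{eqn:coeff} — diagonal in the orthonormal basis $\{|\Psi_i\rangle\}_{i=1}^{16}$ — one has $P_M\sigma P_M=\sum_{i=8}^{11}p_i\,\sigma_i$, and $\sigma|_M$ is this operator rescaled by $q\equiv p_8+p_9+p_{10}+p_{11}$; the case $q=0$ is degenerate, since then $\sigma$ is a mixture of product states and neither $\sigma$ nor $\sigma|_M$ is entangled.

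For the implication ``$\sigma|_M$ entangled $\Rightarrow$ $\sigma$ entangled'' I would argue by contraposition. If $\sigma=\sum_k q_k\,\alpha_k\otimes\beta_k$ is separable, then the local operation $X\mapsto(P^{(1)}_A\otimes P^{(1)}_B)\,X\,(P^{(1)}_A\otimes P^{(1)}_B)$ sends it to $P_M\sigma P_M=\sum_k q_k\,(P^{(1)}_A\alpha_k P^{(1)}_A)\otimes(P^{(1)}_B\beta_k P^{(1)}_B)$, a nonnegative combination of product operators, hence a separable state after normalization; so $\sigma|_M$ is separable. For the converse ``$\sigma$ entangled $\Rightarrow$ $\sigma|_M$ entangled'', again by contraposition, assume $\sigma|_M$ is separable. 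By inspection of Table~\ref{tab:sym}, every $\sigma_i$ with $i\notin\{8,9,10,11\}$ is a pure product state, hence separable. Writing $\sigma=\sum_{i\notin\{8,\dots,11\}}p_i\,\sigma_i+q\,\sigma|_M$ with $\sum_{i\notin\{8,\dots,11\}}p_i+q=1$ exhibits $\sigma$ as a convex combination of separable states, which is separable by convexity of $\mathcal{D}_{\text{sep}}$ (Definition~\ref{def:sep}). Combining the two implications yields the stated equivalence.

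I do not anticipate a genuinely hard step; the one thing that must be got right is the factorization $P_M=P^{(1)}_A\otimes P^{(1)}_B$ — i.e. that $M$, although spanned in part by entangled vectors, is nonetheless a \emph{product subspace} — together with the routine facts that local projections and nonnegative rescalings preserve separability and that the twelve basis states outside $M$ are products. It is worth noting that the argument invokes only convexity of the separable set, its closure under local projections, and the product character of the outside states, so it applies verbatim whether ``separable'' is read in the unrestricted sense or with an SSR restriction; and since $M=M_A\otimes M_B$, calling $\sigma|_M$ entangled ``inside $M\cong\mathbb{C}^2\otimes\mathbb{C}^2$'' or ``inside $\mathcal{F}_A\otimes\mathcal{F}_B$'' makes no difference.
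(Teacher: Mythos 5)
Your proof is correct, and for one of the two directions it takes a genuinely different (and arguably cleaner) route than the paper. For ``$\sigma$ entangled $\Rightarrow$ $\sigma|_M$ entangled'' (equivalently, $\sigma|_M$ separable $\Rightarrow$ $\sigma$ separable) you use exactly the paper's argument: the twelve basis states outside $M$ are product states, so $\sigma$ is a convex combination of separable pieces. For the other direction, however, the paper invokes the Peres--Horodecki criterion: since $M\cong\mathbb{C}^2\otimes\mathbb{C}^2$, entanglement of $\sigma|_M$ forces a negative eigenvalue of $[\sigma|_M]^{T_B}$, which survives in $\sigma^{T_B}$ because partial transposition preserves the block structure, so $\sigma$ is NPT and hence entangled. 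You instead observe that $M=M_A\otimes M_B$ is a \emph{product} subspace, so compression by $P_M=P^{(1)}_A\otimes P^{(1)}_B$ is a local filtering operation that maps any separable $\sigma$ to an (unnormalized) separable $\sigma|_M$; contraposing gives the claim. Your version buys two things: it does not depend on the low-dimensionality of $M$ (the paper's use of the ``if and only if'' part of Peres--Horodecki is special to $2\times2$ and $2\times3$), and it makes explicit the structural fact --- $P_M$ factorizes locally --- that the paper's argument also silently relies on when it asserts that partial transposition preserves the sectors $M$ and $M^\perp$. The price is negligible; both proofs need $\sigma$ to be of the diagonal form \eqref{eqn:coeff}, which you state. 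Your closing remark that entanglement of $\sigma|_M$ ``inside $M_A\otimes M_B$'' agrees with entanglement inside $\mathcal{F}_A\otimes\mathcal{F}_B$ is a worthwhile point the paper leaves implicit.
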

\begin{proof}
We can write $\sigma$ as the decomposition
\begin{equation}
    \sigma = \sigma|_M \oplus \sigma|_{M^\perp}.
\end{equation}
Assume $\sigma_M$ is entangled. Then its partial transpose (on the subsystem $B$, without loss of generality) $[\sigma|_M]^{T_B}$ necessarily has negative eigenvalues, according to the Peres-Horodecki criterion, which states a state is entangled if, and if and only if, in case of $2\times2$ and $2\times3$ dimensions, its partial transpose has negative eigenvalues. Then $[\sigma|_M]^{T_B}$ also has negative eigenvalues, since the partial trace preserves the sectors $M$ and $M^\perp$. Then $\sigma$ is necessarily entangled.

Now we assume $\sigma$ is entangled, and suppose $\sigma|_M$ is separable. We know that $\sigma|_{M^\perp}$ is separable. Then $\sigma$ is separable, due to the convexity of $D_{G} \cap D_{sep}$. We arrived at a contradiction. Therefore $\sigma|_M$ is entangled.
\end{proof}

Using the Peres-Horodecki criterion, $\sigma|_M$ is entangled if and only if its coefficients in \eqref{eqn:coeff} additionally satisfy
\begin{equation}
    q_{10} q_{11} \geq \left( \frac{q_8-q_9}{2} \right)^2. \label{eqn:separability}
\end{equation}
As $\rho^\text{N}$ and its closest separable state $\sigma^\ast$ can be simultaneously diagonalized, the relative entropy can be written as
\begin{equation}
    S(\rho^\text{N}||\sigma^\ast) = \sum_{i=1}^{16} p_i(\log(p_i)-\log(q_i)),
\end{equation}
where $p_i = \Tr[\rho \sigma_i]$. Our goal is to minimize $S(\rho^\text{N}||\sigma^\ast)$ by varying $\vec{q}=(q_1,q_2,\ldots,q_{16})$ under the constraint of \eqref{eqn:separability}. In the Lagrange multiplier formalism, this is equivalent to minimizing the function
\begin{equation}
        \begin{split}
        F &= -\sum_{i}^{16} p_i \log(q_i) + \lambda \left( \sum_{i=1}^{16} p_i - 1 \right)+ \mu \left( q_{10}q_{11} - \left( \frac{q_8 - q_9}{2} \right)^2 \right). \label{eqn:minF}
    \end{split}
\end{equation}
with respect to $\vec{q}$, $\lambda$ and $\mu$. But before we proceed we would like to show a useful result.

\begin{thrm} \label{thm:trace}
    Let $\rho$ be a density matrix that can be written as $\rho = a \rho_1 \oplus (1-a) \rho_2$ where $\rho_{1,2}$ are density matrices which are block diagonalized into disjoint sectors $M_{1,2}$ that preserve partial transpose (or other appropriate separability criteria), and $a \in (0,1)$. Then the closest separable state $\sigma^\ast$ to $\rho$ over some set of separable state can also be written as $\sigma^\ast = a \sigma^\ast_1 \oplus (1-a) \sigma^\ast_2$, where $\sigma^\ast_{1,2}$ are density matrices in sectors $M_{1,2}$.
\end{thrm}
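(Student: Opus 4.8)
The plan is to reuse the twirling argument of Theorem~\ref{thrm:symm}, taking as symmetry group the two-element group $G=\{\mathbbm{1},U\}$ generated by the self-adjoint unitary $U:=P_1-P_2$, where $P_1,P_2$ are the orthogonal projectors onto $M_1,M_2$ (so $P_1+P_2=\mathbbm{1}$, $U^\dagger=U$, $U^2=\mathbbm{1}$). Its twirl is the pinching $\mathcal{T}_G(X)=\tfrac12(X+UXU^\dagger)=P_1XP_1+P_2XP_2$. Two facts start the argument: (i) $\mathcal{T}_G(\rho)=\rho$, since $\rho=a\rho_1\oplus(1-a)\rho_2$ is block diagonal along $M_1\oplus M_2$ and hence $U\rho U^\dagger=\rho$; and (ii) $\mathcal{T}_G$ maps the chosen set of separable states into itself — this is where the hypothesis on the sectors enters, discussed below. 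Granting (i)--(ii), the computation in \eqref{app:eq:entropy} carries over with $G=\{\mathbbm{1},U\}$: for any separable minimizer $\sigma^\ast$ one gets $S(\rho||\sigma^\ast)\ge S(\rho||\mathcal{T}_G(\sigma^\ast))$ from unitary invariance, $U\rho U^\dagger=\rho$, and convexity of $\sigma\mapsto S(\rho||\sigma)$. Hence the minimum over separable states is attained at a block-diagonal state, and — just as in Theorem~\ref{thrm:symm} — we may take $\sigma^\ast$ itself of the form $\sigma^\ast=b\,\sigma_1^\ast\oplus(1-b)\,\sigma_2^\ast$, with $\sigma_i^\ast$ a state on $M_i$ and $b=\Tr[P_1\sigma^\ast]$.

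It then remains to identify $b$. Expanding $\log\rho$ and $\log\sigma^\ast$ along $M_1\oplus M_2$ and taking traces gives the decoupled identity
\[
S(\rho||\sigma^\ast)=a\log\tfrac{a}{b}+(1-a)\log\tfrac{1-a}{1-b}+a\,S(\rho_1||\sigma_1^\ast)+(1-a)\,S(\rho_2||\sigma_2^\ast),
\]
i.e. the classical relative entropy between the block weights $(a,1-a)$ and $(b,1-b)$ plus the weighted block-wise relative entropies. Since $a\in(0,1)$, the first two terms are nonnegative and vanish precisely at $b=a$, and the three contributions are minimized independently, so the minimum is attained with $b=a$ and with each $\sigma_i^\ast$ equal to the closest separable state to $\rho_i$ within sector $M_i$. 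This is exactly the claimed form $\sigma^\ast=a\,\sigma_1^\ast\oplus(1-a)\,\sigma_2^\ast$. (If the minimizer is not unique, one selects the one of this form, as in Theorem~\ref{thrm:symm}.)

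The main obstacle is step (ii): a pinching does not preserve separability in general, so one must invoke the hypothesis that the sectors ``preserve partial transpose (or other appropriate separability criteria)''. The argument I would give: preservation of the sectors by the partial transposition $(\cdot)^{T_B}$ means that $(\cdot)^{T_B}$ acts block-wise along $M_1\oplus M_2$ and therefore commutes with $\mathcal{T}_G$; hence for separable $\sigma$ — equivalently PPT, in the regime of interest ($2\times2$, $2\times3$, or the sectors appearing in Table~\ref{tab:sym}) — we have $\mathcal{T}_G(\sigma)^{T_B}=\mathcal{T}_G\bigl(\sigma^{T_B}\bigr)\ge0$ since $\mathcal{T}_G$ is a positive map, so $\mathcal{T}_G(\sigma)$ is PPT, its two diagonal blocks are PPT within their sectors, hence separable, hence $\mathcal{T}_G(\sigma)$ is separable. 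Two routine side points finish the proof: block-diagonal separability reduces to separability of each block whenever the sectors $M_i$ are themselves of tensor-product form (as in the intended application), and any minimizer with finite relative entropy must satisfy $b\in(0,1)$ and $\mathrm{supp}\,\sigma_i^\ast\supseteq\mathrm{supp}\,\rho_i$, so all logarithms above are well defined. Finally, if the minimization is over $\mathcal{D}_{\text{sep}}$ intersected with a symmetry class (as in Section~\ref{sec:derive}), one notes in addition that $\mathcal{T}_G$ preserves that class because the sectors are invariant under the relevant symmetry group.
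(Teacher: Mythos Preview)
Your proof is correct and follows essentially the same route as the paper: invoke the twirling argument of Theorem~\ref{thrm:symm} to reduce to a block-diagonal candidate $\sigma^\ast=b\,\sigma_1^\ast\oplus(1-b)\,\sigma_2^\ast$, then expand $S(\rho\|\sigma^\ast)$ along the blocks and minimize the classical relative-entropy term in the weights to force $b=a$. The paper's version is terser --- it simply cites Theorem~\ref{thrm:symm} for the block-diagonal form without spelling out the group $G=\{\mathbbm{1},P_1-P_2\}$ or the PPT-preservation step you work through --- but the logic is the same.
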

\begin{proof}
    We know from Theorem \ref{thrm:symm} that the closest separable state $\sigma$ is also block diagonal and can be written as
    \begin{equation}
        \sigma^\ast = b \sigma^\ast_1 \oplus (1-b) \sigma^\ast_2, \quad b \in (0,1).
    \end{equation}
    Then the relative entropy can be written as
    \begin{equation}
    \begin{split}
        S(\rho||\sigma^\ast) &= \Tr[(a\rho_1 + (1-a)\rho_2) (\log(a\rho_1 + (1-a)\rho_2) - \log(b\sigma^\ast_1 + (1-b)\sigma^\ast_2))]
        \\
        &= a S(\rho_1||\sigma^\ast_1) + (1-a) S(\rho_2 ||\sigma^\ast_2)
        \\
        &\quad + a(\log(a)-\log(b))+ (1-a)(\log(1-a)-\log(1-b)).
    \end{split}
    \end{equation}
    For fixed $\sigma_{1,2}^\ast$, minimum is obtained when $b=a$. Therefore the closest separable state must have the same trace restricted to sectors $M_{1,2}$ as $\rho$.
\end{proof}

Theorem \ref{thm:trace} can be generalized to the case with more than two disjoint sectors. A direct consequence is, if we wish to find the closest separable state $\sigma^\ast$ to a state $\rho$ that is block diagonal, and these blocks are not coupled to each other by separability criteria, then minimization can be carried out independently in each sector. In particular, $\rho|_{M'} = \sigma^\ast|_{M'}$ when $M'$ is a one-dimensional sector in a reference tensor product basis. This means in our case we can immediately write down
\begin{equation}
    q_i = p_i, \quad i \neq 8,9,10,11, \label{eqn:Mperp}
\end{equation}
without performing any calculations. Furthermore, the relative entropy can be simplified to
\begin{equation}
    S(\rho||\sigma^\ast) = S(\rho|_M ||\sigma^\ast|_M).
\end{equation}
After this, we are left with a new minimizing function which only concerns sector $M$

\begin{equation}
        \begin{split}
        F &= -\sum_{i=8,9,10,11} p_i \log(q_i) + \lambda \left( \sum_{8,9,10,11} q_i - p_i \right)+ \mu \left( q_{10}q_{11} - \left( \frac{q_8 - q_9}{2} \right)^2 \right). \label{eqn:minF2}
    \end{split}
\end{equation}

\begin{itemize}
    \item \textbf{Special case: $p_{10} = p_{11}$.} We first look at a simpler situation where $p_{10} = p_{11}$. In this case symmetry demands that $q_{10} = q_{11}$ and the quadratic constraint in \eqref{eqn:separability} reduces to linear ones. Assuming $p_8 > p_9 \geq 0$, the coefficients for the closest separable state in sector $M$ are
\begin{equation}
    \begin{split}
        q_8 &= \frac{S}{2}, \quad q_9 = \frac{S}{2(S-p_8)} p_9,
        \\
        q_{10} &= \frac{S}{2(S-p_8)} p_{10},\quad S = p_8 + p_9 + 2p_{10} = \Tr[\rho|_M].
        \label{eqns:soln}
    \end{split}
\end{equation}
For the case $p_9 > p_8 \geq 0$ one simply swaps $p_8 \leftrightarrow p_9$ and $q_8 \leftrightarrow q_9$. This simple solution is due to the fact that the domain of search has linear boundaries and finitely many extreme points. The relative entropy of entanglement is then simplified to
\begin{equation}
    \begin{split}
        S(\rho|\sigma^\ast) &= t \log(t) + (S-t)\log(S-t) -S \log\left(\frac{S}{2}\right),
        \\
        t &= \max\{p_8,p_9\}. \label{eqn:rel_ent_formula}
    \end{split}
\end{equation}
That is, to calculate the entanglement of $\rho$, it suffices to know the values $t = \max\{\Tr[\rho\sigma_8], \Tr[\rho\sigma_9]\}$ and $S = \Tr[\rho|_M]$.

\item \textbf{General case: $p_{10} \neq p_{11}$.} In this general case the solution is more involved and we found the following parameters for the minimizer using software Mathematica. Again, assuming $p_8 > p_9 \geq 0$,
\begin{equation}
    \begin{split}
        q_8 &= \frac{A + B +\sqrt{C}}{4(S-p_9)},
        \\
        q_9 &= \frac{A - B - \sqrt{C}}{4(S-p_8)}
        \\
        q_{10} &= p_{10} + \frac{p_8+p_9-q_8-q_9}{2},
        \\
        q_{11} &= p_{11} + \frac{p_8+p_9-q_8-q_9}{2}, \label{eqns:sol_gen}
    \end{split}
\end{equation}
where $A,B,C$ are polynomial functions of $p_8,p_9,p_{10},p_{11}$
\begin{equation}
\begin{split}
    A &=  S^2 - (p_{10} - p_{11})^2,
    \\
    B &= (p_8-p_9)S,
    \\
    C &= (p_{10}+p_{11})^2(p_8 - p_9)^2 + 8 p_{10} p_{11} (2p_{10}p_{11} + (p_{10}+p_{11})(p_8 + p_9) + 2 p_8 p_9),
    \\
    S &= p_8 + p_9 + p_{10} + p_{11} = \Tr[\rho|_M].
\end{split}
\end{equation}
For the case $p_9 > p_8 \geq 0$ one again swaps $p_8 \leftrightarrow p_9$ and $q_8 \leftrightarrow q_9$. One can check that when $p_{10} = p_{11}$, \eqref{eqns:sol_gen} reduces to \eqref{eqns:soln}.
\end{itemize}

If the N-SSR restriction is relaxed to a P-SSR one, analytic formula for the site-site entanglement can also be obtained, using the previously ignored reflection symmetry $L \leftrightarrow R$ between the two sites. The P-SSR restricted entanglement in $\rho$ is quantified as $E(\rho^\text{P})$ where
\begin{equation}
\rho^\text{P} = \sum_{\tau, \tau' = \text{odd}, \text{even}} P_\tau \otimes P_{\tau'} \, \rho\, P_\tau \otimes P_{\tau'}.
\end{equation}
Following a similar line of reasoning, one can also show that a two-orbital state $\sigma$ that shares the particle number, magnetization, reflection and local parity symmetry can be expanded as \ref{eqn:coeff} with the additional changes to the eigenstates $\sigma_i$'s
\begin{equation}
\Psi_{6/7} \longrightarrow \frac{|\Omega\rangle \otimes |\!\uparrow\downarrow\rangle \mp |\!\uparrow\downarrow\rangle \otimes |\Omega\rangle}{\sqrt{2}}.
\end{equation}
One can prove that the symmetric state $\sigma$ is separable if and only if $\sigma|_M$ and $\sigma|_{M'}$ are individually separable, where $M'=\text{Span}\{|\Psi_1\rangle, |\Psi_6\rangle, |\Psi_7\rangle, |\Psi_{16}\rangle\}$ is the even-even sector (mirroring the odd-odd sector $M$ in \eqref{eqn:sectorM}). One can show that $M$ and $M'$ are preserved by partial transposition. Therefore the entanglement of the state $\rho$ can be divided into the sum of entanglement in sector $M$ and $M'$, by Theorem \ref{thm:trace}. Using again the Peres-Horodecki criterion, we find the P-SSR restricted entanglement as (for particle-hole symmetrized state $\rho$, i.e. $p_{1} = p_{16}$)
\begin{equation}
\begin{split}
E(\rho^\text{P}) = &E(\rho^\text{N}) + t' \log(t') + (s'-t')\log(s'-t') -s' \log\left(\frac{s'}{2}\right), \label{eqn:rel_ent_formula2}
\end{split}
\end{equation}
where  $s' = \Tr[\rho|_{M'}]$, $t' = \max\{p_6,p_7\}$ and $E(\rho^\text{N})$ the N-SSR restricted entanglement calculated with the solutions to closest separable states presented in \eqref{eqns:soln} or \eqref{eqns:sol_gen}. 

\subsection{Symmetry Inheriting}

In Section \ref{sec:analytic}, we derived an analytic formula \eqref{eqns:soln} and \eqref{eqns:sol_gen} for the orbital-orbital entanglement in a two-orbital system, in which the total particle number $N$, total spin $\vec{S}^2$ and magnetization $S^z$ are conserved. In this section we consider the orbital pair as a part of a larger collection of orbitals, and investigate under what condition can \eqref{eqns:soln} and \eqref{eqns:sol_gen} be applied to this orbital pair.

To establish a similar starting point, we consider a system of $K>2$ orbitals, and assume the same conserved quantities as the two-orbital system in Section \ref{sec:symm}, namely $N$, $\vec{S}^2$ and $S^z$. We are interested in the pairwise entanglement between orbital $i$ and $j$ (e.g. $1\leq i < j \leq K$), in the two-orbital reduced state $\rho_{i,j}$ obtained by tracing out all other orbital degrees of freedom in the total state
\begin{equation}
\rho_{i,j} = \Tr_{\setminus \{ i, j \}} [\rho].
\end{equation}
The central question is then what type of symmetry does the two-orbital reduced state $\rho_{i,j}$ exhibit? More precisely and in a more general setting, we wish to know what symmetry does the reduced state $\rho_A$ on a subsystem $A$ (the complementary part denoted as $B$) \textit{inherit} from the total state $\rho$. 

With respect to a bipartition, there are two types of symmetries, local and global. Local symmetries are associated with conserved observables that take the form
\begin{equation}
{Q} = {Q}_A \otimes \mathbbm{1} + \mathbbm{1} \otimes {Q}_B. \label{eqn:local_symm}
\end{equation}
The unitary group generated by ${Q}$ is therefore also local in the sense that its elements are factorized, i.e. $U = U_A \otimes U_B$ where $U_{A/B} = \exp(i\alpha  {Q}_{A/B})$. Then the quantity $Q_A$ is conserved in subsystem $A$ manifested by $U_A \rho_A U_A^\dagger= \rho_A$ which follows directly the assumption $U \rho U^\dagger = \rho$ and the unitary invariance of partial trace. In other words, local symmetries of the total state are naturally inherited by the reduced states, as expected.

The other type of symmetries are associated with conserved quantities that cannot be casted as the form \eqref{eqn:local_symm}, the global ones. These symmetries are in general not inherited by the reduced states. However, we argue in Theorem \ref{thrm:symminherit} that if we further assume that the total state $\rho$ is a singlet, then the global symmetry associated with the conserved total spin $\vec{S}^2$ is also present in the reduced states.

\begin{thrm}\label{thrm:symminherit}
Let $A:B$ be a bipartition of the orbitals. If the total state $\rho$ is a singlet state, namely
\begin{equation}
    \Tr\left[\rho \,  {\vec{S}}^2\right] = 0
\end{equation}
then the reduced state $\rho_A$ satisfies $\left[\rho_A,  {\vec{S}}_A^2\right]  =\left[\rho_{A},  {S}^z_A\right] = 0$. 
\end{thrm}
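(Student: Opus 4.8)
The plan is to use the fact that a singlet is invariant under the \emph{global} spin‑rotation group $SU(2)$, and that these global rotations — being generated by an \emph{even} operator — factorize across the mode bipartition $A:B$, so that the invariance passes to the reduced state after a partial trace.

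First I would upgrade the hypothesis $\Tr[\rho\,\vec{S}^2]=0$ to full rotational invariance of $\rho$. Writing $\rho=\sum_k p_k\ket{\psi_k}\bra{\psi_k}$ with $p_k>0$, and using $\vec{S}^2=\tfrac12(S^+S^-+S^-S^+)+(S^z)^2$ with $S^\pm=S^x\pm iS^y$ and $(S^\pm)^\dagger=S^\mp$, each summand is of the form $T^\dagger T$ and hence positive semidefinite. Therefore $0=\Tr[\rho\,\vec{S}^2]=\sum_k p_k\bra{\psi_k}\vec{S}^2\ket{\psi_k}$ forces $S^z\ket{\psi_k}=S^+\ket{\psi_k}=S^-\ket{\psi_k}=0$ for every $k$, i.e. $(\hat n\cdot\vec{S})\rho=\rho(\hat n\cdot\vec{S})=0$ for every direction $\hat n$. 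Exponentiating, the rotations $U(\theta,\hat n)=e^{-i\theta\,\hat n\cdot\vec{S}}$ satisfy $U\rho U^\dagger=\rho$ for all $\theta,\hat n$.

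Next I would observe that the spin operators are quadratic (even) in the fermionic creation/annihilation operators — e.g. $S^z=\tfrac12\sum_i(n_{i\uparrow}-n_{i\downarrow})$, $S^+=\sum_i f^\dagger_{i\uparrow}f_{i\downarrow}$ — hence carry no Jordan--Wigner string and split as $\vec{S}=\vec{S}_A\otimes\mathbbm{1}_B+\mathbbm{1}_A\otimes\vec{S}_B$ with respect to $\mathcal{F}_{AB}=\mathcal{F}_A\otimes\mathcal{F}_B$, the two commuting pieces being the total spins of the $A$‑modes and $B$‑modes. Consequently $U(\theta,\hat n)=U_A\otimes U_B$ with $U_{A/B}=e^{-i\theta\,\hat n\cdot\vec{S}_{A/B}}$. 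Taking $\Tr_B$ of $U\rho U^\dagger=\rho$ and using the invariance of the partial trace under unitaries acting on $B$,
\begin{equation}
\rho_A=\Tr_B\big[(U_A\otimes U_B)\,\rho\,(U_A^\dagger\otimes U_B^\dagger)\big]=U_A\,\Tr_B[\rho]\,U_A^\dagger=U_A\,\rho_A\,U_A^\dagger
\end{equation}
for all $\theta,\hat n$. Differentiating at $\theta=0$ gives $[\hat n\cdot\vec{S}_A,\rho_A]=0$ for every $\hat n$; taking $\hat n=\hat z$ yields $[\rho_A,S^z_A]=0$, and adding the squares of the three components yields $[\rho_A,\vec{S}_A^2]=0$.

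The only genuinely non‑routine point — and the place where fermionic statistics could have interfered — is the factorization $U=U_A\otimes U_B$ in the third step: it is legitimate precisely because the spin operators are even, unlike generic local operators on a fermionic Fock space, for which a Jordan--Wigner string would obstruct such a clean split. Everything else is elementary, and the argument is in the same spirit as the symmetry considerations of Section~\ref{sec:symm}, except that here the singlet condition promotes a \emph{global} symmetry to an honest invariance of $\rho$ itself rather than merely of its closest separable state.
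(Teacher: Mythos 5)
Your proof is correct, and it takes a genuinely different route from the one in the paper. You first upgrade the scalar condition $\Tr[\rho\,\vec{S}^2]=0$ to the operator statement $(\hat n\cdot\vec S)\rho=\rho(\hat n\cdot\vec S)=0$ for every direction $\hat n$, using the decomposition of $\vec S^2$ into positive summands of the form $T^\dagger T$; this gives full $SU(2)$ invariance $U\rho U^\dagger=\rho$, and since the spin operators are even (quadratic) in the fermionic operators the rotations factorize as $U_A\otimes U_B$, so the invariance survives the partial trace and differentiation yields all three commutators $[\rho_A,S^{x,y,z}_A]=0$ at once. The paper instead works directly with $\Tr_B\bigl[[\rho,\vec S^2]\bigr]$, splitting $\vec S^2=\vec S_A^2+\vec S_B^2+2\sum_{k\in A,l\in B}\vec S_k\cdot\vec S_l$: the $B$-only piece dies by cyclicity, and the cross terms are killed by an explicit computation of $\Tr_B[\rho\,S^z_B]$ in a Schmidt basis of the (there assumed pure) singlet, with the $x$- and $y$-components dispatched by appealing to rotational invariance. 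Your argument buys two things: it avoids the cross-term computation entirely, and it applies verbatim to \emph{mixed} states satisfying $\Tr[\rho\,\vec S^2]=0$, whereas the paper's proof specializes to $\rho=\ket{\Psi}\!\bra{\Psi}$ midway through. Conversely, the paper's commutator bookkeeping makes visible exactly which term produces $[\rho_A,\vec S_A^2]$ and which terms must cancel against it; but the step it leaves implicit (rotational invariance of the singlet) is precisely what you establish explicitly at the outset, so your version can be read as a cleaner formalization of the same underlying physics. You also correctly flag the one point where fermionic statistics could have obstructed the argument — the factorization $U=U_A\otimes U_B$ — and justify it by the evenness of the spin operators, which is the right reason.
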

\begin{proof}
We know that $\rho$ commutes with both $\vec{S}^2$ and $ {S}^z$. For the commutator between the reduced state and local magnetization, we take the partial trace,
\begin{equation}
\begin{split}
        0 = \Tr_B\left[ \left[ \rho,  {S}^z\right] \right] = \Tr_B \left[ \left[ \rho,  {S}^z_A +  {S}^z_B \right] \right] = \Tr_B \left[ \left[ \rho,  {S}^z_A \right] \right] = \left[ \rho_A,  {S}^z_A \right]. \label{eq:Sz}
\end{split}
\end{equation}
$\Tr_B \left[ \left[ \rho,  {S}^z_B \right] \right] = 0$ due to cyclicity of trace. Note that \eqref{eq:Sz} holds even when $\rho$ is not a singlet but still commutes with $\vec{S}_z$. For the total spin, we can write it as,
\begin{equation}
\begin{split}
        \vec{S}^2 =  \sum_{k,l} \vec{S}_k \cdot \vec{S}_l= \vec{S}_A^2 + \vec{S}_B^2 + 2\sum_{k\in A, l \in B} \vec{S}_k \cdot \vec{S}_l,
\end{split}
\end{equation}
where indices $k,l$ are lattice site labels. We then take the partial trace of the commutator $\left[\rho, \vec{S}^2\right]$ on subsystem $B$,
\begin{equation}
\begin{split}
0 = \Tr_B \left[ \left[\rho, \vec{S}^2\right] \right] = \Tr_B \left[ \left[ \rho, \vec{S}_A^2 \right] \right] + \Tr_B \left[ \left[ \rho, \vec{S}_B^2 \right] \right]  + 2\Tr_B \left[ \left[ \rho, \sum_{k\in A, l \in B} \vec{S}_k \cdot \vec{S}_l \right] \right].
\end{split}
\end{equation}
The first term in the last line is the sought after $\left[\rho_A, \vec{S}_A^2\right]$. The second term vanishes again due to the cyclicity of trace. Then we are left with the last term. We rewrite it as
\begin{equation}
    \begin{split}
        \quad \sum_{k\in A, l \in B} \Tr_B \left[ \left[ \rho, \vec{S}_k \cdot \vec{S}_l \right] \right] = \sum_{k\in A, l \in B} \Tr_B \left[ \left[ \rho, S^x_k S^x_l + S^y_k S^y_l + S^z_k S^z_l \right] \right]. \label{eqn:crossterms}
    \end{split}
\end{equation}
Since $\rho$ is a singlet state, we consider $\rho = |\Psi\rangle \langle \Psi|$ where $|\Psi\rangle$ is an eigen-state of $S_z$ with eigenvalue $0$
\begin{equation}
    |\Psi\rangle = \sum_{i} \lambda_i  |s_i;a_i \rangle_A \otimes |\!-\!s_i;b_i \rangle_B.
\end{equation}
$a_i$ and $b_i$ denotes degrees of freedom within the degeneracy classes (e.g., arrangements of spin up and spin down electrons). Then
\begin{equation}
    \begin{split}
        \Tr_B \left[ \rho S^z_B\right] = \sum_{i,j} \lambda_i \overline{\lambda}_j ( - s_i) |s_i;a_i\rangle\langle s_i;a_j| \delta_{s_i,s_j} \delta_{b_i,b_j},
    \end{split}
\end{equation}
which commutes with $ {S}^z_A$. Using
\begin{equation}
    \Tr_B \left[ \left[ \rho, S^z_A S^z_B\right] \right] = \left[ S^z_A, \Tr_B \left[ \rho S^z_B\right] \right]
\end{equation}
we deduce that
\begin{equation}
    \Tr_B \left[ \left[ \rho, S^z_A S^z_B\right] \right] = 0.
\end{equation}
The $x$- and $y$-component spin operator terms in \eqref{eqn:crossterms} vanishes by the same argument, since singlets states are rotationally invariant.
\end{proof}

Apart from symmetries mentioned above, the two-site reduced state $\rho_{i,j}$ of a singlet state also enjoys the spin-flip symmetry, manifested as
\begin{equation}
    \begin{split}
\langle \uparrow\!| \otimes \langle \uparrow\!| \Tr_{\setminus\{i,j\}}[\rho] |\!\uparrow\rangle \otimes |\!\uparrow\rangle &= \langle \uparrow\!| \otimes \langle \uparrow\!| \Tr_{\setminus \{ i, j \}}[(U^\dagger)^{\otimes N}\rho U^{\otimes N}] |\!\uparrow\rangle \otimes |\!\uparrow\rangle
\\
&= \langle \uparrow\!| \otimes \langle \uparrow\!|(U^\dagger)^{\otimes 2} \Tr_{\setminus \{ i, j \}}[\rho] U^{\otimes 2} |\!\uparrow\rangle \otimes |\!\uparrow\rangle
\\
&= \langle \downarrow\!|\otimes \langle \downarrow\!| \Tr_{\setminus \{ i, j \}}[\rho] |\!\downarrow\rangle \otimes |\!\downarrow\rangle,
    \end{split}
\end{equation}
where $U$ is a basis transformation that maps $|\!\uparrow\rangle \otimes |\!\uparrow\rangle$ to $|\!\downarrow\rangle \otimes |\!\downarrow\rangle$ and vice versa. Referring to Table \ref{tab:sym}, this translates to the condition $p_{10} = p_{11}$, which allows us to use the simple formula \eqref{eqns:soln} to calculate the site-site entanglement under N-SSR.

\section{Numerical Method} \label{sec:SDP}

As explain in Section \ref{sec:analytic} calculating the relative entropy of entanglement \eqref{eqn:rel_ent} is in general difficult. However, a few properties of \eqref{eqn:rel_ent} can come to our aid. First of all, the set of separable states $\mathcal{D}_\text{sep}$ is evidently convex. Secondly, the relative entropy of entanglement is convex in both arguments. Therefore we are actually minimizing a convex function over a convex set. If we know the boundary of $\mathcal{D}_\text{sep}$ then \eqref{eqn:rel_ent} can of course be efficiently solved. The complexity in solving \eqref{eqn:rel_ent} originates precisely from the complexity of the boundary of $\mathcal{D}_\text{sep}$. We will show in this section that one can divide this problem into a sequence of convex optimization problem with known optimization boundaries. Then each step can be efficiently solved with the well-developed semidefinite programming\cite{cvx,gb08}. 

\subsection{Semidefinite Programming}

Semidefinite programming (SDP) envelops a wide range of convex optimization problems\cite{vandenberghe1996semidefinite}. Here we present the following form of SDP that is suitable for handling density matrices
\begin{equation}
\begin{split}
    &\mathrm{Minimize} \: F = \Tr[C\rho]
    \\
    &\mathrm{Subject\:to}: \: \Tr[A_i\rho] = b_i, \quad \rho \geq 0, \label{eqn:SDP}
\end{split}
\end{equation}
where $C$ and $A_i$'s are matrices of the same dimension as $\rho$ that encodes the objective function and constraints respectively. $\rho \geq 0$ denotes the condition that $\rho$ is positive semidefinite, namely all eigenvalues of $\rho$ are non-negative.

If we parametrize $\rho$ in terms of its matrix elements, positivity is a highly non-linear constraint. We know, however, that the set of positive semidefinite matrices form a convex set (in fact a convex cone). The additional linear constraints encoded in $A_i$'s are hyperplanes intersecting the set of positive matrices resulting in a smaller convex set. Semidefinite programming makes use of this structure, and search efficiently for the optimal $\rho$ using the so-called interior point method\cite{alizadeh1995interior}. In general when $\rho$ is not restricted to be a quantum state, it can be used to encode non-linear constraints as positivity constraint, e.g.
\begin{equation}
    \rho = \begin{pmatrix}
    A & B\\C&D
    \end{pmatrix}\geq 0 \quad \Leftrightarrow \quad AB-CD \geq 0. 
\end{equation}

The relative entropy as an objective function is slightly problematic. Although it is convex, the matrix logarithm in the relative entropy makes it difficult to be casted as \eqref{eqn:SDP}. Instead, the logarithm is approximated with rational functions to high accuracy\cite{fawzi2019semidefinite}.

\subsection{Algorithm}

We would like to approximate the set of separable states $\mathcal{D}_{\textrm{sep}}$ by relaxing the following condition
\begin{equation}
    \sigma \in \mathcal{D}_{\textrm{sep}} \quad \Leftrightarrow \quad \sigma = \sum_{i}^{\infty} p_i \sigma_A^{(i)} \otimes \sigma_B^{(i)}, \:\: p_i \geq 0, \: \sum_i^{\infty} p_i = 1
\end{equation}
to only finite convex combinations
\begin{equation}
    \sigma \in \mathcal{D}^N_{\textrm{sep}} \quad \Leftrightarrow \quad \sigma = \sum_{i}^{N} p_i \sigma_A^{(i)} \otimes \sigma_B^{(i)}, \:\: p_i \geq 0, \: \sum_i^N p_i = 1.
\end{equation}
That is, $\mathcal{D}_{\textrm{sep}}^N$ contains states that can be written as a finite convex combinations of product states, with at most $N$ positive prefactors $p_i$'s. In fact, in a $D$-dimensional real convex set, any interior points can be written as a convex combination of finitely many points in the generating set of the convex hull.

\begin{thrm}[Carath{\'e}odory]\label{thrm:cara}
Let $P$ be a set in $\mathbb{R}^D$. Let $\Vec{x}$ be a point inside $\mathrm{Conv}(P)$. Then $\Vec{x}$ can be written as a convex combination of at most $D+1$ points in $P$.
\end{thrm}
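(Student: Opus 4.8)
The plan is to start from \emph{any} finite representation of $\vec{x}$ as a convex combination of points of $P$ and then iteratively reduce the number of terms until at most $D+1$ remain. Since $\vec{x}\in\mathrm{Conv}(P)$, there exist an integer $k$, points $p_1,\ldots,p_k\in P$ and weights $\lambda_1,\ldots,\lambda_k>0$ with $\sum_{i=1}^k\lambda_i=1$ and $\sum_{i=1}^k\lambda_i p_i=\vec{x}$. I would choose such a representation with $k$ \emph{minimal} and then argue by contradiction that $k\le D+1$.

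Suppose $k\ge D+2$. Then the $k-1\ge D+1$ difference vectors $p_2-p_1,\ldots,p_k-p_1$ lie in $\mathbb{R}^D$ and are therefore linearly dependent, so there are scalars $\mu_2,\ldots,\mu_k$, not all zero, with $\sum_{i=2}^k\mu_i(p_i-p_1)=0$. Setting $\mu_1:=-\sum_{i=2}^k\mu_i$ produces coefficients $\mu_1,\ldots,\mu_k$, not all zero, obeying the two homogeneous relations $\sum_{i=1}^k\mu_i p_i=0$ and $\sum_{i=1}^k\mu_i=0$. Because the $\mu_i$ sum to zero yet are not all zero, at least one of them is strictly positive.

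Next I would deform the weights: for $t\ge 0$ set $\lambda_i(t):=\lambda_i-t\mu_i$. The two homogeneous relations give $\sum_i\lambda_i(t)=1$ and $\sum_i\lambda_i(t)p_i=\vec{x}$ for every $t$. At $t=0$ all weights are positive, so they stay nonnegative on a maximal interval $[0,t^\ast]$ with $t^\ast=\min\{\lambda_i/\mu_i:\mu_i>0\}$, which is finite and positive since some $\mu_i>0$. At $t=t^\ast$ the weights $\lambda_i(t^\ast)$ are still nonnegative and still sum to $1$, and at least one of them equals $0$. Discarding the vanishing term(s) yields a convex combination of at most $k-1$ points of $P$ equal to $\vec{x}$, contradicting minimality of $k$. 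Hence $k\le D+1$, which is the assertion.

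The only delicate point — and thus the main obstacle — is the deformation step: one must confirm that $t^\ast$ is well defined (some $\mu_i>0$, which follows from $\sum_i\mu_i=0$ together with the $\mu_i$ not all zero), that the deformed weights remain nonnegative with unit sum, and that at least one weight is genuinely driven to zero so that the term count strictly drops. Everything else is elementary linear algebra. I would also add a remark that the phrase ``inside $\mathrm{Conv}(P)$'' is immaterial here: the argument applies verbatim to every point of $\mathrm{Conv}(P)$ and requires no finiteness assumption on $P$.
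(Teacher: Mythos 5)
Your proof is correct: it is the standard dimension-reduction argument for Carath\'eodory's theorem (take a minimal representation, use the affine dependence of $k-1\ge D+1$ difference vectors to build coefficients $\mu_i$ with $\sum_i\mu_i=0$ and $\sum_i\mu_i p_i=0$, and slide the weights until one vanishes), and the delicate points you flag — that some $\mu_i>0$ so $t^\ast$ is finite, that nonnegativity and the unit sum are preserved, and that at least one weight is driven to zero — are all handled correctly. The paper simply states this classical theorem without proof, so there is no in-paper argument to compare against; your write-up fully fills that gap.
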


We apply Theorem \ref{thrm:cara} to the case of quantum states. A general Hermitian matrix of $d\times d$ dimension can be expressed as a real vector in a basis of $D^2$ orthogonal general Gell-Mann matrices $\{G_i\}_{i=0}^{D^2-1}$, with one trace-ful element being the identity matrix, and $D^2-1$ trace-less elements. Then the space of Hermitian operators acting on a Hilbert space of dimension $d$ is isomorphic to $\mathbb{R}^{D^2}$. The set of density matrices is $(D^2-1)$-dimensional, due to the equality constraint of trace,
\begin{equation}
    \rho = \frac{\mathbbm{1}}{d} + \sum_{i=1}^{d^2-1}p_i G_i, \quad 
\end{equation}
For a fixed partition, let $P = \mathcal{D}_0$ be the set of uncorrelated states, i.e. the product states
\begin{equation}
    \mathcal{D}_0 = \{ \sigma \, | \, \sigma = \sigma_A \otimes \sigma_B  \}.
\end{equation}
The set of separable states is $\mathcal{D}_{\textrm{sep}}=\mathrm{Conv}(\mathcal{D}_0)$. We now wish to find the closest separable state $\sigma \in \mathcal{D}_{\textrm{sep}}$ to an entangled state $\rho$. Then by Theorem \ref{thrm:cara} and our observations from above, $\sigma$ can be written as a convex combination of at most $D^2$ product states from $\mathcal{D}_0$.

In fact, we can even improve this further, if the entangled state $\rho$ is real in some reference tensor product basis.

\begin{thrm} \label{thrm:real}
If $\rho$ is real in a reference tensor product basis, then its closest separable state measured by the quantum relative entropy is also real in said basis.
\end{thrm}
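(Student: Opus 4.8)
The plan is to treat complex conjugation in the reference tensor product basis as an (anti-unitary) symmetry and symmetrize the minimizer over the two-element group $\{\mathbbm{1},\mathcal{K}\}$, in complete analogy with the twirling argument of Theorem~\ref{thrm:symm} but with an anti-unitary in place of a unitary. Let $\mathcal{K}(X)=\overline{X}$ denote entrywise complex conjugation in the fixed product basis $\{\,|i\rangle_A\otimes|j\rangle_B\,\}$. First I would record three elementary facts about $\mathcal{K}$: (i) it is additive, commutes with multiplication by real scalars, is involutive, trace preserving, positivity preserving, and hence maps density operators to density operators; (ii) since the basis is a product basis, $\mathcal{K}$ acts locally, $\overline{\sigma_A\otimes\sigma_B}=\overline{\sigma_A}\otimes\overline{\sigma_B}$, so by (i) and the convex-combination form of Definition~\ref{def:sep} one has $\mathcal{K}(\mathcal{D}_\text{sep})\subseteq\mathcal{D}_\text{sep}$; (iii) if $X=\sum_k x_k|v_k\rangle\langle v_k|$ is the spectral decomposition of a positive operator, then $\overline{X}=\sum_k x_k|\overline{v_k}\rangle\langle\overline{v_k}|$, whence $\log\overline{X}=\overline{\log X}$ on the support of $X$, and $\ker\overline{X}=\overline{\ker X}$.

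Next I would use (iii) together with $\overline{AB}=\overline{A}\,\overline{B}$ and $\Tr[\overline{A}]=\overline{\Tr[A]}$ to check that the quantum relative entropy is conjugation invariant:
\[
S(\overline{\rho}\,\|\,\overline{\sigma})=\Tr[\overline{\rho}(\log\overline{\rho}-\log\overline{\sigma})]=\overline{\Tr[\rho(\log\rho-\log\sigma)]}=\overline{S(\rho\|\sigma)}=S(\rho\|\sigma),
\]
the last equality because $S(\rho\|\sigma)$ is real and non-negative. By (iii) the support condition $\ker\overline{\sigma}\subseteq\ker\overline{\rho}$ needed for finiteness is just the conjugate of $\ker\sigma\subseteq\ker\rho$, so no finiteness issue is introduced.

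Now suppose $\rho$ is real in the reference basis, i.e. $\overline{\rho}=\rho$, and let $\sigma^\ast\in\mathcal{D}_\text{sep}$ attain $E(\rho)=\min_{\sigma\in\mathcal{D}_\text{sep}}S(\rho\|\sigma)=S(\rho\|\sigma^\ast)$. By fact (ii), $\overline{\sigma^\ast}\in\mathcal{D}_\text{sep}$, and by the invariance just shown $S(\rho\|\overline{\sigma^\ast})=S(\overline{\rho}\,\|\,\overline{\sigma^\ast})=S(\rho\|\sigma^\ast)=E(\rho)$, so $\overline{\sigma^\ast}$ is also a minimizer. Since $\mathcal{D}_\text{sep}$ is convex and $\sigma\mapsto S(\rho\|\sigma)$ is convex, the set of minimizers is convex, so $\tfrac12(\sigma^\ast+\overline{\sigma^\ast})$ is again a separable minimizer; it is manifestly fixed by $\mathcal{K}$, hence real. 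Thus there always exists a real closest separable state, and if the minimizer is unique --- which holds for instance whenever $\rho$ has full rank, since then $\sigma\mapsto-\Tr[\rho\log\sigma]$ is strictly convex --- it must equal its own conjugate and therefore be real.

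The only genuinely delicate point is the bookkeeping around the matrix logarithm: one must make sure $\log\overline{\sigma}$ is handled on the correct support and that the kernel inclusion guaranteeing $S(\rho\|\sigma)<\infty$ is stable under $\mathcal{K}$. Both follow immediately from fact (iii), so this is not a real obstacle; the remaining content of the proof is the standard observation that a convex function minimized over a convex set invariant under a symmetry attains its minimum at a symmetric point.
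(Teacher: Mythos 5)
Your proposal is correct and follows essentially the same route as the paper: show that conjugating the minimizer yields another separable minimizer (the paper phrases the invariance $S(\rho\|\overline{\sigma})=S(\rho\|\sigma)$ via transposition, using $\overline{\sigma}=\sigma^T$ for Hermitian $\sigma$, while you argue directly with the anti-linearity of entrywise conjugation), and then average $\sigma^\ast$ with $\overline{\sigma^\ast}$ using convexity of the relative entropy in its second argument. Your version is in fact slightly more complete, since you explicitly verify that conjugation preserves $\mathcal{D}_\text{sep}$ and address the support/finiteness bookkeeping, both of which the paper leaves implicit.
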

\begin{proof}
If $\sigma$ and $\overline{\sigma}$ are both closest separable states to $\rho$, then by convexity of the relative entropy $\text{Re}(\sigma) = (\sigma + \overline{\sigma})/2$ is real and also a closest separable state. Since $\sigma^T = \overline{\sigma}$ for all quantum states $\sigma$, it then suffices to prove that $S(\rho||\sigma) = S(\rho||\sigma^T)$. Then the statement follows the realness of $\rho$ and convexity of the relative entropy on the second argument.
\begin{equation}
    \begin{split}
        S(\rho||\sigma^T) - S(\rho||\sigma) &= \Tr[\rho\log(\sigma)] -  \Tr[\rho^T\log(\sigma^T)]
        \\
        &= \Tr[\rho \log(\sigma)] - \Tr[(\log(\sigma))^T \rho^T]
        \\
        &= 0.
    \end{split}
\end{equation}
In the second equality we used the cyclicity of trace and $\log(\sigma^T) = (\log(\sigma))^T$. In the third equality we used the transposition invariance of trace.
\end{proof}
By Theorem \ref{thrm:real} if $\rho$ is real we can reduce the domain of search to the set of real separable density matrices $\mathcal{R}$ by replacing $P = \mathcal{D}_0$ to $P = \mathcal{D}_0 \cap \mathcal{R}$. In $D$ dimension there are $D(D-1)/2$ imaginary Gell-Mann matrices. Therefore the space of real (in a fixed reference basis) positive matrices is $D(D+1)/2$ dimensional, and the closest separable state $\sigma$ to a real entangled state $\rho$ can be written as a convex combination of at most $D(D+1)/2$ real product states.

The original objective is as follows
\begin{equation}
    \begin{split}
        \textrm{Minimize} \quad & S(\rho||\sigma) = \Tr[\rho(\log(\rho)-\log(\sigma))]
        \\
        \textrm{with }\sigma \textrm{ subject to} \quad & \sigma = \sum_{i} p_i \sigma_A^{(i)} \otimes \sigma_B^{(i)}, \quad p_i\geq 0 \textrm{ and } \sum_{i}p_i = 1
    \end{split}
\end{equation}
However, this constraint on $\sigma$ cannot be directly incorporated into the scheme of semidefinite programming as the constraints are quadratic in the matrix variables due to the tensor product operation. Some efforts have been made by relaxing the set of seperable states $\mathcal{D}_{\textrm{sep}}$ to a larger superset $\mathcal{D}_{\textrm{PPT}}$, namely the positive partial transpose (PPT) states\cite{fawzi2018efficient}. This can be conveniently formulated as semidefinite constraints
\begin{equation}
\begin{split}
    \textrm{Minimize} \quad & S(\rho||\sigma) = \Tr[\rho(\log(\rho)-\log(\sigma))]
        \\
        \textrm{with }\sigma \textrm{ subject to} \quad & \sigma \geq 0 \quad \textrm{and} \quad \sigma^{T_B} \geq 0.
\end{split}
\end{equation}
Due to the existence of entangled PPT states\cite{horodecki1997separability,bennett1999unextendible} in dimensions $D > 6$, the quantity
\begin{equation}
    E_{\textrm{PPT}}(\rho) = \min_{\sigma \in \mathcal{D}_{\textrm{PPT}}} S(\rho||\sigma)
\end{equation}
is in general a lower-bound for the exact relative entropy of entanglement $E_{\textrm{RE}}$ (in this section we use the notation $E_\textrm{RE}$ for the relative entropy of entanglement to avoid confusion. The subscript is dropped in other sections). It is worth noting that $E_{\textrm{RE}} = E_{\textrm{PPT}}$ in the case $D = 2\times2$ and $D=2\times3$\cite{horodecki1997separability,peres1996separability}.

We propose the following optimization scheme for approximating $E_{\textrm{RE}}$.

\begin{algorithm}[H] \label{alg:zigzag}
\SetAlgoLined
\KwResult{Find the closest separable state $\sigma^\ast$}
initialize $\{\sigma_{A/B}^{(i)}\}_{i=1}^{D^2}$ as positive matrices

\While{$\textrm{improvement}>\textrm{tolerance}$}{
minimize $S(\rho||\sum_{i=1}^{D^2}\sigma_A^{(i)}\otimes\sigma_B^{(i)})$ w.r.t. $\{\sigma_B^{(i)}\}_{i=1}^{D^2}$  with $\sigma_B^{(i)}\geq0$ and $\Tr[\sum_{i=1}^{D^2}\sigma_A^{(i)}\otimes\sigma_B^{(i)}]=1$;
\\
minimize $S(\rho||\sum_{i=1}^{D^2}\sigma_A^{(i)}\otimes\sigma_B^{(i)})$ w.r.t. $\{\sigma_A^{(i)}\}_{i=1}^{D^2}$  with $\sigma_A^{(i)}\geq0$ and $\Tr[\sum_{i=1}^{D^2}\sigma_A^{(i)}\otimes\sigma_B^{(i)}]=1$;
 }
 $\sigma^\ast \leftarrow \sum_{i=1}^{D^2}\sigma_A^{(i)}\otimes\sigma_B^{(i)}$
 \caption{Minimizing $S(\rho||\sigma)$ w.r.t. $\sigma$ for a fixed $\rho$}
\end{algorithm}

In theory, Theorem \ref{thrm:cara} provides the argument that the closest separable state $\sigma$ can be (in principle exactly) found as such a finite convex combination. In practice the result of Algorithm \ref{alg:zigzag} is an upper-bound for the exact $E_{\textrm{RE}}$, as any $\sigma^\ast$ it finds (exact minimizer or not) is by construction a separable state. Since each minimization step can be cast as a semidefinite program, this algorithm can be realized efficiently. Typically the search for the closest separable state stops when no improvement of the relative entropy can be made beyond the set tolerance of accuracy, and the algorithm returns a local minimum. To approximate the global minimum we choose a range of random starting points.

The first thing we would like to check is whether Algorithm \ref{alg:zigzag} reproduces the known boundary of $\mathcal{D}_{\textrm{sep}}$ in the case of two qubits. We consider the following states
\begin{equation}
    \rho_p = p \frac{\mathbbm{1}}{4} + (1-p) |\Psi_+\rangle\langle \Psi_+|,\quad p\in[0,1],
\end{equation}
where $|\Psi_+\rangle$ is one of the Bell states $|\Psi_+\rangle = (|00\rangle+|11\rangle)/\sqrt{2}$. As $p$ goes from $0$ to $1$, $\rho_p$ morphs from the maximally entangled state $|\Psi_+\rangle\langle\Psi_+|$ to the maximally mixed state $\mathbbm{1}/4$, and it should cross the boundary of $\mathcal{D}_{\textrm{sep}}$ at some value of $p$. At that point the entanglement of $\rho_p$ drops to zero.

\begin{figure}[ht]
    \centering
    \includegraphics[scale=0.4]{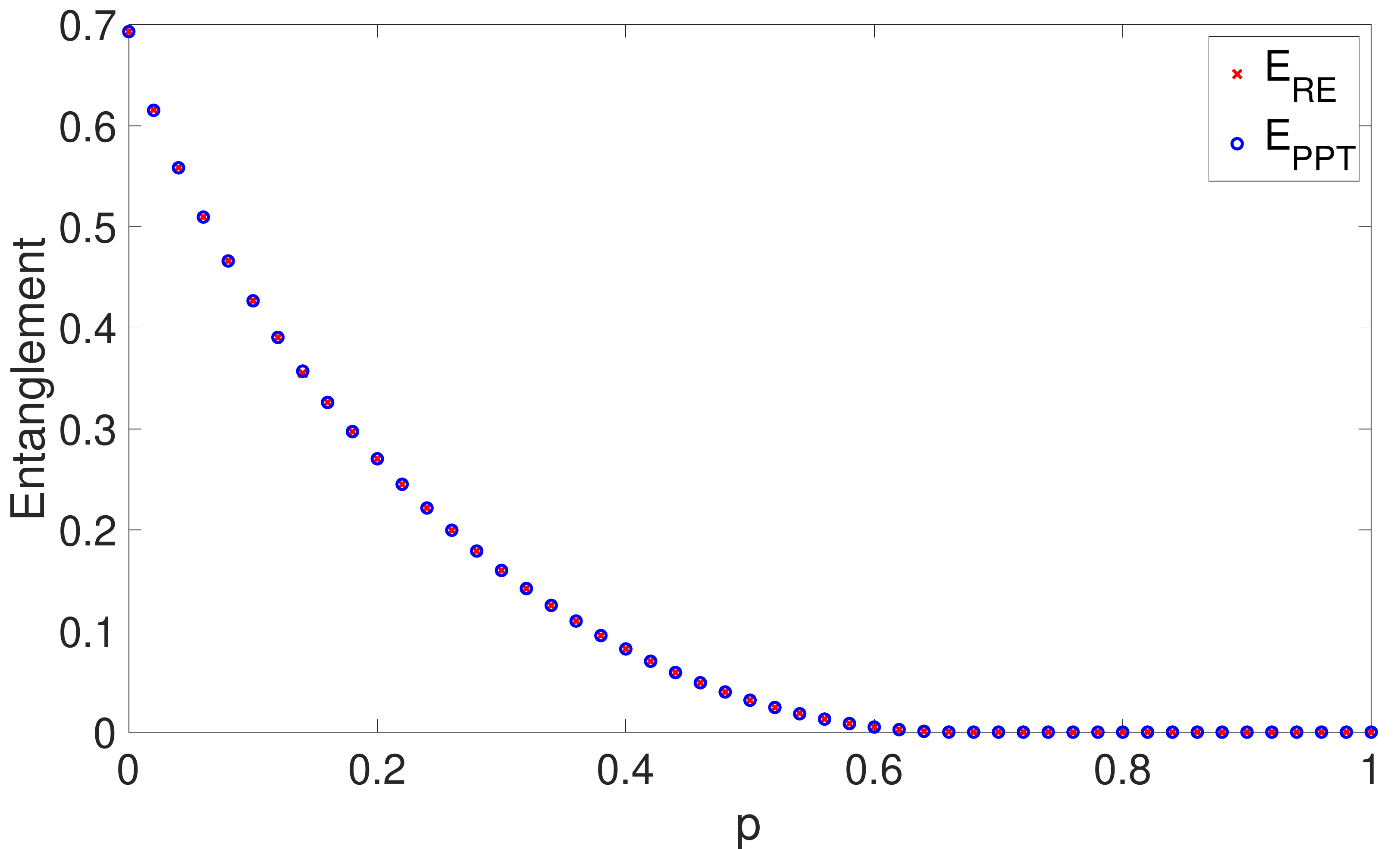}
    \caption{$E_{\textrm{RE}}(\rho_p)$ and $E_{\textrm{PPT}}(\rho_p)$ as functions of $p$.}
    \label{fig:EntREvsEntPPT}
\end{figure}

In Figure \ref{fig:EntREvsEntPPT} we plotted $E_{\textrm{RE}}(\rho_p)$ calculated using Algorithm \ref{alg:zigzag} and $E_{\textrm{PPT}}(\rho_p)$ obtained by semidefinite optimization against the parameter $p$. As we can see the two quantities match perfectly, and both drop to zero from $p = 0.68$ on (with minimum spacing $0.02$).

Now we turn to a certain family of entangled PPT states, the Horodecki bound entangled states\cite{horodecki1997separability} in $3\times 3$ dimension.
\begin{equation}
    \rho_a = \frac{1}{8a+1} \begin{pmatrix} a & 0 & 0 & 0 & a & 0 & 0 & 0 & a \\ 0 & a & 0 & 0 & 0 & 0 & 0 & 0 & 0 \\ 0 & 0 & a & 0 & 0 & 0 & a & 0 & 0 \\ 0 & 0 & 0 & a & 0 & 0 & 0 & 0 & 0 \\ a & 0 & 0 & 0 & a & 0 & 0 & 0 & a \\ 0 & 0 & 0 & 0 & 0 & a & 0 & 0 & 0 \\ 0 & 0 & a & 0 & 0 & 0 & \frac{1+a}{2} & 0 & \frac{\sqrt{1-a^2}}{2} \\ 0 & 0 & 0 & 0 & 0 & 0 & 0 & a & 0 \\ a & 0 & 0 & 0 & a & 0 & \frac{\sqrt{1-a^2}}{2} & 0 & \frac{1+a}{2} \end{pmatrix}, \quad a \in [0,1].
\end{equation}
This family of density matrices are constructed to be both PPT and entangled. Therefore $E_{\textrm{PPT}}(\rho_a)=0$ for all $a \in [0,1]$, yet their entanglement can be revealed by faithful measures such as $E_{\textrm{RE}}$. It was shown by approximations of other measures\cite{audenaert2001variational,chen2002matrix} that entanglement is maximal when $a$ is somewhere between $0.225$ and $0.236$.

\begin{figure}[ht]
    \centering
    \includegraphics[scale=0.4]{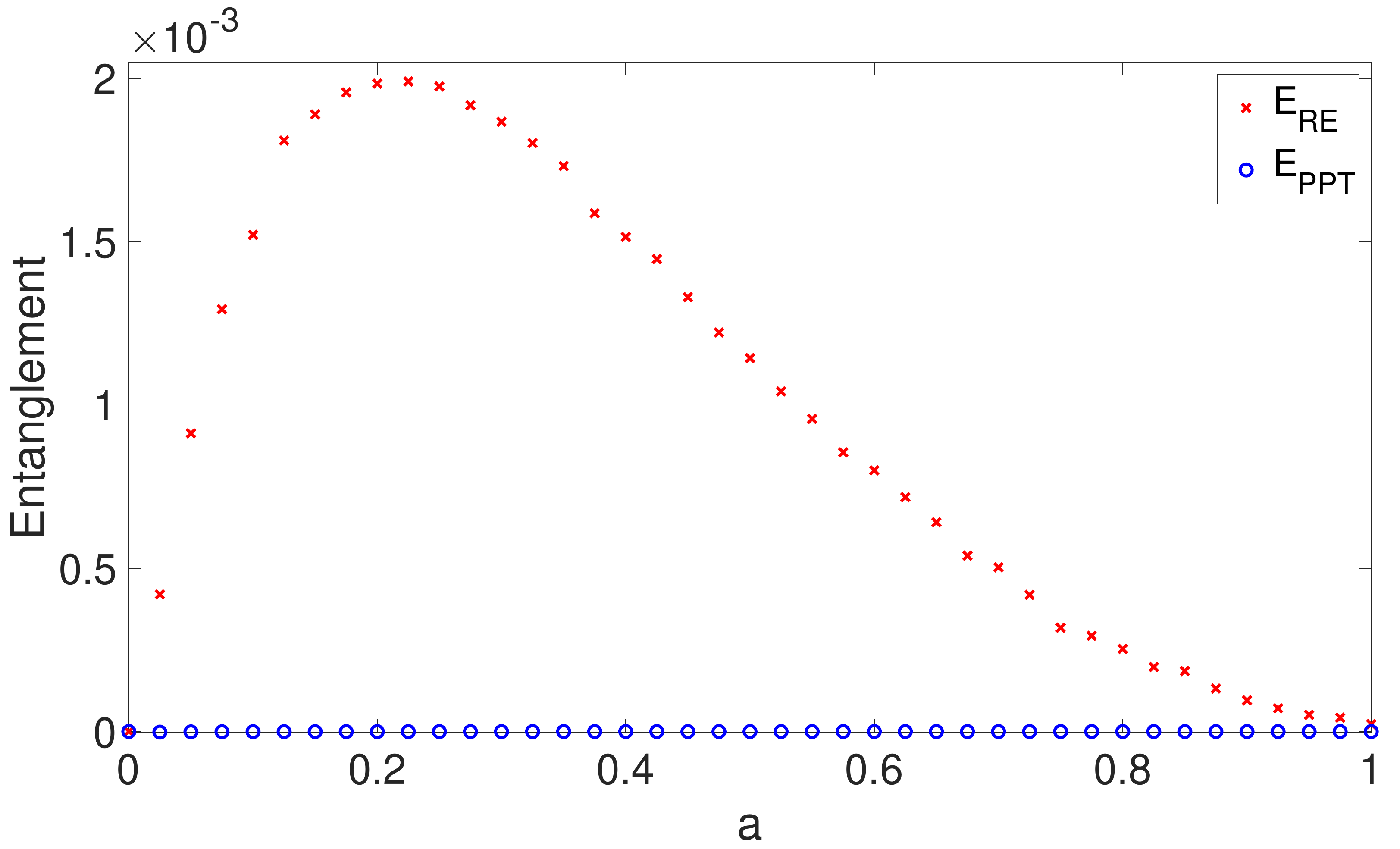}
    \caption{$E_{\textrm{RE}}(\rho_a)$ and $E_{\textrm{PPT}}(\rho_a)$ as functions of $a$.}
    \label{fig:BoundEnt}
\end{figure}

In Figure \ref{fig:BoundEnt} we plotted $E_{\textrm{RE}}$ and $E_{\textrm{PPT}}$ of the Horodecki bound entangled states. We can see that for $a \in (0,1)$ $E_{\textrm{RE}}$ detected finite bound entanglement. Although there are no reference for quantitative comparison, the shape of the curve and location of maximum agree with other findings\cite{audenaert2001variational,chen2002matrix} with different entanglement measures. Maximum occurs at $a=0.225$ (minimum spacing $0.025$), standing at $E_{\textrm{RE}} = 1.99\times10^{-3}$ ($1.81\times10^{-3}$ for normalised $E_\textrm{RE}$). On one hand previous findings showed that the entanglement of formation has a peak at about $E_{\textrm{F}} = 0.0109$\cite{audenaert2001variational}. On the other hand since $\rho_a$'s are PPT states they have zero distillable entanglement\cite{peres1996separability,horodecki1998mixed}, i.e. $E_{\textrm{D}} = 0$. The exact values of the normalised relative entropy of entanglement lie in between $E_{\textrm{D}}$ and $E_{\textrm{F}}$ (assuming the conjecture $E_F = E_C$ is correct)\cite{horodecki2000limits}. The fact that such curve can be obtained shows strong promise that Algorithm \ref{alg:zigzag} can quantitatively discern PPT entanglement measured by the relative entropy of entanglement with accuracy at least of order $10^{-4}$.

\chapter{Applications} \label{chap:App}

Having established proper notions of fermionic particle and mode correlation/entanglement in Chapter \ref{chap:fermion} and derived analytic formula for quantifying mode entanglement in Chapter \ref{chap:quantifying}, we are ready to apply these results to concrete systems. In Section \ref{sec:diss} (containing contents published on \textit{Journal of Chemical Theory and Computation} \cite{ding2020correlation}) we investigate the correlation paradox of the dissociation limit. In Section \ref{sec:qchem} (containing contents published on \textit{Journal of Chemical Theory and Computation} \cite{ding2020concept}) we apply our results to molecules, and study various type of correlations between molecular orbitals. 

\section{Correlation Paradox of the Dissociation Limit} \label{sec:diss}

Let us first recall the correlation paradox from a more general point of view. Whenever identical fermions do not interact, solving the $N$-particle Schr\"odinger equation simplifies to an effective one-fermion problem. Indeed, for any Hamiltonian $H\equiv \sum_{i,j=1}^d h_{ij} f_i^\dagger f_j $ one just needs to diagonalize the Hermitian matrix $(h_{ij})$, leading to $H \equiv \sum_{\alpha=1}^d h_\alpha \hat{n}_\alpha$ with some one-particle solutions $\ket{\alpha}$. The respective $N$-fermion eigenstates follow as configuration states $\ket{\alpha_1,\ldots,\alpha_N}\equiv \ket{\alpha_1}\wedge \ldots \wedge\ket{\alpha_N}$ obtained by distributing the $N$ fermions into $N$ different spin-orbitals $\ket{\alpha_i}$. Having said this, how can a non-degenerate fermionic ground state not take the form of a single configuration state in a limit process which marginalizes the interaction between the fermions? The existence of exactly such processes can be seen as paradoxical in that sense.

The dissociation limit of molecules often gives rise to such paradoxical situations which play an important role in the context of the general electron correlation problem\cite{Low58,Pop76,Bart78,Pop87,Mazz12}.
They are of course well-understood in quantum chemistry, in particular on a qualitative level. For instance, it is rather obvious that those paradoxical situations require the closing of the excitation gap $\Delta E(r)$ and at the limit $r\rightarrow \infty$ the system needs to have several configuration states as degenerate ground states. For very large but not infinite separation distances $r$ between the nuclei, those configurations are then typically superposed to form the non-degenerate correlated ground state. From the most rudimentary point of view, the paradox could therefore be resolved by just referring to the excitation gap $\Delta E(r)$ which reduces to zero at least as fast as the electron-electron interaction energy vanishes.

Yet, there is more to be said. For instance, why would one like to construct a measure for correlation\cite{Ziesche97} which vanishes for the dissociated hydrogen ground state
\begin{equation}\label{PsiH2inf}
    |\psi\rangle = \frac{1}{\sqrt{2}}(f^\dagger_{L\uparrow} f^\dagger_{R\downarrow} - f^\dagger_{L\downarrow} f^\dagger_{R\uparrow})|\Omega\rangle
\end{equation}
where $f^\dagger_{L/R \sigma}$ are creation operators associated with mode on the left/right nuclei with spin $\sigma$ and $|\Omega\rangle$ the vacuum state, despite the fact that the latter cannot be written as a single configuration state? It seems to us that there are partly self-contradicting definitions in place for what ``correlation'' actually might or should be:
On the one hand, a state is considered as being ``uncorrelated'' if it takes the form of as a single configuration state. On the other hand, one observes that the electron-electron interaction vanishes in the dissociation limit despite the fact that the ground state is not a configuration state. This apparent contradiction is based on a confusion between the notion of total correlation and the concept of correlation functions. Furthermore, how does the dissociated ground state \eqref{PsiH2inf} compare to the uncorrelated degenerate configuration states emerging at the limit $r\rightarrow \infty$ in terms of its robustness to perturbations? We will provide an answer to the latter question in Section \ref{sec:resol}. To be more specific, we illustrate and prove that thermal noise due to finite, possibly even just infinitesimally low, temperature $T$ will destroy the quantum correlations beyond a critical separation distance $r_{\mathrm{crit}}$($T$) entirely.
This rationalizes that ``correlation'' vanishes in the dissociated ground state in the sense that this perception is correct provided the presence of some (possibly infinitesimally low) temperature $T>0$. These considerations which are made precise in Section \ref{sec:resol}
reveal a conceptually new characterization of static and dynamic correlation in ground states by relating them
to the (non)robustness of correlation with respect to thermal noise.

\subsection{The Dissociated Hydrogen Molecule}\label{sec:dimer}

From a general point of view, the realization of the dissociation limit of the hydrogen molecule (or any other molecular system) in the laboratory requires the coupling of the molecule to another system. To present our theoretical argument on the (in)stability of correlation/entanglement with respect to thermal noise in the cleanest fashion we consider an experimental procedure which accesses the nuclei directly to moves them apart. In that sense, it also freezes the nuclear (vibrational) degrees of freedom and the Born-Oppenheimer approximation with a separation distance $r$ of both nuclei will be assumed. To discuss such realizations of the dissociation limit of the hydrogen molecule we thus begin with the electronic Hamiltonian, i.e., we consider two interacting electrons in the Coulomb potential generated by two nuclei separated by a distance $r$. Choosing large basis sets of atomic orbitals centered around both nuclei would allow one to obtain highly accurate descriptions of the behavior and the properties of the hydrogen molecule. Yet, in our case we restrict ourselves to very low temperatures and thus only the $1s$ orbital needs to be taken into account for capturing the main effects. This is due to the fact that the energy difference between the 1s and the higher shells significantly exceeds the energy scale of the electron-electron energy in atoms. After all, this approximation is getting exact in the limit of arbitrarily large separation distances $r$
since then the two electrons are getting arbitrarily far separated (and in particular the probability of finding them at the same nucleus tends to zero).

As a consequence, we can study the most relevant aspects of the dissociation limit of the hydrogen molecule in the Hubbard dimer model\cite{alvarez2001hubbard,chiappe2007hubbard}. This (and after all our initial choice of a simple two-electron system) will allow us to illustrate all relevant quantum information theoretical aspects without getting deflected by highly involved descriptions of correlated ground states.
From a general point of view, the Hubbard dimer is one of the simplest models for interacting fermions, while already exhibiting rich physical properties. It consists of two lattice sites ($L$ and $R$) corresponding to the 1s orbitals centered at both nuclei and the underlying Hamiltonian takes the form
\begin{equation}\label{Hubbard}
    H = - t\!\sum_{\sigma = \uparrow, \downarrow} (f^\dagger_{L\sigma} f_{R\sigma } + f^\dagger_{R\sigma} f_{L\sigma} ) + U \sum_{i=L,R}\hat{n}_{i\uparrow}\hat{n}_{i\downarrow}.
\end{equation}
Here, $t\geq 0$ describes the hopping between the both nuclei/sites, $U>0$ represents the on-site repulsion (originating from the Coulomb interaction between two electrons in a 1s shell) and $\hat{n}_{L/R}$ denotes the particle number operator at the left/right site. Since the eigenstates of \eqref{Hubbard} depend only on the ratio $t/U$  we set in the following $U\equiv 1$. Moreover, the hopping $t$ decays exponentially as function of $r$, in agreement with the overlap of two 1s atomic orbitals separated by a distance $r$\cite{krauss1965interaction}. Depending on the context, we will choose in the following either $r$ or $t=e^{-r}$ as the parameter of the system.

\subsection{Exact Diagonalisation of the Hubbard Dimer}

To diagonalize the Hamiltonian \eqref{Hubbard} of the Hubbard dimer it is instructive to exploit
its spin symmetries and the reflection symmetry $L \leftrightarrow R$. Those manifest themselves in the form of the total spin $S$, the magnetization $M$ along the $z$-axis and the refection parity $p$ as good quantum numbers. The corresponding eigenvalue problem decouples according to
\begin{equation}
H = \bigoplus_{S=0}^1 \bigoplus_{M=-S}^S \bigoplus_{p=\pm} H_{S,M,p}.
\end{equation}
As a matter of fact, this almost completes the diagonalization and it remains to diagonalize $H_{0,0,-}$ on the corresponding two-dimensional space $\mathcal{H}_{0,0,-}$. The details of those calculations are presented in Appendix \ref{spectrum} and we just present here the well-known results for the six eigenenergies \cite{hasegawa2011thermal} (with $U\equiv 1$)
\begin{eqnarray}\label{dimerspectrum}
&&E_0 = \frac{1}{2} - \sqrt{\frac{1}{4}+4t^2}\,, \quad E_1 = E_2 = E_3 = 0, \nonumber \\
&&E_4 = 1, \quad E_5 = \frac{1}{2} + \sqrt{\frac{1}{4}+4t^2}.
\end{eqnarray}
It is crucial to notice that the ground state is always non-degenerate and the first excited energy corresponds to the threefold degenerate  triplet states. The energy spectrum \eqref{dimerspectrum} is also shown in Figure \ref{HubbardSpectrum} (recall $t = \exp(-r)$) and the corresponding six eigenstates are listed in Appendix \ref{spectrum}.
\begin{figure}[h!]
    \centering
    \includegraphics[scale=0.35]{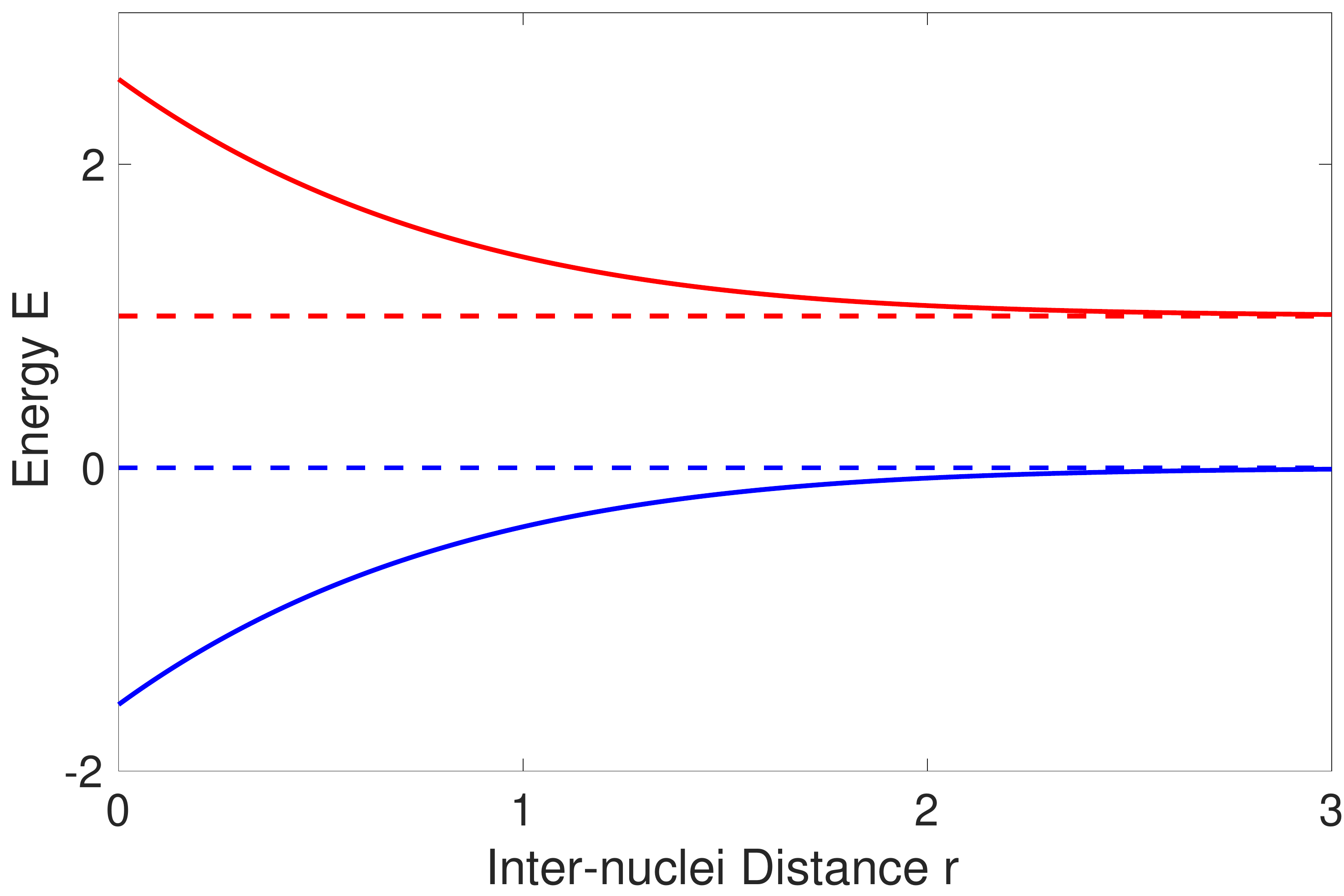}
    \caption{Energy spectrum \eqref{dimerspectrum} of the Hubbard dimer \eqref{Hubbard} in dimensionless units ($U$ and the Bohr radius are set to one). All energy levels are non-degenerate except for the first excited level (blue dashed), where three triplet states reside.}
    \label{HubbardSpectrum}
\end{figure}
In particular this confirms the closing of the excitation gap for $r\rightarrow \infty$, as described by $\Delta E(r) \sim 4 t^2=4 e^{-2 r}$.

At temperature \(T=0\) the system takes the energetically favorable ground state,
\begin{eqnarray}\label{ground}
    |\Psi_0(r)\rangle &=& \frac{a(r)}{\sqrt{2}}\big(f^\dagger_{L\uparrow}f^\dagger_{R\downarrow} -  f^\dagger_{L\downarrow} f^\dagger_{R\uparrow}\big)\ket{0} \nonumber  + \frac{b(r)}{\sqrt{2}} \big(f^\dagger_{L\uparrow}f^\dagger_{L\downarrow} -  f^\dagger_{R\downarrow} f^\dagger_{R\uparrow}\big) \ket{0},
\end{eqnarray}
where the coefficients $a(r)$ and $b(r)$ are functions of the inter-nuclear distance $r$ (explicit expressions can be found in Appendix \ref{spectrum}). In particular, one has $a(r)=\sqrt{1-b^2(r)}$ and $b(r)\sim 2 \,t = 2 e^{-r}$ for $r \rightarrow \infty$. The latter confirms
that the probability of finding both electrons at the same site/nucleus tends to zero for large separation distances $r$ and small hopping rates $t$, respectively.
Consequently, at the limit \(r \rightarrow \infty\), the ground state follows indeed as
\begin{equation}
    |\Psi_0(r=\infty)\rangle = \frac{1}{\sqrt{2}}(f^\dagger_{L\uparrow}f^\dagger_{R\downarrow} - f^\dagger_{L\downarrow} f^\dagger_{R\uparrow})|0\rangle,
\end{equation}
which is \textit{not} a configuration state.

At finite temperature, the state of interest is the thermal Gibbs state (we set for simplicity $k_B \equiv 1$),
\begin{equation}
    \rho(T,r) = \frac{1}{Z(T,r)} e^{-H(r)/ T} , \label{Gibbs}
\end{equation}
where $Z(T,r)\equiv \Tr\left[e^{-H(r)/T}\right]$ is the partition function.
In addition to the standard Boltzmann-Gibbs statistics which we use here, there have also been proposals of other distributions for systems of non-extensive size\cite{salinas1999special,hasegawa2011thermal}. Although it is somewhat debatable to say which statistics is more appropriate for a small system like ours, we shall stick to the Boltzmann-Gibbs distribution and use the thermal equilibrium state as defined in Eq.~\eqref{Gibbs}.

\subsection{Resolution of the Correlation Paradox} \label{sec:resol}

\subsubsection{Mode Picture}

We are now in a position to calculate the mode correlation and mode entanglement in the Gibbs state in Eq.~\eqref{Gibbs} for all temperatures $T$ and all separation distances $r$. The respective results for the cases $T=0, 0.1$ are presented
in Figure \ref{fig:ModeCorr}.

\begin{figure}[H]
    \centering
    \includegraphics[scale=0.35]{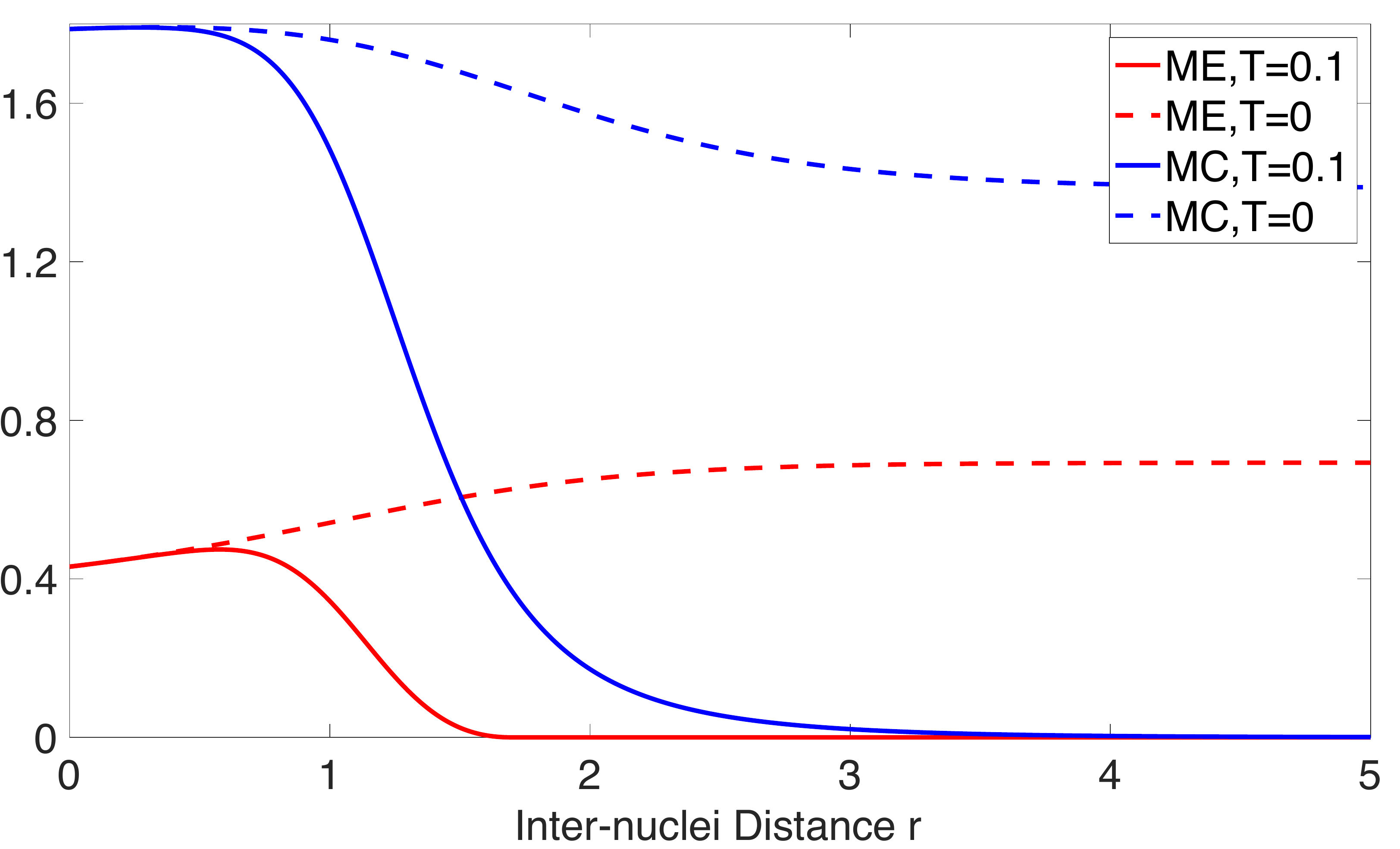}
    \caption{Mode correlation (MC, blue) and mode entanglement (ME, red) plotted for the Gibbs state (solid) at finite temperature $T=0.1$ and the ground state (dashed), the equilibrium state at $T=0$, with particle number superselection rule.}
    \label{fig:ModeCorr}
\end{figure}

First, we observe that both mode correlation and mode entanglement in the Gibbs state with $T=0.1$ and the ground state ($T=0$) coincide at smaller distances $r$. This is due to the fact that for small $r$ the energy gap $\Delta E(r)$ between the ground state and the first excited states is much larger than the thermal energy scale $k_B T$ such that both states essentially coincide (the contribution of the excited states to the Gibbs ensemble are exponentially suppressed according to Eq.~\eqref{Gibbs}).
Second, the presence of a correlation paradox is confirmed since the mode correlation (blue dashed) and mode entanglement (red dashed) of the ground state remain finite even at the dissociation limit. Third, for finite temperature, this is quite different. When the inter-site distance $r$ becomes larger, and the gap $\Delta E(r)$ between ground state and first excited state closes, both correlation (blue solid) and entanglement (red solid) at finite temperature start to deviate more from the ground state ones. They get smaller and smaller, and they are eventually completely wiped out at the dissociation limit. This asymptotic behavior at $r \rightarrow \infty$ is present at any finite temperature $T>0$, regardless of how small it is. In particular, this means that the mode correlation of the ground state is highly unstable against thermal noise, and finite mode entanglement or mode correlation at the dissociation limit can \textit{never} be observed in a laboratory.

Remarkably, in Figure \ref{fig:ModeCorr} the mode entanglement in the Gibbs state drops to zero already at a \emph{finite} distance, $r_{crit}^{(m)}(T=0.1)=1.70$, unlike the usual asymptotic behavior of correlation. In other words, for any temperature $T$, there exists a minimal distance $r_{\mathrm{crit}}^{(m)}(T)$ beyond which the mode entanglement vanishes entirely. Such a decaying  behavior of the entanglement, sometimes referred to as ``sudden death'', is not uncommon in quantum systems\cite{yu2009sudden}, and is a unique feature of quantum correlation. Fascinating as it is, this finite parameter point at which the entanglement vanishes is nothing mysterious if one considers the geometric picture as shown in Figure \ref{fig:states}: The Gibbs state $\rho(T,r)$ simply entered the convex set of separable states as the inter-nuclei distance $r$ increases. In fact, because of this, the point $r_{\mathrm{crit}}^{(m)}(T)$ is of course independent of the measure employed for quantifying the entanglement.
\begin{figure}[h!]
    \centering
    \includegraphics[scale=0.40]{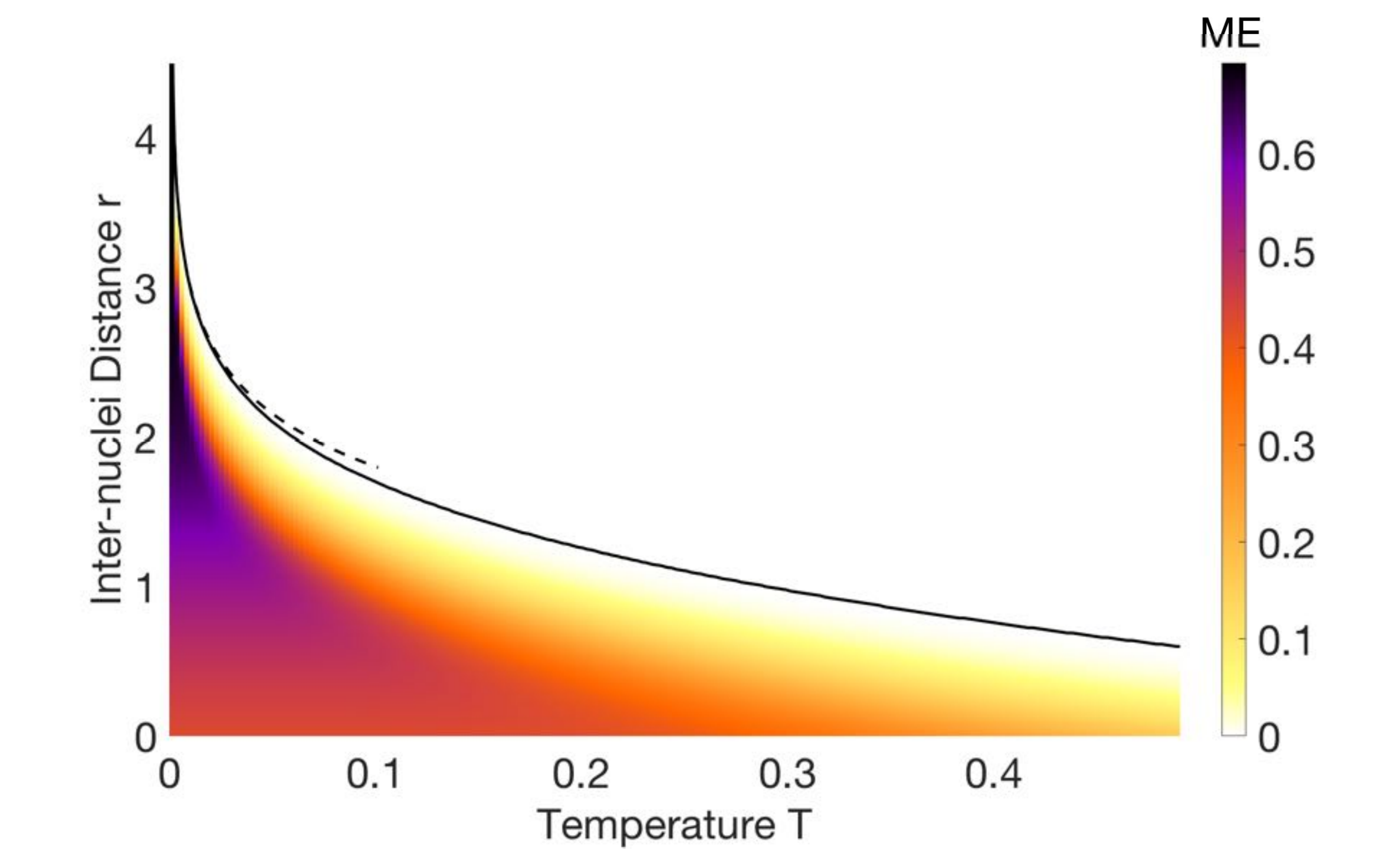}
    \caption{Mode entanglement (ME) as a function of temperature $T$ and inter-site distance $r$. It vanishes entirely above the black curve  $r_{\mathrm{crit}}^{(m)}$. The dashed line represents the asymptotic result \eqref{rm} for small $T$.}
    \label{fig:ME_sudden}
\end{figure}

To see how temperature affects this phenomenon, we present the mode entanglement in Figure \ref{fig:ME_sudden} as a function of the temperature $T$ and the inter-nuclei distance $r$. The critical distance $r_{\mathrm{crit}}^{(m)}(T)$ is shown as black curve.
For all parameter points $(T,r)$ above the black curve the mode entanglement vanishes, while it is finite for all points below it. As the temperature increases, the minimum distance $r_{\mathrm{crit}}^{(m)}(T)$ required to disentangle the left and right side becomes smaller. When $T \rightarrow 0$, the Gibbs state $\rho(T,r)$ approaches the ground state $\ket{\Psi_0}\!\bra{\Psi_0}$, and $r_{\mathrm{crit}}^{(m)}$ approaches infinity. In fact, the divergence of $r_{\mathrm{crit}}^{(m)}$ at small $T$ is logarithmic,
\begin{equation}
    r_{\mathrm{crit}}^{(m)}(T) = - \frac{1}{2} \log(T) + c_0 +c_1 T+\mathcal{O}(T^2), \quad T \rightarrow 0, \label{rm}
\end{equation}
where $c_0\equiv \log(2) - \frac{1}{2} \log(\log(3))$, $c_1\equiv -\frac{1}{2}(1+\log (3))$ are constants.
This asymptotic result is shown as dashed line in Figure \ref{fig:ME_sudden} and its derivation is included in Appendix \ref{divergence} for the interested readers.

\subsubsection{Particle Picture}

In the particle picture we recall that the analogues for correlation and entanglement are the nonfreeness and quantum nonfreeness, respectively. To determine the nonfreeness we just need to calculate the one-particle reduced density matrix of the state \eqref{Gibbs} and plug it into the formula \eqref{PartCorr}. To calculate the quantum nonfreeness we can resort to the analytic procedure outlined in Section \ref{sec:part} since our model consists indeed of two fermions and a four-dimensional one-particle Hilbert space. We also would like to recall that in contrast to the other measures employed in our work the respective measure \eqref{PartEnt} for the quantum nonfreeness is not of the form \eqref{eqn:rel_ent}. It namely does not involve the relative entropy as a distance function and is based on a so-called convex roof construction instead. Nonetheless, the used measure for the quantum nonfreeness quantifies how close a state is to the convex set $\mathcal{D}^{(p)}_{sep}$ given as the convex hull of single configuration states.

\begin{figure}[h!]
    \centering
    \includegraphics[scale=0.35]{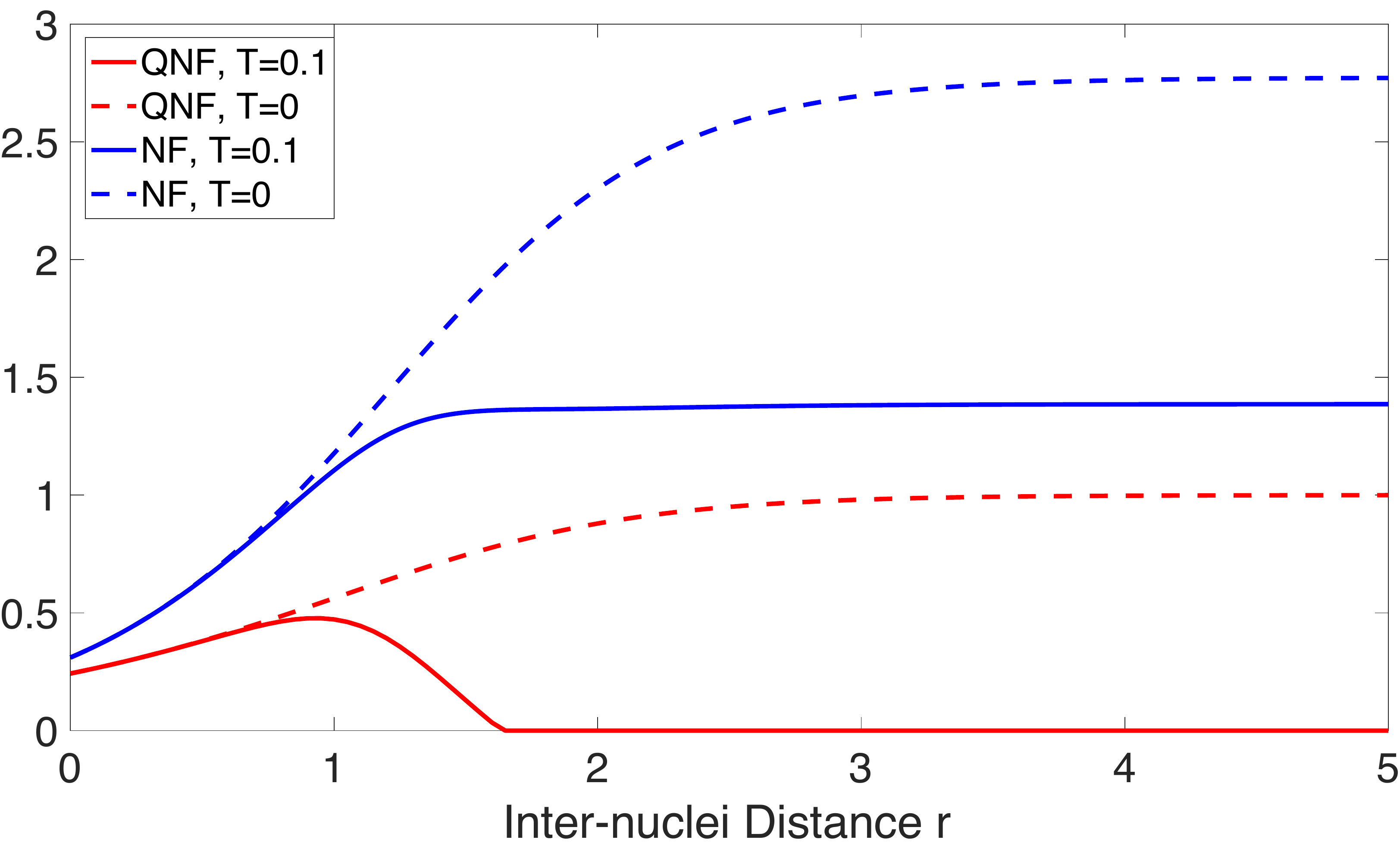}
    \caption{Nonfreeness (NF, blue) and quantum nonfreeness (QNF, red) as a function of inter-site distance. The correlation and entanglement in the ground state at zero temperature are also plotted (in dashed line).}
    \label{fig:ParticleCorr}
\end{figure}

In Figure \ref{fig:ParticleCorr} we present the nonfreeness \eqref{PartCorr} (blue) and its quantum part \eqref{PartEnt} (red) for the Hubbard dimer. 
The results for finite temperature $T=0.1$ are represented by the solid lines, and the dashed lines are reserved for the ground state Eq.~\eqref{ground} ($T=0$). As already discussed in the previous section, the Gibbs state at sufficiently low temperature approximately coincides with the ground state for smaller $r$, and therefore the (quantum) nonfreeness of both states approximately coincide as well. Things become very interesting as the two nuclei move further apart. First of all, the nonfreeness is reduced by introducing a small temperature, but remains finite at the dissociation limit. To be more specific, we already know that for any finite $T>0$, the Gibbs state approximates in the limit $r\rightarrow \infty$ better and better an equally weighted classical mixture of four configuration states,
\begin{equation}\label{PsiMix}
\rho(T,r) \approx \frac{1}{4}\sum_{\sigma,\sigma'=\uparrow/\downarrow} \ket{L\sigma,R\sigma'}\!\bra{L\sigma, R\sigma'}.
\end{equation}
This is also reflected by the fact that the 1RDM is perfectly mixed,
\begin{equation}
    \rho_1=\frac{1}{2}\mathbbm{1}_4= \frac{1}{2}\sum_{i=L/R}\sum_{\sigma=\uparrow/\downarrow} \ket{i\sigma}\!\bra{i\sigma}.
\end{equation}
This means that it is equally probable to find an electron on left or right, which has spin up or down. Moreover, as it can directly been inferred from the purely classical mixture \eqref{PsiMix} of configuration states, the quantum part of the nonfreeness decays to zero as we increase the inter-nuclei distance $r$. Remarkably, also the quantum nonfreeness in the Gibbs state experiences a ``sudden death'' as the mode entanglement, at a critical distance $r_{\mathrm{crit}}^{(p)}(T=0.1)=1.65$. As pointed out before, this phenomenon is a unique feature of quantum correlation, and it emphasizes that the quantum nonfreeness \eqref{PartEnt} captures something truly non-classical.
\begin{figure}[h!]
    \centering
    \includegraphics[scale=0.40]{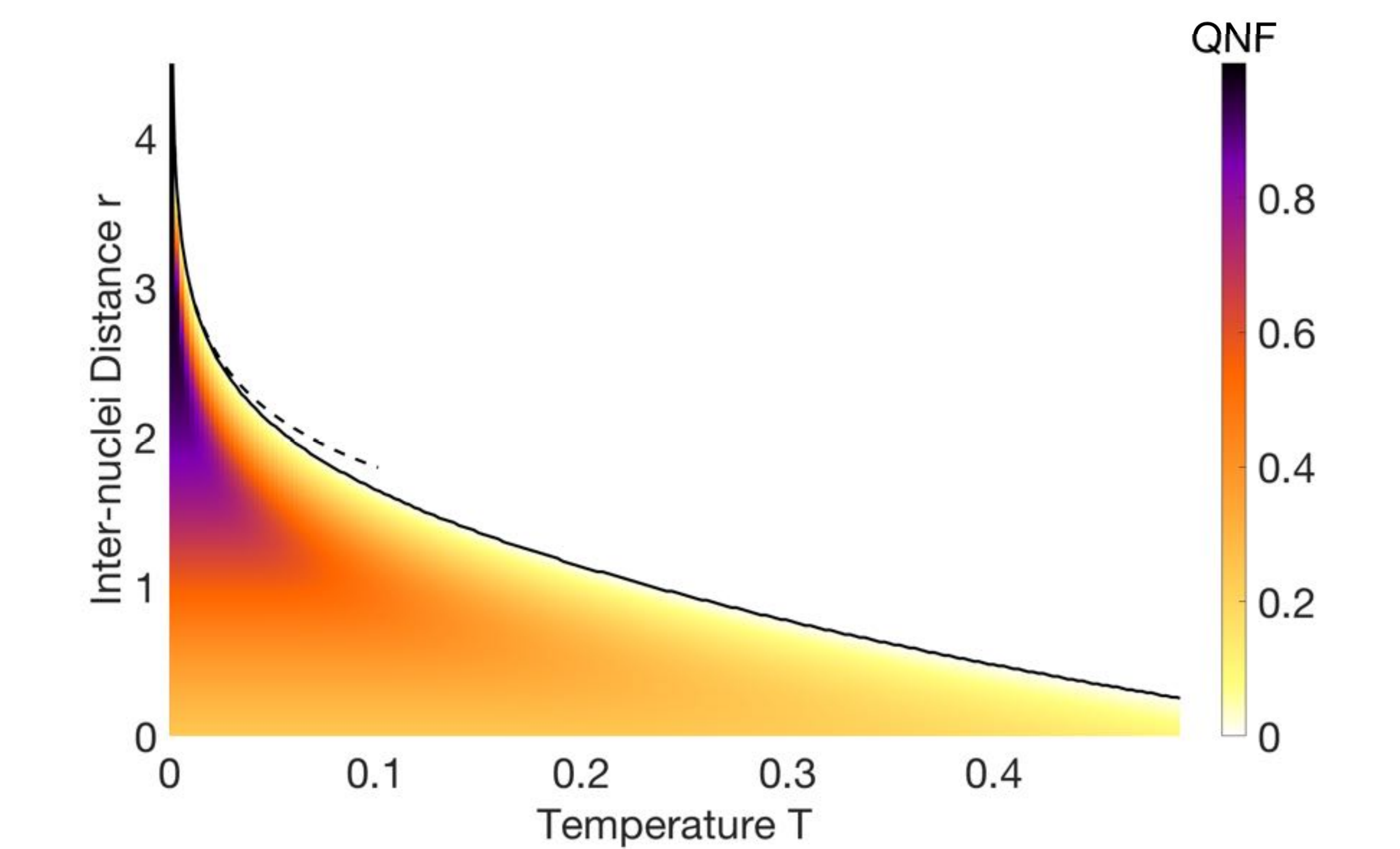}
    \caption{Quantum nonfreeness (QNF) as a function of temperature $T$ and inter-site distance $r$. It vanishes entirely above the black curve  $r_{\mathrm{crit}}^{(p)}$. The dashed line represents the asymptotic result \eqref{PartEntDiv} for small $T$.}
    \label{fig:SuddenDeath}
\end{figure}

To see how temperature affects the destruction of the quantum nonfreeness, we present the latter as a function of both distance and temperature in Figure \ref{fig:SuddenDeath}. The black line depicts $r_{\mathrm{crit}}^{(p)}$ as a function of $T$.
As the temperature increases, the minimum distance needed to kill the entanglement entirely is lowered. Similarly, the critical separation $r_{\mathrm{crit}}^{(p)}$ diverges logarithmically at small temperature,
\begin{equation}
    r_{\mathrm{crit}}^{(p)}(T) = - \frac{1}{2} \log(T) + d_0 +d_1 T+\mathcal{O}(T^2), \quad T \rightarrow 0, \label{PartEntDiv}
\end{equation}
where $d_0\equiv \log(2) - \frac{1}{2} \log(\log(3))$, $d_1\equiv -\frac{1}{2}(2+\log (3))$ are constants.
This asymptotic result is shown as dashed line in Figure \ref{fig:SuddenDeath} and its derivation is included in Appendix \ref{divergence} for the interested readers.

In the form of those results referring to the particle picture we have resolved the correlation paradox in the dissociation limit in the most concise way: For any finite temperature $T$, regardless of how close to zero it might be, there always exists a finite separation distance $ r_{\mathrm{crit}}^{(p)}(T)$ beyond which the quantum state $\rho(T,r)$ of the system does not contain quantum nonfreeness anymore. Instead, $\rho(T,r)$ is given as a purely classical mixture of configuration states. In particular, this means that the quantum nonfreeness in the ground state of the hydrogen molecule is highly unstable against thermal noise, and finite quantum nonfreeness at the dissociation limit can \textit{never} be observed in a laboratory.

\subsection{Correlation Paradox of the Generalized Dissociation Limit}\label{sec:general}

All above discussions of the correlation paradox of the dissociation limit are based on the assumption that only the $1s$ shell orbitals of the two hydrogen nuclei are active, and that there is exactly one electron at each center at the dissociation limit.
\begin{figure}[ht]
    \centering
    \subfigure[Two-center dissociation]{\includegraphics[scale=0.2]{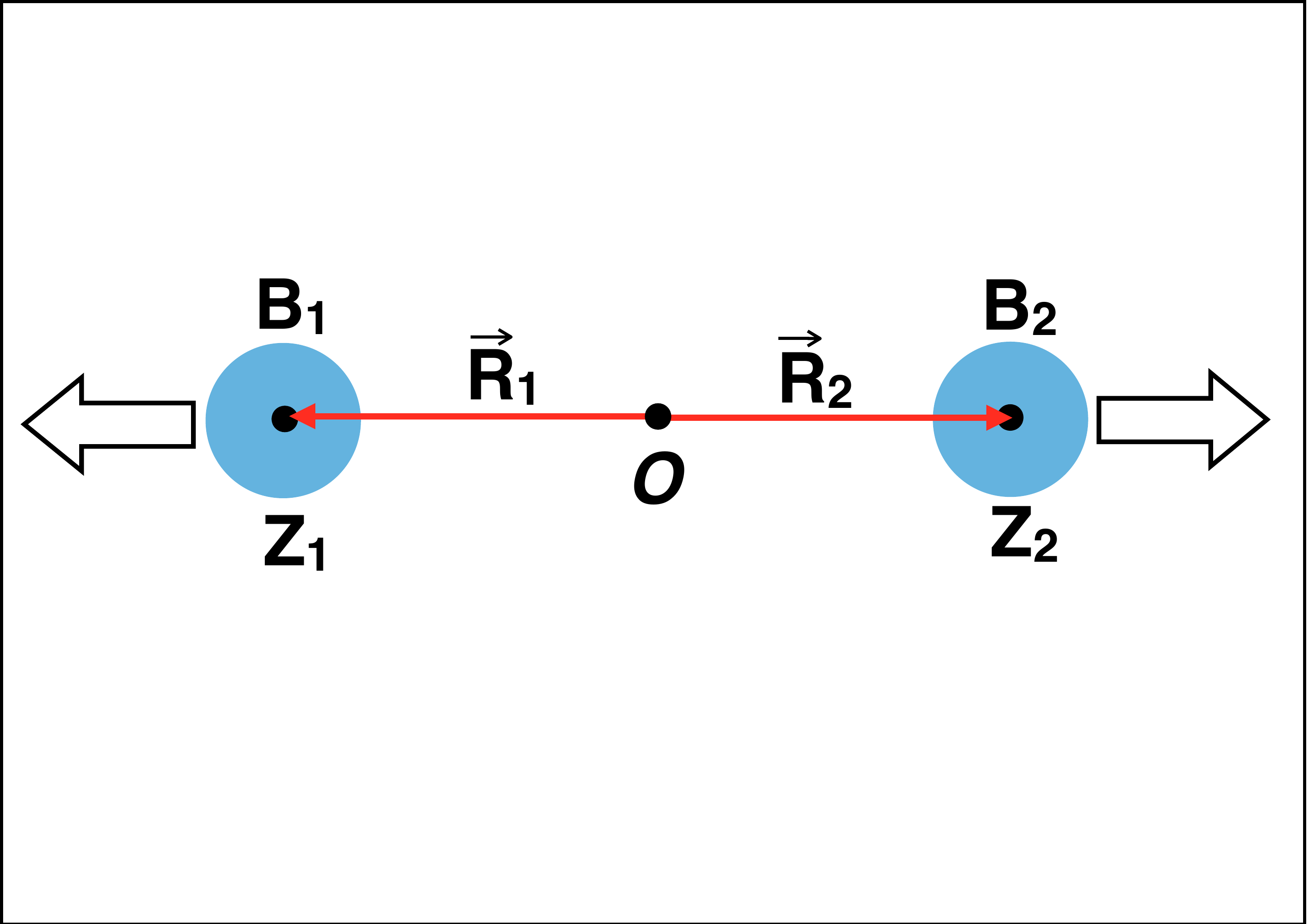}}
    \quad
    \subfigure[Five-center dissociation]{\includegraphics[scale=0.2]{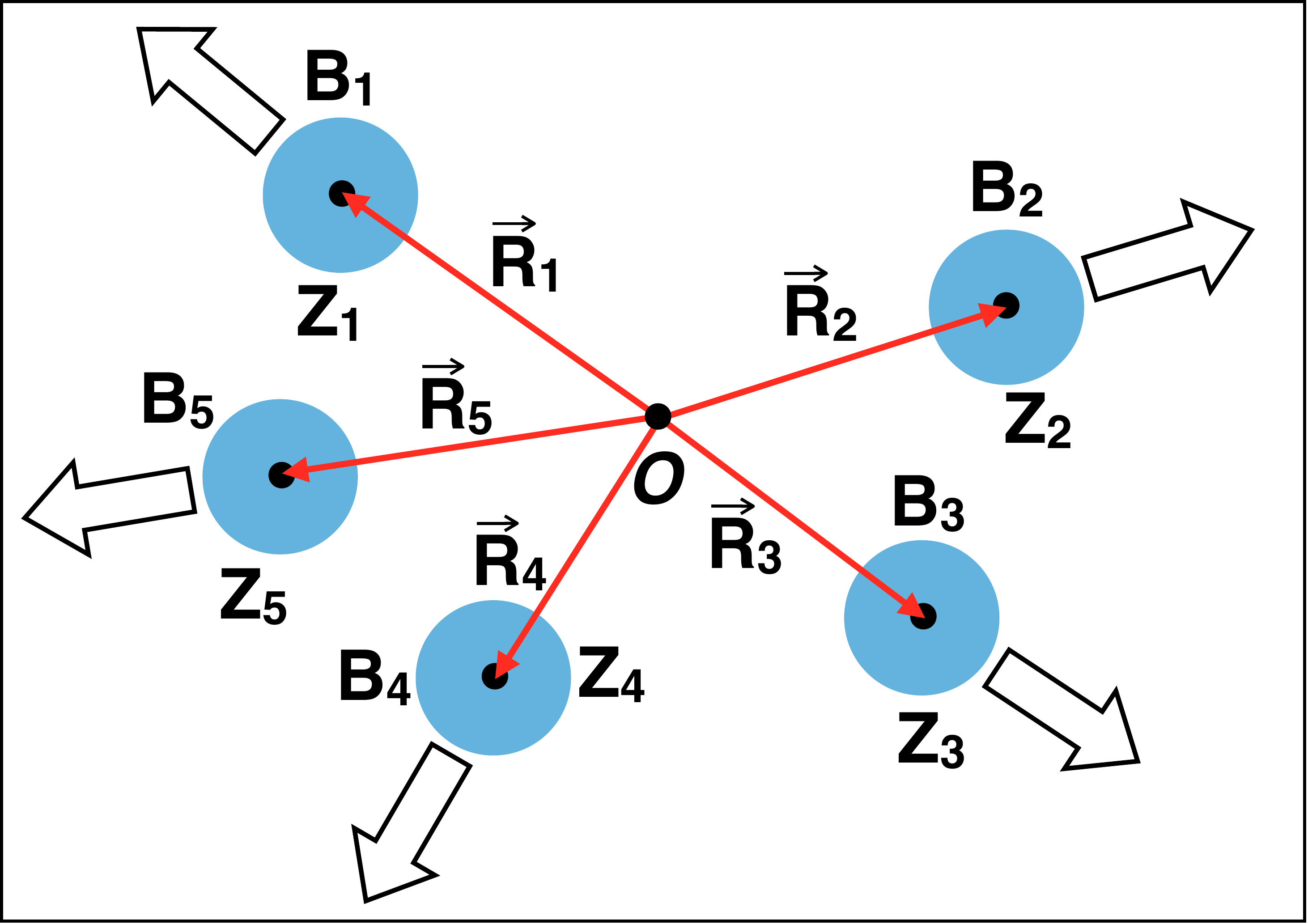}}
    \caption{Schematic illustration of dissociation in general: Various nuclei together with their local bases $B_i$ of atomic spin-orbitals centered at $\vec{R}_i$ are separated from each other (see text for more details).}
    \label{Dissociation}
\end{figure}
In the following we successively relax these assumptions to formulate a hierarchy of generalized correlation paradoxes in the dissociation limit.
In analogy to Sections \ref{sec:diss}, \ref{sec:resol}, we resolve those paradoxes by referring to thermal noise which will destroy in the dissociation limit the mode entanglement between different nuclear centers (and if applicable the quantum nonfreeness).

As illustrated in Figure \ref{Dissociation}, we consider a general molecular system with $\nu$ nuclear centers $Z_i$ at positions $\vec{R}_i$, $i=1,\ldots, \nu$. We then choose finite local basis sets $B_i = \{|\varphi^{(k)}_i\rangle\}_{k=1}^{d_{i}}$
of atomic spin-orbitals which are localized mainly around the respective center $Z_i$. The general dissociation limit can then be formally described, e.g., by the process $\Vec{R}_i \rightarrow \lambda \Vec{R}_i$ as $\lambda \rightarrow \infty$, where $\lambda$ is a scale parameter, e.g., $\lambda=1$ could correspond to the equilibrium geometry of the molecule. Actually, 
it is only crucial for the following considerations that the nuclei separate from each other more and more in the dissociation limit, i.e., $|\vec{R}_i-\vec{R}_j|\rightarrow \infty$ for all $i,j$. Scenarios in which two or more nuclei remain at finite separation distances are included as well and in that case we would merge them to form one joint (more complex) center.
Moreover, we denote the local one-particle Hilbert space at center $Z_i$ by $\mathcal{H}_i = \mathrm{Span}(B_i)$. The corresponding local Fock spaces $\mathcal{F}_i$ follow as the (direct) sums of different fixed particle number sectors $\wedge^N[\mathcal{H}_i]$ generated by $\mathcal{H}_i$,
\begin{equation}
    \mathcal{F}_i = \bigoplus_{N_i=0}^{\scriptsize{\mbox{dim}}(\mathcal{H}_i)} \wedge^{N_i}[\mathcal{H}_i]. \label{Fock}
\end{equation}
For each center $Z_i$ a natural basis $\overline{B}_i$ for its Fock space $\mathcal{F}_i$ is given by the family of configuration states Eq.~\eqref{config1st} which involve only spin-orbitals belonging to $B_i$. The corresponding local observables, i.e., Hermitian operators acting on  $\mathcal{F}_i$ form a local algebra, $\mathcal{A}_i$. Physically admissible operators obey number parity (or particle number) superselection rules, that is, when represented in $\overline{B}_i$, they are block diagonal with respect to the even and odd particle number sector (or all particle number sectors) and thus preserve the number parity (or particle number).

The Fock space $\mathcal{F}$ of the total system is given by the  tensor product of all local Fock spaces,
\begin{equation}\label{Focktotal}
    \mathcal{F} = \bigotimes_{i=1}^\nu \mathcal{F}_i.
\end{equation}
Since the molecular system has a fixed particle number $N$, we could restrict to the corresponding particle number sector of $\mathcal{F}$.
The electronic Hamiltonian $H$ of the molecular system expressed in second quantization can be decomposed into local terms $H_i$ and ``coupling'' terms $H_{ij}$,
\begin{equation}\label{Hdecomp}
    H = \sum_{i=1}^\nu H_i + \sum_{1 \leq i < j \leq\nu} H_{ij}.
\end{equation}
The local terms $H_i$ involve only creation and annihilation operators referring to the spin-orbitals of center $Z_i$ while the coupling terms $H_{ij}$ refer to two centers $Z_i,Z_j$. The latter ones describe the Coulomb pair interaction of electrons/nuclei at center $Z_i$ with those at center $Z_j$ and the hopping of the electrons between those centers (kinetic energy). Consequently, they decay in the dissociation limit, fulfilling
\begin{equation}\label{couplingdecay}
  \|H_{ij}\| \leq \frac{q_{ij}}{|\vec{R}_i-\vec{R}_j|}
\end{equation}
with some appropriate constants $q_{ij}$. In contrast to those coupling terms $H_{ij}$, the local terms $H_i$ are effectively independent of the dissociation limit (only their reference points $\vec{R}_i$ change).

After having formally introduced the general physical system, we can now present a hierarchy of generalized dissociation limits and their resolution on a \emph{qualitative} level:

\begin{enumerate}
\item \emph{Full Electron Separation}. In this scenario, we assume at the dissociation limit that each nuclear center will be occupied by exactly one electron. This is, only the one-particle sectors in the local Fock spaces $\mathcal{F}_i$ in Eq.~\eqref{Fock} are occupied. Prime examples of this situation are the dissociation of the hydrogen molecule $\mathrm{H}_2$, its isotopic variations $\mathrm{HD}$ and $\mathrm{D}_2$ and just any chain or ring of hydrogen atoms.

    Since the dissociation limit spatially separates all electrons, their interaction is also marginalized. Naively, one may therefore expect that the dissociated ground state would take the form of a configuration state,
    \begin{equation}\label{PsiIa}
    \ket{\Psi_0} = \ket{\phi_1}\wedge \ldots \wedge \ket{\phi_{\nu}},
    \end{equation}
    involving from each center $Z_i$ its local one-electron ground state $\ket{\phi_i}$. In fact, this is the case as long as the local ground states $\ket{\phi_i}$ are nondegenerate for all (or all except one) centers $Z_i$. Otherwise, by adding to each $\ket{\phi_i^{(m_i)}}$ a superindex reflecting its possible  degeneracies, the respective $N$-electron ground state will typically take the form of a coherent superposition
    \begin{equation}\label{PsiI}
    \ket{\Psi_0} = \sum_{m_1,\ldots,m_\nu} a_{m_1,\ldots,m_\nu} \,\ket{\phi_1^{(m_1)}}\wedge \ldots \wedge \ket{\phi_{\nu}^{(m_{\nu})}}.
    \end{equation}
    For centers with a unique, non-degenerate ground state the respective sum collapses to just one term, $m_i=1$.

    Since at the dissociation limit the couplings $H_{ij}$ between any two centers vanish \eqref{couplingdecay}, all configurations involved in \eqref{PsiI} have the same energy in that limit. This implies, that the presence of any finite temperature $T$ would turn  the state \eqref{PsiI} into a classical mixture of its configuration states and in that sense resolve the correlation paradox. Since for each of those configuration states the involved one-particle states $\ket{\phi_i^{(m_i)}}\in \mathcal{H}_i$ belong to a definite center (i.e., they are not coherent superpositions of spin-orbitals of different centers), exactly the same will hold true for the mode entanglement and mode correlation between any two centers. 

    \item \emph{Fixed Local Particle Number}. We relax the restriction of having only one electron per center $Z_i$ at the dissociation limit. Yet, we still assume fixed local electron numbers $N_i$, where $\sum_{i=1}^\nu N_i = N$. This type of situations arises, e.g., when a molecule dissociates into two (or more) neutral identical atoms, e.g., $\mathrm{N}_2$ and $\mathrm{O}_2$. Since the $N_i$ electrons at each center are not getting separated there is no reason to expect that the dissociated $N$-electron state would  be a configuration state. Indeed, the non-vanishing interaction between the electrons at each center can give rise to finite correlations.

        Nonetheless, the vanishing of the coupling terms implies that the ground state problem decouples into those of the individual centers.  To be more specific, one just needs to determine the local $N_i$-electron ground state $\ket{\Phi_i}\in \wedge^{N_i}[\mathcal{H}_i]$ of $H_i$ at each center $Z_i$. Naively, due to the vanishing of the coupling terms $H_{ij}$ at the dissociation limit \eqref{couplingdecay} one may then expect a dissociated ground state of the form
        \begin{equation}\label{PsiIIa}
        \ket{\Psi_0} = \ket{\Phi_1}\wedge \ldots \wedge \ket{\Phi_\nu},
        \end{equation}
        which could be seen as a generalized configuration state. 
        In fact, this is the case as long as the local $N_i$-electron ground states $\ket{\Phi_i}$ are nondegenerate for all (or all except one) centers $Z_i$. Otherwise, by adding to each $\ket{\Phi_i^{(m_i)}}$ a superindex reflecting its possible degeneracies, the respective $N$-electron ground state will typically take the form of a coherent superposition
        \begin{equation}\label{PsiIIb}
        \ket{\Psi_0} = \sum_{m_1,\ldots,m_\nu} a_{m_1,\ldots,m_\nu} \,\ket{\Phi_1^{(m_1)}}\wedge \ldots \wedge \ket{\Phi_{\nu}^{(m_{\nu})}}.
        \end{equation}
        Since at the dissociation limit the couplings $H_{ij}$ between any two centers vanish \eqref{couplingdecay}, all generalized configuration states $\ket{\Phi_1^{(m_1)}}\wedge \ldots \wedge \ket{\Phi_{\nu}^{(m_{\nu})}}$ involved in \eqref{PsiIIb} have the same energy in that limit. This implies that the presence of any finite temperature $T$ would turn the state \eqref{PsiIIb} into a classical mixture of those wedge products with equal weights and in that sense resolves this generalized correlation paradox. Again, since each element $\ket{\Phi_i^{(m_i)}}$ belongs in the mode/orbital picture to a local Fock space $\mathcal{F}_i$, exactly the same applies to the mode entanglement and mode correlation between any two centers.

    \item \label{generalparadox} \emph{Mixed Local Particle Numbers}. We relax the assumptions even further, and now allow for mixed local particle numbers. This may even include cases in which the total system is coupled to an environment and therefore may have an indefinite total electron number. Typical example for isolated systems are molecules with an excess or shortage of electrons, such as  $\mathrm{N}_2^+$: At the dissociation limit of $\mathrm{N}_2^+$ the total thirteen-electron state will (in the simplest case) be an equal superposition of two generalized configuration states \eqref{PsiIIa}, one with seven electrons on the left and six on the right, and one with seven electrons on the left and six on the right,
    \begin{equation}\label{PsiIIIa}
        \ket{\Psi_0} = \frac{1}{\sqrt{2}}\left[\ket{\Psi_6^{(L)}} \wedge \ket{\Psi_7^{(R)}} + \ket{\Psi_7^{(L)}} \wedge \ket{\Psi_6^{(R)}}\right].
    \end{equation}
    Since even the simplest possible state \eqref{PsiIIIa} assumed for $\mathrm{N}_2^+$ does not take the form of a generalized configuration state \eqref{PsiIIa}, we have to give up the particle picture (wedge product-based notation) and consider exclusively the mode/orbital picture which is based on second quantization.
    Just to reiterate, the mode reduced density operators are defined with respect to the tensor product structure \eqref{Focktotal}. For instance, the mode reduced density operator for the mode subsystem $B_i$ and $\mathcal{H}_i$, respectively, at center $Z_i$ is obtained by taking the partial trace of the total state $\rho$ with respect to all factors $\mathcal{F}_j$, $j\neq i$. This leads to a reduced density operator acting on the local space $\mathcal{F}_i$ which in general does not have a definite particle number anymore. Yet, as long as the total state $\rho$ has a fixed particle number, any mode reduced density operator is block-diagonal with respect to the different particle number sectors $\wedge^{N_i}[\mathcal{H}_i]$.

    In the mode/ortbial picture, the consequences of the decay \eqref{couplingdecay} of the coupling terms $H_{ij}$ are obvious: Since at the dissociation limit, the spin-orbitals belonging to different centers $Z_i$ do not couple anymore, one may naively expect that the corresponding $Z_i$-mode reduced density operators $\rho_i$ would be uncorrelated, and the total state would take the form
    \begin{equation}\label{PsiIIIb}
        \rho = \rho_1 \otimes \cdots \otimes \rho_\nu.
    \end{equation}
    In case the system is isolated, each $\rho_i$ would be a pure state (with possibly indefinite particle number) on the local Fock space $\mathcal{F}_i$.
     As the example \eqref{PsiIIIa} already illustrates, this is not necessarily the case whenever the dissociated total system has a degenerate ground state space spanned by generalized configuration states \eqref{PsiIIa} with varying local particle numbers. Consequently, in case of a finite temperature $T$, the same happens as in the previous scenario of fixed local particle numbers: All contributing generalized configurations are classically mixed with equal weights (yet those configurations have no definite local particle numbers anymore). Consequently, the mode entanglement and mode correlation between any two centers vanishes in the dissociation limit, regardless of how small $T>0$ is.
\end{enumerate}

In the following we resolve those generalized correlation paradoxes also in a \emph{quantitative} way, at least in the mode/orbital picture. In particular, this will illustrate why and how a finite temperature in combination with the decaying behaviour of the coupling terms $H_{ij}$ affects and eventually kills the mode entanglement and mode correlation in the dissociation limit.
First, it suffices to resolve the most general version of the correlation paradox, since the one for mixed local particle numbers contains
the other two scenarios as special cases. Second, for the sake of providing a resolution of those paradoxes, one can ignore superselection rules, as the entanglement and correlation \textit{without} superselection rules serve as upper bounds for the physically accessible entanglement and correlation\cite{bartlett2003entanglement}. To be more specific, we only need to show that this upper-bound goes to zero at the dissociation limit to resolve the paradox. Third, given a consistent correlation and entanglement measure, the total correlation is always greater or equal to its quantum part. Then, in order to show that the entanglement vanishes at the dissociation limit, it suffices to show the same for the total correlation (as quantified by the quantum mutual information \eqref{eqn:MI} without superselection rules). Combining the above arguments, we can safely claim to resolve in the following various correlation paradoxes by proving that the quantum mutual information between any two centers becomes zero at the dissociation limit at any finite temperature.

In the important work \cite{wolf2008area}, a universal relation has been found between the correlation in multipartite quantum systems and the system's temperature and individual coupling terms. Here we give a demonstration of the underlying ideas by applying it first to the Hubbard dimer. Afterwards we repeat those steps in the context of general molecular system to resolve the correlation paradox in a quantitative way. The Hubbard dimer Hamiltonian \eqref{Hubbard} can be written as
\begin{equation}
    H = H_{L} + H_{R}+ H_{LR},
\end{equation}
where $H_L$ and $H_R$ are terms that act only on the left or right nucleus/site, and $H_{LR}$ denotes the ``coupling'' between both mode subsystems, i.e., the hopping term. To refer more to the previous sections we have replaced here the indices of the nuclear centers according to $1 \mapsto L$ and $2 \mapsto R$.

The thermal equilibrium state $\rho$ \eqref{Gibbs} follows as the minimizer of the free energy
\begin{equation}\label{thermo1}
    F = E - T S.
\end{equation}
Here, $S$ denotes the von Neumann entropy of the total state $\rho$,  $E \equiv \langle H\rangle_{\rho}\equiv \Tr[H\rho]$ and we
denote in the following the mode reduced density operators of $\rho$ for the left and right mode subsystem by $\rho_L$ and $\rho_R$, respectively. As a consequence of the characterization of $\rho$, the free energy of the state $\rho_L \otimes \rho_R$ is larger than that of $\rho$,
\begin{equation}\label{thermo2}
    \Tr[H\rho] - TS(\rho) \leq \Tr[H(\rho_L \otimes \rho_R)] - TS(\rho_L \otimes \rho_R).
\end{equation}
Equivalently, this can be stated as
\begin{equation}\label{thermo3}
    \Tr\left[H\big(\rho-\rho_L \otimes \rho_R\big)\right]   \leq T \left[S(\rho) - S(\rho_L \otimes \rho_R)\right].
\end{equation}
Since the right-hand side of Eq.~\eqref{thermo3} is up to a prefactor $-T$ nothing else than the quantum mutual information of $\rho$
it follows that (for the sake of clarity we make explicit the chosen decomposition of the mode system into left and right)
\begin{equation}
    \begin{split}
        I_{\rho}(L:R) &\leq \frac{1}{T} \Tr[H(\rho_L \otimes \rho_R - \rho)]= \frac{1}{T} \Tr[H_{LR} (\rho_L \otimes \rho_R - \rho)]\leq \frac{2\|H_{LR}\|_F}{T}. \label{wolflaw}
    \end{split}
\end{equation}
In the second line we have used $\Tr[H_{i} (\rho_L \otimes \rho_R - \rho)]=0$ for $i=L/R$.
The remarkable relation \eqref{wolflaw} states that the quantum mutual information is bounded by the strength of the coupling/hopping term $H_{LR}$, and decays to zero in the dissociation limit.

This whole illustration of the work \cite{wolf2008area} in the Hubbard dimer model and in particular \eqref{wolflaw} can also be generalized to a multivariate setting. For this one first defines the generalized quantum mutual information as\cite{watanabe1960information}
\begin{equation}
    I_\rho(1:2:\cdots : \nu) \equiv S(\rho || \rho_1 \otimes \rho_2 \otimes \cdots \otimes \rho_\nu).
\end{equation}
It quantifies the quantum information of the total state $\rho$ which is not yet contained in the single-center reduced density operators $\rho_1,\ldots,\rho_\nu$.  Since it is concerned with a decomposition of the total system into  several subsystems it will be particularly useful for our generalized dissociation limit beyond diatomic molecules.
Various steps of the derivation of \eqref{wolflaw} for the Hubbard dimer can be repeated in a similar fashion to any molecular system in its multi-nuclear dissociation limit as defined at the beginning of this section. To explain this, we first decompose the Hamiltonian of the molecular system according to \eqref{Hdecomp}.
Then, by recalling the characterization of the Gibbs state as the minimizer of \eqref{thermo1} and comparing its free energy to the one of the
product state $\rho_1 \otimes \rho_2 \otimes \cdots \otimes \rho_\nu$, we find
\begin{equation}
    F(\rho) \leq F(\rho_1 \otimes \rho_2 \otimes \cdots \otimes \rho_\nu).
\end{equation}
Plugging in the definition of the free energy and repeating the steps below \eqref{thermo2} immediately leads to the desired final result
\begin{equation} \label{generallaw}
    I_{\rho}(1:2:\cdots:\nu) \leq \frac{2}{T}  \sum_{1 \leq i < j \leq\nu} \|H_{ij}\|_F.
\end{equation}
Relation \eqref{generallaw} in combination with the decay \eqref{couplingdecay} of the coupling terms $H_{ij}$ implies that for any finite temperature $T>0$ the thermal state of the molecular system converges to the mode uncorrelated state $\rho_1 \otimes \rho_2 \otimes \cdots \otimes \rho_\nu$ in the dissociation limit. Hence, the correlation paradox is completely resolved in the mode/orbital picture:
The mode correlation between any two nuclear centers and thus also any correlation function vanishes in the dissociation limit for all molecular quantum systems under realistic experimental conditions. This is due to the presence of thermal noise, regardless of how close the temperature is to the absolute minimum of zero Kelvin.

\section{Orbital Correlation and Entanglement in Quantum Chemistry}\label{sec:qchem}

\subsection{Relevance of Information Tools in Quantum Chemistry}

The difficulty of the ground state problem lies intrinsically in the strong correlation of the ground state. In a lattice system, the local structure of the lattice sites and decay of entanglement allow one to efficiently search for the lowest energy state with the matrix product state ansazt. This method known as density matrix renormalisation group\cite{white1992density,schollwock2011density} (DMRG) has been widely applied and proven to be extremely efficient and accurate.

The success of DMRG attracted attentions from the field of quantum chemistry, and has shown a great deal of potential. In order to apply DMRG to solve for the ground state of a molecule, a number of orbitals are heuristically selected for constructing the ground state candidate. Such orbitals comprise the so-called \textit{active space}. The active space excludes both orbitals that are always fully occupied and are considered ``frozen'', and those that are never occupied and are called ``virtual''. The orbitals in the active space are then the quantum chemistry analogue of lattice sites. In contrast to physical lattice sites, these active orbitals are not localised. The nonlocality of these orbitals are manifested by the highly nonlocal molecular Hamiltonian
\begin{align}\label{eqn:ham}
    H &= \sum\limits_{ij\sigma}
    T_{ij} f^\dagger_{i\sigma} f_{j\sigma}^{\phantom{\dagger}} + \sum\limits_{ijkl\sigma \tau} V_{ijkl} f^\dagger_{i\sigma} f^\dagger_{j\tau} f_{k\tau}^{\phantom{\dagger}} f_{l \sigma}^{\phantom{\dagger}}\,.
\end{align}
parametrised by the one- and two-particle Hamiltonian $T$ and $V$, respectively.
In order to get rid of this nonlocal structure, these active orbitals are typically reordered or in general linearly transformed. Such process is typically guided by two important quantities, namely the single orbital entanglement and orbital-orbital entanglement, as the cost function of orbital transformation. Therefore correctly quantifying these entanglement quantities may potentially improve the performance of DMRG in molecular systems.

In the following sections, we demonstrate our results in Chapter \ref{chap:fermion} and \ref{chap:quantifying} and quantify the single orbital and orbital-orbital entanglement in molecular ground states obtained with DMRG. Hartree-Fock method is applied to the molecules to obtain the Hartree-Fock orbitals, among which the active space is selected for DMRG calculation. All correlation quantities are calculated for the ground state of $\mathrm{H_2O}$ constructed with $8$ orbitals, that of $\mathrm{C_{10}H_8}$ with $10$ orbitals and that of $\mathrm{Cr_2}$ with $28$ orbitals (for details see Ref.\cite{ding2020concept}). Interaction between these Hartree-Fock orbitals are highly nonlocal. Nonetheless, due to the geometric symmetries of the individual molecules, the active spaces can be further divided into subgroups of orbitals, with respect to which the one-particle Hamiltonian $T$ in \eqref{eqn:ham} is block diagonal. For example, in Figure \ref{fig:OneHam} we plotted the matrix elements of $T$ for $\mathrm{H_2O}$ and $\mathrm{C_{10}H_8}$. This grouping of orbitals may be used as a first tool to predict correlation structure within the active space, although later we will see that it is not always consistent with reality. From the ground state $|\Psi\rangle$ we need the single orbital and orbital-orbital reduced density matrices for quantifying entanglement, which will be defined in Section \ref{sec:single} and \ref{sec:orb_orb} respectively. Additionally, the one-particle reduced matrix
\begin{align}
   \gamma_{i\sigma, j\tau} \equiv \bra{\Psi} c^\dagger_{j\tau} c^{\phantom{\dagger}}_{i\sigma} \ket{\Psi}, \quad \sigma,\tau = \uparrow, \downarrow, \label{eqn:one_part}
\end{align}
whose eigenvalues are also called \emph{natural occupation numbers} provides insights into the reference basis-independent intrinsic correlation (as we will discuss below). Notice that $\gamma$ is trace-normalized to the particle number $N$.

\begin{figure}[t]
    \centering
    \includegraphics[scale=0.29]{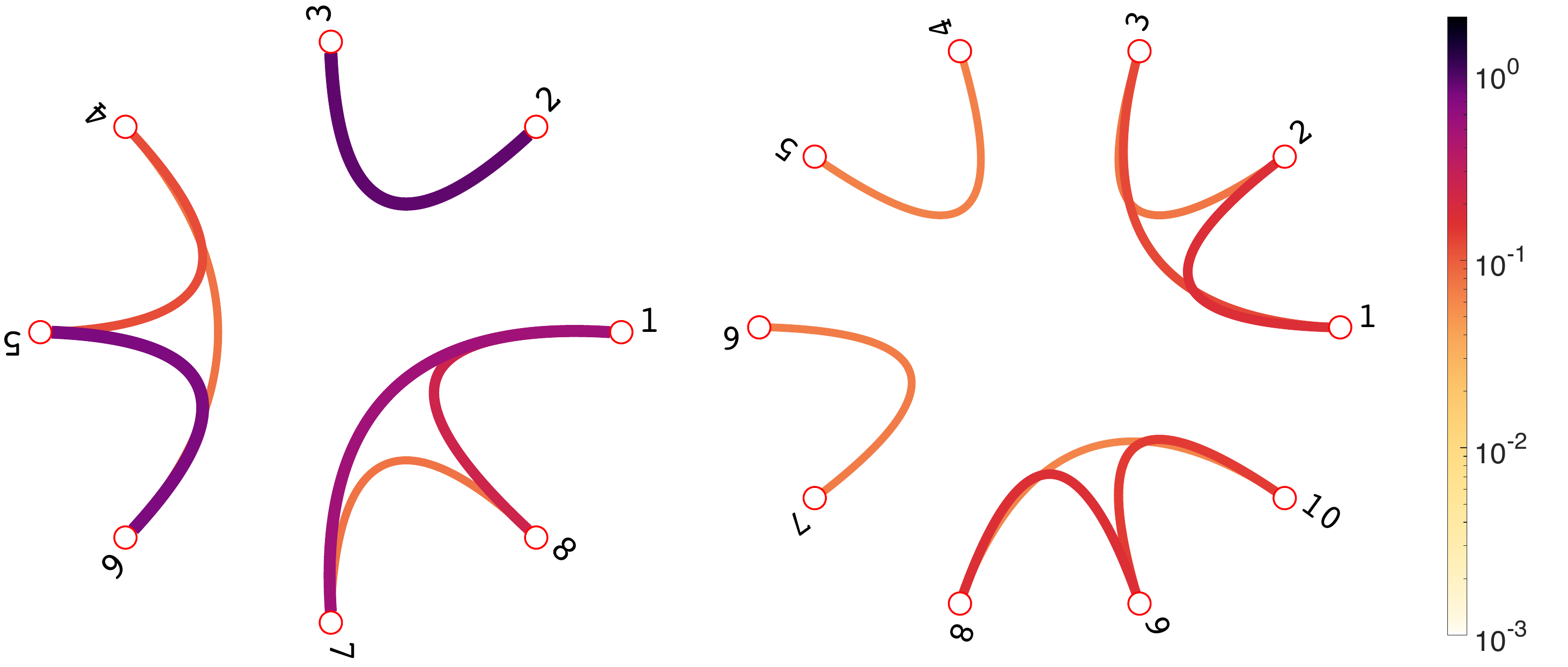}
    \caption{Matrix elements $|T_{ij}|$ of the one-particle Hamiltonian with respect to the Hartree-Fock orbitals for $\mathrm{H_2O}$ (left) and $\mathrm{C_{10}H_8}$ (right).}
    \label{fig:OneHam}
\end{figure}

\subsection{Single Orbital Correlation and Entanglement} \label{sec:single}

We first look at the single-orbital total correlation and entanglement and assume a pure quantum state $\rho =|\Psi\rangle\bra{\Psi}$ for the total $N$-electron system (typically it will be the ground state or an excited state of a molecular system). The one-orbital reduced density matrix associated with the orbital $\ket{\chi_j}$ is obtained by tracing out all remaining orbitals\cite{Sergey17}
\begin{equation}
\rho_j = \Tr_{\setminus \{j\}} [|\Psi\rangle\langle \Psi|] \label{eqn:one_orb_rdm}\,.
\end{equation}
We reiterate that the partial trace $\Tr_{\setminus \{j\}}[\cdot]$ does not mean to trace out particles but instead refers to the tensor product in the second quantization, i.e., it exploits the structure $\mathcal{F} = \mathcal{F}_j \otimes \mathcal{F}_{\setminus \{j\}}$.
From a practical point of view, the non-vanishing entries of the single-orbital reduced density matrix can be determined by calculating
expectation values of $\ket{\Psi}$ involving only fermionic creation and annihilation operators referring to orbital $\ket{\chi_j}$. For more details we refer the reader to Refs.~\cite{boguslawski2013orbital,boguslawski2015orbital,Sergey17}. Due to the fixed particle number and the spin symmetry of $\ket{\Psi}$
the one-orbital reduced density matrix will be always diagonal in the local reference basis
$\{\ket{\Omega},\ket{\!\uparrow},\ket{\!\downarrow},\ket{\!\uparrow\downarrow}\}$
of orbital $\ket{\chi_j}$:
\begin{equation}
\rho_j = \begin{pmatrix}
p_1 & 0 & 0 & 0
\\
0 & p_2 & 0 & 0
\\
0 & 0 & p_3 & 0
\\
0 & 0 & 0 & p_4
\end{pmatrix}. \label{eqn:1rdm}
\end{equation}
By referring to the so-called Schmidt decomposition, the total state then takes the form
\begin{eqnarray}
|\Psi\rangle &=&\sqrt{ p_1} |\Omega\rangle \otimes |\Psi_{N,M}\rangle + \sqrt{p_2} |\!\uparrow\rangle \otimes |\Psi_{N-1,M-\frac{1}{2}}\rangle
\\
& &+ \sqrt{p_3} |\!\downarrow\rangle \otimes |\Psi_{N-1,M+\frac{1}{2}}\rangle + \sqrt{p_4} |\! \uparrow\downarrow\rangle \otimes |\Psi_{N-2,M}\rangle,\nonumber
\end{eqnarray}
where $|\Psi_{n,m}\rangle$ is a quantum state with particle number $n$ and magnetization $m$ of the complementary subsystem comprising the remaining $D-2$ spin-orbitals. Now we can readily determine the physical part $\rho^\textrm{Q}$ in the presence of P-SSR or N-SSR. In the absence of SSRs, the single-orbital entanglement of $\ket{\Psi}$ is simply given by the von Neumann entropy of $\rho_j$, and the single-orbital total correlation is simply twice the entanglement,
\begin{equation}
\begin{split}
E(|\Psi\rangle\langle\Psi|) &= S(\rho_j) = - \sum_{i=1}^4 p_i \log(p_i). \label{eqn:single_noSSR}
\\
I(|\Psi\rangle\langle\Psi|) &= 2 E(|\Psi\rangle\langle\Psi|).
\end{split}
\end{equation}

In the case of Q-SSR ($\text{Q}=\text{P},\text{N}$), we need to consider the physical part $\rho^\text{Q}$ of $\rho = |\Psi\rangle\langle \Psi|$, which is no longer a pure state. Consequently the single-orbital entanglement cannot be quantified by the von Neumann entropy of $\rho_j$ anymore. Instead we have to invoke the geometric picture in Figure \ref{fig:states}. We first calculate the physical states with respect to P-SSR and N-SSR according to \eqref{eqn:tilde}, and then their correlation and entanglement are quantified using \eqref{eqn:MI} and \eqref{eqn:rel_ent}. Explicit derivations are presented in Appendix \ref{app:single}. Remarkably, despite the fact that $\rho^{\textrm{Q}}$ is not a pure state anymore
the single-orbital correlation and entanglement under P-SSR and N-SSR still involves the spectrum of $\rho_j$ only:
\begin{equation}
\begin{split}
I(\rho^\textrm{P}) &= (p_1 + p_4) \log(p_1 + p_4) + (p_2 + p_3)\log(p_2+p_3)
\\
& - 2(p_1 \log(p_1) + p_2 \log(p_2) + p_3 \log(p_3) + p_4 \log(p_4)),
 \\
 I(\rho^\textrm{N}) & = p_1 \log(p_1) + (p_2 + p_3)\log(p_2+p_3) + p_4 \log(p_4)
\\
 &- 2(p_1 \log(p_1) + p_2 \log(p_2) + p_3 \log(p_3) + p_4 \log(p_4)),
 \\
 E(\rho^\textrm{P}) &= (p_1 + p_4) \log(p_1 + p_4) + (p_2 + p_3)\log(p_2+p_3)
\\
& - p_1 \log(p_1) - p_2 \log(p_2) - p_3 \log(p_3) - p_4 \log(p_4),
\\
E(\rho^\textrm{N}) &= (p_2 + p_3) \log(p_2 + p_3) - p_2 \log(p_2) - p_3 \log(p_3). \label{eqn:single_SSR}
\end{split}
\end{equation}
In particular, this implies immediately for both SSRs ($\text{Q}=\text{P},\text{N}$)
\begin{equation}\label{eqn:IvsEpure}
I^\textrm{Q-SSR}(\rho) = E^\textrm{Q-SSR}(\rho) + E(\rho).
\end{equation}
For the case of no SSR, this is consistent with Eq.~\eqref{eqn:single_noSSR}.

After having obtained the ground states of the desired molecules, we can now explore single-orbital correlation and entanglement by applying the respective formulas \eqref{eqn:single_noSSR} and \eqref{eqn:single_SSR}. Since the states $\rho$ at hand are all \emph{pure} states, the single-orbital total correlation without any SSR is always exactly twice the single-orbital entanglement, as stated in \eqref{eqn:single_noSSR}. When P-SSR or N-SSR is taken into account, the respective physical states $\rho^\textrm{P}$ and $\rho^\textrm{N}$ are no longer pure, but in general mixtures of fixed parity or particle number states. However, in the form of Eq.~\eqref{eqn:IvsEpure} there still exists an exact relation between total correlation and entanglement in the presence of SSRs. Because of this, we focus in this section on the entanglement.

\begin{figure}[t]
\centering
\includegraphics[scale=0.35]{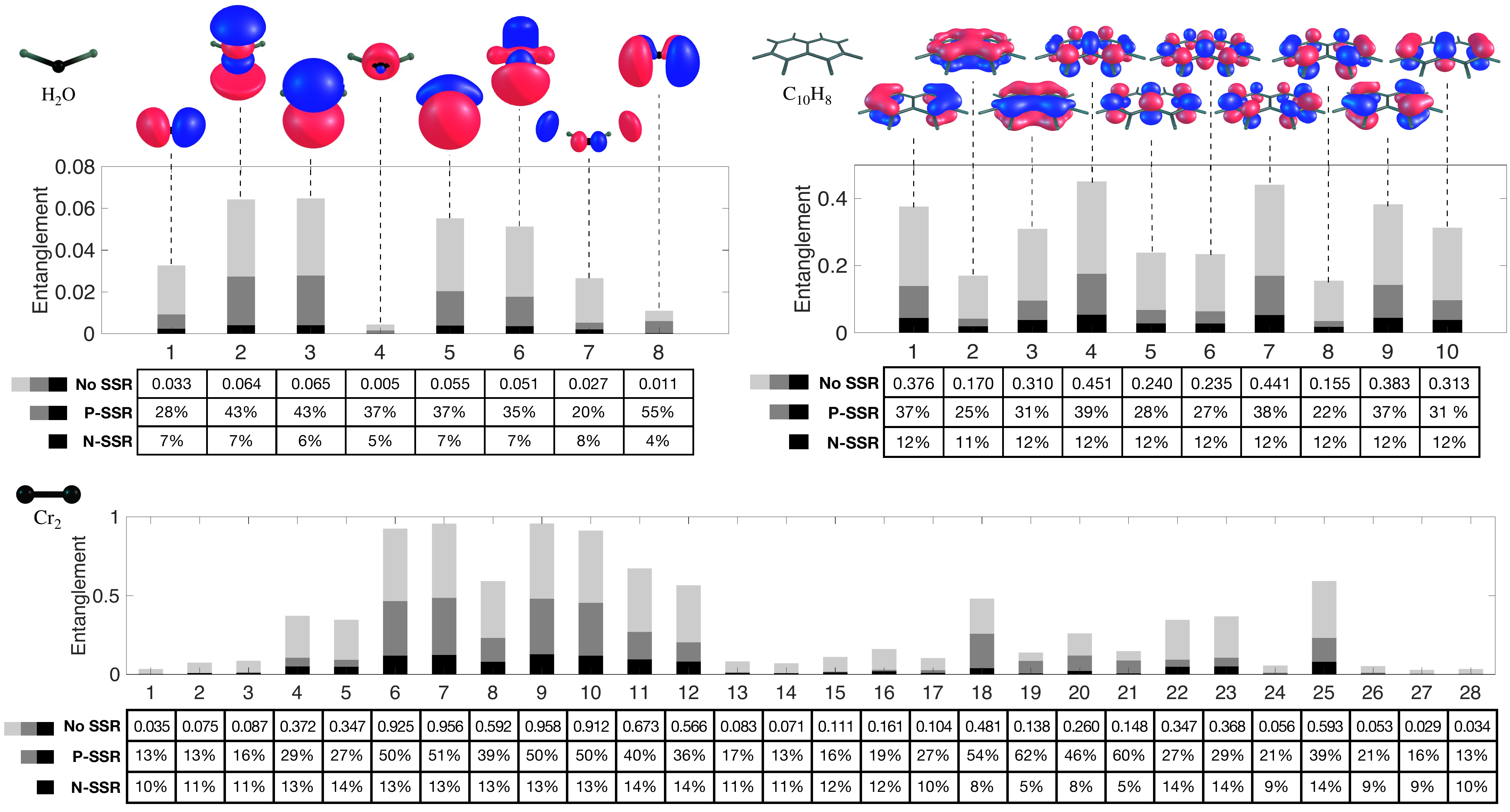}
\caption{Single-orbital entanglement of the Hartree-Fock orbitals (as visualized) in the ground states of $\mathrm{H_2O}$, $\mathrm{C_{10}H_8}$ and $\mathrm{Cr_2}$ for the three cases of no, P- and N-SSR. Exact values of entanglement and the remaining entanglement in terms of percentage of the No SSR case in the presence of P-SSR and N-SSR are listed in the table below each plot.}
\label{fig:1rdm}
\end{figure}

In Figure \ref{fig:1rdm} we plotted the single-orbital entanglement in the ground state of the $\mathrm{H_2O}$, $\mathrm{C_{10}H_8}$ and $\mathrm{Cr_2}$, respectively, for the case without SSR, with P-SSR and with N-SSR, using the analytic formulas \eqref{eqn:single_noSSR} and \eqref{eqn:single_SSR}. Below each figure we listed the exact values of single-orbital entanglement in the absence of SSRs, and also the remaining entanglement in the presence of P-SSR and N-SSR, in percentage.
All these results refer here and in the following to the Hartree-Fock orbitals which are for the sake of completeness also visualized for $\mathrm{H_2O}$ and $\mathrm{C_{10}H_8}$.

Generally speaking, the single-orbital entanglement of Hartree-Fock orbitals is quite small compared to the one of \emph{atomic} orbitals in a bond (typically $\log(2)$, the entanglement of a singlet), particularly for $\mathrm{H_2O}$ and $\mathrm{C_{10}H_8}$. This confirms that the Hartree-Fock orbitals give rise to a much more local structure than that the atomic orbitals and in that sense define a much better starting point for high precision ground state methods.  Comparing the three systems, the water molecule contains the weakest single-orbital entanglement, less than $10^{-1}$ for all eight orbitals, whereas the strongest single-orbital entanglement in naphthalene and the chromium dimer have the values $0.451$ and $0.958$, respectively. This already emphasizes the different levels of correlation in those systems. Yet, it is worth noticing that any type of orbital entanglement and correlation (e.g., single- or two-orbital entanglement) strongly depends on the chosen reference basis. Even for a configuration state \eqref{config} one could find large orbital entanglement and correlation if one referred to orbitals which differ a lot from the natural orbitals.

To dwell a bit more on the concept of correlation, we emphasize that a basis set-independent notion can be defined in terms of the one-particle reduced density matrix $\gamma$ \eqref{eqn:one_part}. To explain this, we first observe that for configuration states \eqref{config} $\gamma$ has eigenvalues (\emph{natural occupation numbers}) all identical to one and zero, respectively. Arranging them in decreasing order gives rise to the so-called ``Hartree-Fock point'' $\vec{\lambda}^{\textrm{HF}} \equiv (1,\ldots,1,0,\ldots,0)$, where the first $N$ entries are $1$, and the remaining $D-N$ are $0$. The distance of the decreasingly ordered natural occupation numbers $\vec{\lambda}\equiv (\lambda_\alpha)_{\alpha=1}^D\equiv \mbox{spec}^\downarrow(\gamma)$ to $\vec{\lambda}^{\mathrm{HF}}$,
\begin{eqnarray}\label{eqn:lambdadist}
\mbox{dist}_1(\vec{\lambda},\vec{\lambda}^{\textrm{HF}})  &\equiv& \sum_{\alpha=1}^D |\lambda_\alpha - \lambda^{\textrm{HF}}_\alpha| \nonumber \\
&=& \sum_{\alpha=1}^N \left(1-\lambda_\alpha\right) + \sum_{\alpha=N+1}^D \lambda_\alpha\,,
\end{eqnarray}
defines an elementary measure for the \emph{intrinsic} (i.e., reference basis-independent) correlation of a quantum state $\rho$.
As a matter of fact, one easily proves\cite{CSthesis,CScorrmeas} that the overlap of an $N$-electron pure states $\ket{\Psi}$ with a configuration state built up from its $N$ first natural spin-orbitals $\ket{\varphi_\alpha}$ fulfills,
\begin{equation}
\frac{1}{2N} \mbox{dist}_1(\vec{\lambda},\vec{\lambda}^{\textrm{HF}}) \leq 1-\big|\!\langle \varphi_1,\ldots,\varphi_N\ket{\Psi}\!\big|^2 \leq \frac{1}{2} \mbox{dist}_1(\vec{\lambda},\vec{\lambda}^{\textrm{HF}})\,.
\end{equation}
This means that the maximized overlap of $\ket{\Psi}$ with a configuration state (Slater determinant) approaches the value one whenever $\vec{\lambda}$ is close to the ``Hartree-Fock point''. The closer a ground state $\ket{\Psi}$ is to the closest configuration state, the more accurate will be the Hartree-Fock approximation for the respective molecular system. Applying the measure \eqref{eqn:lambdadist} of intrinsic correlation to the ground states of water, naphthalene and chromium dimer yields the values $0.004$, $0.025$ and $0.084$, respectively. Those results comprehensively confirm that the systems at hand are not that strongly correlated and water in particular is weakly correlated.
As our analysis in the following section will show, the pairwise orbital entanglement and correlation pattern will be dominated by the point group symmetries of the one-particle Hamiltonian $T$ of the molecule \emph{as long as} the intrinsic correlation of the ground state is small enough.

From a quantum information perspective, the effect of SSRs on the single-orbital entanglement is drastic. The presence of P-SSR and N-SSR considerably reduces the amount of physical entanglement. According to the accompanying tables in Figure \ref{fig:1rdm}, P-SSR eliminates at least $45\%$ of it and occasionally even up to $87\%$. Taking into account the more relevant N-SSR eliminates between $86\%$ and $96\%$. Intriguingly, the entanglement hierarchy, however, remains almost intact. That is, if one orbital is more entangled with the rest than another orbital, the same will likely hold in the presence of P-SSR and N-SSR. It is also worth noting that even the stronger N-SSR does never wipe out the entire entanglement, which we shall see below can happen in the context of orbital-orbital entanglement.

From a quantum chemistry point of view, in Figure \ref{fig:1rdm} the single-orbital entanglement varies significantly from orbital to orbital. In particular, some orbitals are barely correlated with the others. This is a good indicator that our chosen actives spaces were large enough to cover most of the correlation contained in the three molecules. On the other hand, if most orbitals were strongly entangled, the respective active space probably would have been too small. This is also the reason why the single-orbital correlation could help to automate the selection of active orbital spaces in quantum chemistry, as has been suggested and worked out in Refs.~\cite{barcza2011quantum,boguslawski2012entanglement}. Our refined analytic results \eqref{eqn:single_noSSR} and \eqref{eqn:single_SSR} demonstrated in Figure \ref{fig:1rdm} are able to identify exactly the quantum part of the total correlation while also taking into account the important superselection rules. These additional facets make precise the usage of quantum information theoretic concepts in the context of quantum chemistry, and may offer new perspectives into the selection of active space.

\subsection{Orbital-Orbital Correlation and Entanglement} \label{sec:orb_orb}

\begin{figure}[!t]
    \centering
    \includegraphics[scale=0.65]{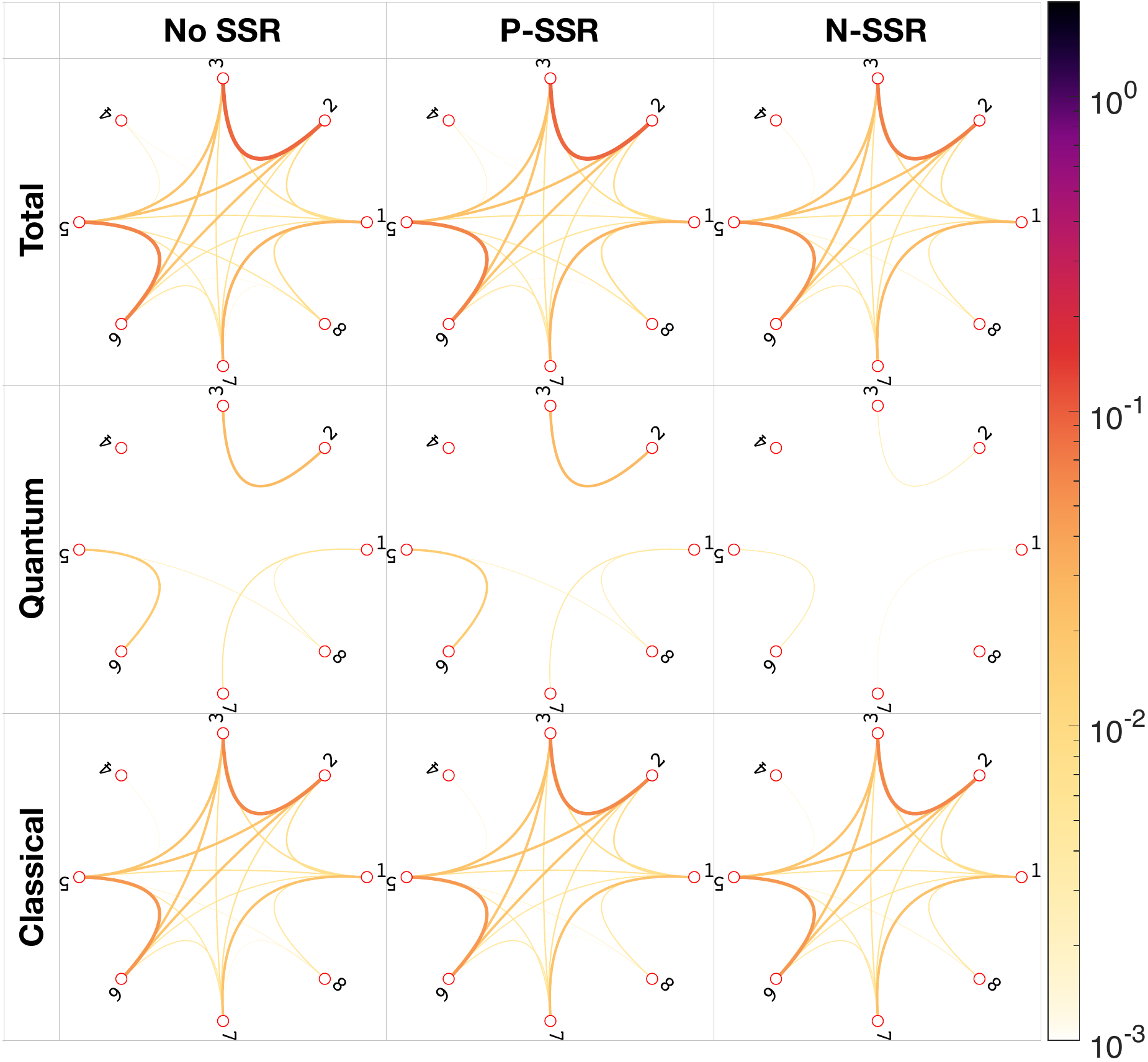}
    \caption{Total correlation, entanglement (``Quantum'') and classical correlation between any two Hartree-Fock orbitals in the ground state of $\mathrm{H_2O}$ for the case with no, P- and N-SSR.}
    \label{fig:CorrWater}
\end{figure}

To provide more detailed insights into the correlation and entanglement structure of molecular ground states, we also study the pairwise correlation and entanglement between two orbitals. This can be done in general in three steps: 1. Obtain the two-orbital reduced density matrix $\rho_{i,j}$ by tracing out all orbital degrees of freedom but orbital $i$ and $j$ as
\begin{equation}
\rho_{i,j} = \Tr_{\setminus \{i,j\}} [|\Psi\rangle\langle\Psi|]. \label{eqn:two_orb_rdm}
\end{equation}
2. Apply the suitable projection to obtain the physical part $\rho_{i,j}^\textrm{Q}$ of $\rho_{i,j}$ under Q-SSR, as explained in Section \ref{sec:SSRIncorp}. 3. Calculate the correlation and entanglement between the two orbital using \eqref{eqn:SSRmeasures}.

For the specific case of a total system consisting of just two orbitals, the only two-orbital ``reduced'' density operator is given by the total (pure) state. Consequently, the orbital-orbital correlation and entanglement thus coincide with the single-orbital ones and the above results \eqref{eqn:single_noSSR},\eqref{eqn:single_SSR} immediately apply.
Due to the electron interaction, the two-orbital reduced density matrices $\rho_{i,j}$ of \emph{general} systems are, however, not pure anymore. For this we will use our results on quantifying mode entanglement in Chapter \ref{chap:quantifying}.

\begin{figure}[!t]
    \centering
    \includegraphics[scale=0.65]{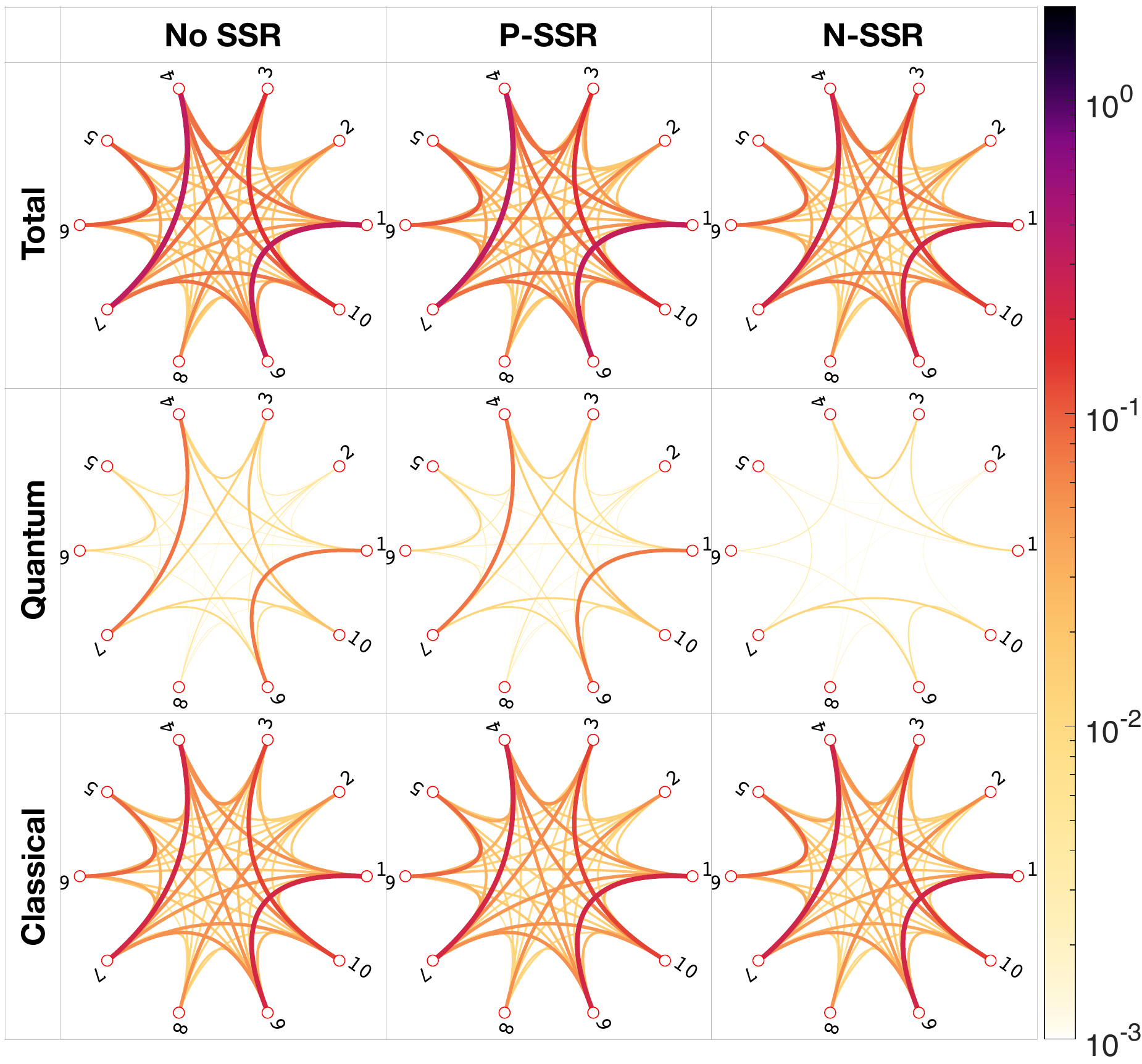}
    \caption{Total correlation, entanglement (``Quantum'') and classical correlation between any two Hartree-Fock orbitals in the ground state of $\mathrm{C_{10}H_8}$ for the case with no, P- and N-SSR.}
    \label{fig:CorrNaph}
\end{figure}

\begin{figure}[t]
    \centering
    \includegraphics[scale=0.65]{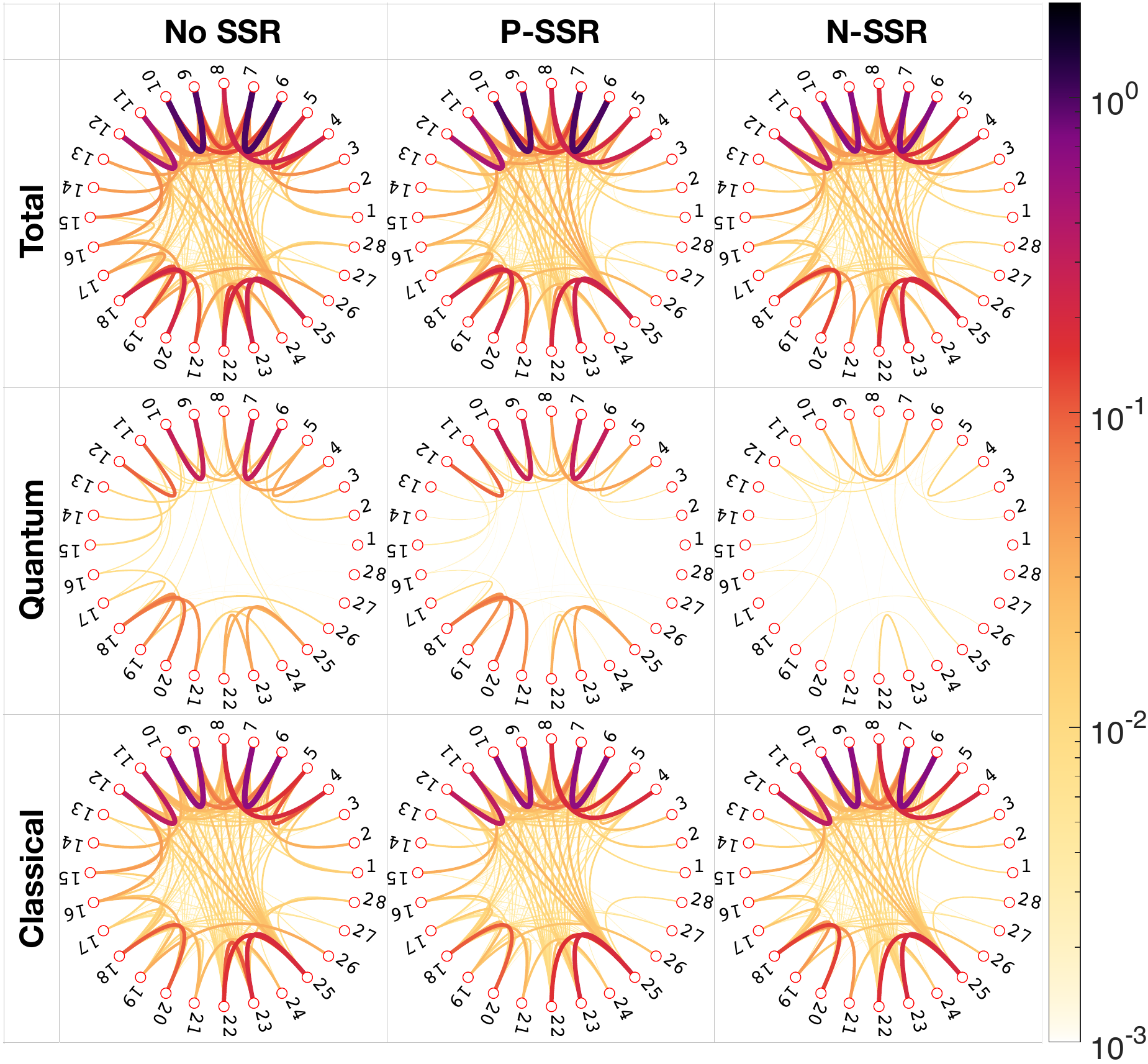}
    \caption{Total correlation, entanglement (``Quantum'') and classical correlation between any two Hartree-Fock orbitals in the ground state of $\mathrm{Cr_2}$ for the case with no, P- and N-SSR.}
    \label{fig:CorrCr2_28}
\end{figure}

The quantities we calculate are the total correlation, entanglement and classical correlation between two Hartree-Fock orbitals, for the case without SSR, with P-SSR and with N-SSR. All those nine quantities are calculate for all pairwise combinations of orbitals, for the ground states of all three molecules $\mathrm{H_2O}$, $\mathrm{C_{10}H_8}$ and $\mathrm{Cr_2}$. Since each ground state is a singlet with a fixed electron number, any two-orbital reduced state $\rho_{i,j}$ is also symmetric with respect to the total two-orbital spin (by Theorem \ref{thrm:symminherit}), magnetization and particle number. Using the symmetry argument\cite{vollbrecht2001entanglement}, the closest separable state $\sigma^\ast_{i,j}$ is block diagonal in the simultaneous eigenbasis of the respective two-orbital spin and particle number operators (as also illustrated in Figure \ref{fig:sectors}). In the case of N-SSR, the projections used for calculating the physical state further increase the symmetry of $\sigma^\ast_{i,j}$, which eventually allows us to determine it analytically, as shown in Section \ref{sec:analytic}. For the case without SSR and with P-SSR, we used semidefinite programming to find the closest separable state and calculate the entanglement to high accuracy, as explained in Section \ref{sec:SDP}.

\begin{figure}[ht]
\centering
\includegraphics[scale=0.42]{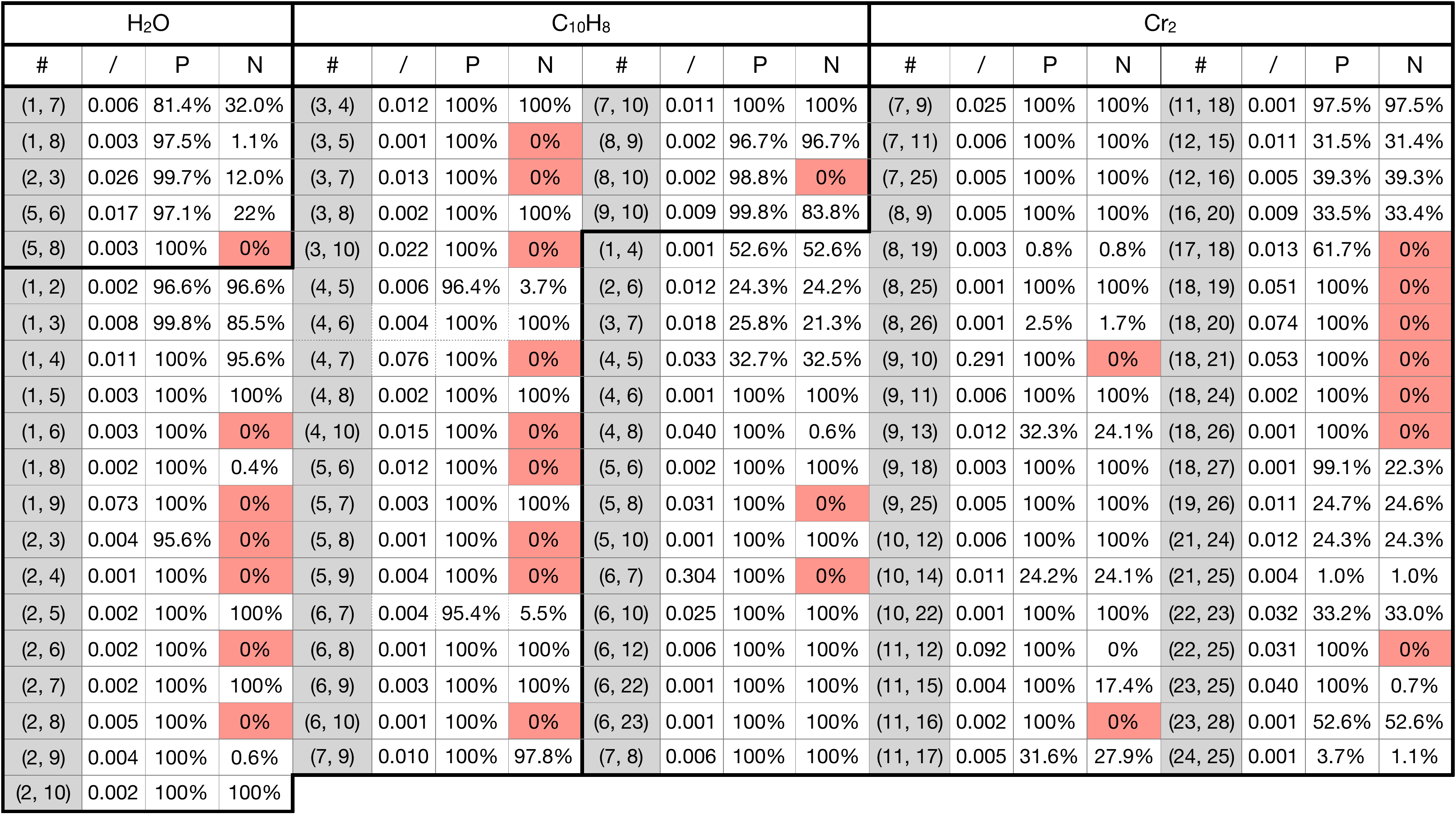}
\caption{Exact values of the orbital-orbital entanglement ($\geq0.001$) without SSR (/) and the fraction of it (in \%) remaining in the presence of P-SSR (P) and N-SSR (N), for the ground states of $\mathrm{H_2 O}$, $\mathrm{C_{10}H_8}$ and $\mathrm{Cr_2}$.}
\label{fig:2rdm}
\end{figure}

In Figure \ref{fig:CorrWater}, \ref{fig:CorrNaph} and \ref{fig:CorrCr2_28} we present the different types of correlation of the ground state of $\mathrm{H_2O}$ constructed with 8 orbitals, $\mathrm{C_{10}H_8}$ with 10 orbitals and $\mathrm{Cr_2}$ with 28 orbitals, respectively. In Figure \ref{fig:2rdm} we list the exact value of orbital-orbital entanglement and the fraction of the entanglement (in \%) which is remaining in the presence of the P-SSR and N-SSR, respectively.

There are several important messages to get across. First of all, similar to the results for the single-orbital entanglement, the water molecule contains the weakest orbital-orbital correlation, and the chromium dimer the strongest. Most importantly, our comprehensive analysis reveals that the quantum part of the total correlation plays only a minor role. In fact, the orbital-orbital entanglement is usually one order of magnitude smaller than the total correlation, and the molecular structure is thus dominated by classical correlation. This key result of our analysis emphasizes that the quantum mutual information \eqref{eqn:MI} is not a suitable tool for quantifying orbital entanglement, as it leads to a gross overestimation. From a general point of view, our findings raise questions about the significance of entanglement in chemical bonding  and quantum chemistry in general.
\begin{figure}[t]
    \centering
    \includegraphics[scale=0.29]{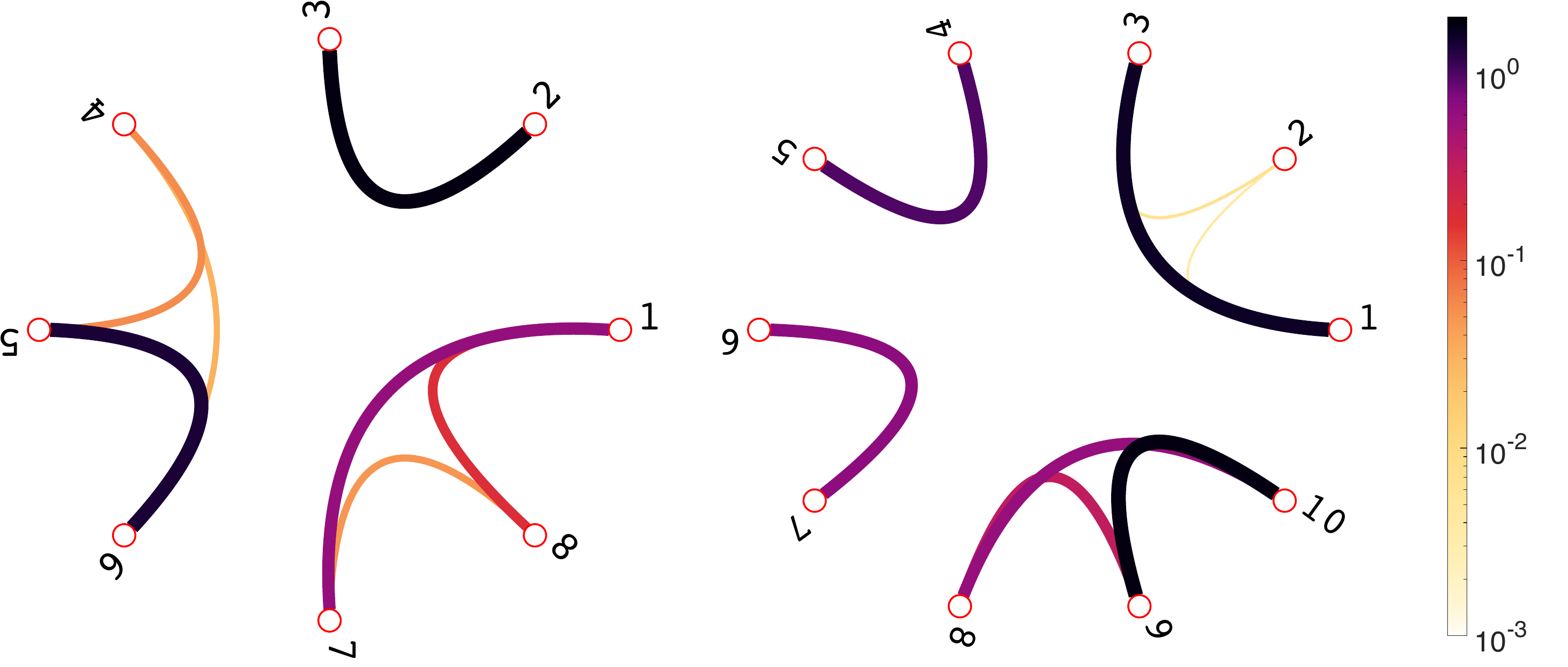}
    \caption{Orbital-orbital total correlation without SSR for $\mathrm{H_2O}$ (left) and $\mathrm{C_{10}H_8}$ (right) with the electron-electron interaction switched off.}
    \label{fig:noint}
\end{figure}

Similar to the single-orbital entanglement, SSRs also have a drastic effect on the orbital-orbital entanglement, yet in a qualitatively different way. In the molecular systems we considered, P-SSR preserved almost all of the orbital-orbital entanglement, whereas in the case of N-SSR, almost no orbital-orbital entanglement is left, and consequently almost all orbital-orbital correlation is classical. Furthermore, in some instances even the \emph{entire} orbital-orbital entanglement is destroyed by the N-SSR. Referring to Figure \ref{fig:sectors}, this indicates that most of the contribution to orbital-orbital correlation and entanglement comes from superposing $f^\dagger_{i\downarrow}f^\dagger_{i\uparrow}|{\Omega}\rangle$ and $f^\dagger_{j\downarrow}f^\dagger_{j\uparrow}|{\Omega}\rangle$, which are marked as the dark grey blocks. These states describe either empty or doubly occupied orbitals. In fact, in all three molecules, single excitations are highly suppressed in any of the molecular orbitals we consider. This is qualitatively different to the entanglement in a singlet bond, which refers to \emph{localized atomic} orbitals, each singly occupied. In agreement with valence bonding theory, this observation confirms that two-orbital correlation and entanglement are suitable tools for describing bonding orders only if they are applied to  localized atomic orbitals.

Lastly, we would like to relate the orbital-orbital correlation pattern to the one-particle Hamiltonian $T$. Due to the point group symmetry arising from the molecular geometry, $T$ represented with respect to the active molecular orbitals is block-diagonal. This is well illustrated in  Figure \ref{fig:OneHam} not showing any coupling in $T$ between Hartree-Fock orbitals belonging to different irreducible symmetry sectors. Exploiting this structure can improve the implementation of numerical methods such as DMRG. Yet, the orbital-orbital correlation patterns inherit that structure only in case the respective ground state is weakly correlated. For the three molecules studied in our work, this is only
the case for the water molecule in agreement to its weak intrinsic correlation as quantified by \eqref{eqn:lambdadist}.
For the other two molecules the more dominant electron-electron interaction results in a major deviation of the orbital-orbital correlation patterns. To confirm these claims, we plotted the orbital-orbital total correlation similar to the top-left cells in Figure \ref{fig:CorrWater} and \ref{fig:CorrNaph}, with the same reference orbitals but with the electron-electron interaction switched off in Figure \ref{fig:noint}. We can see that the correlation patterns now match well the respective structures of the one-particle Hamiltonians in Figure \ref{fig:OneHam}. As a result, much care is needed when using the one-particle Hamiltonian to achieve a localized orbital arrangement, as the unperturbed orbital-orbital correlation and entanglement patterns might be completely scrambled by electron-electron interaction.

  \chapter{Conclusions and Outlook}

The aim of this work is to define precise concepts of correlation and entanglement in fermionic systems, namely mode- and particle- correlation and entanglement, and provide operationally meaningful quantification of entanglement for common electronic systems. The mode picture refers to tensor product structure on the total Fock space on the total set of modes, where the $N$-fermion Hilbert space is embedded. This naturally recovers a tensor product structure between distinct subsets of modes, where we quantify the entanglement using the relative entropy of entanglement. In the particle picture, the concept of ``nonfreeness'' was discussed, and quantified as the distance to the closest quasifree state\cite{gottlieb2005new}. Although conceptually different than the definition of correlation based on a tensor product structure, nonfreeness is a promising tool for measuring the intrinsic complexity of the ground state problem\cite{Pachos17free,Pachos18free,Pachos19free}. As our main result, in the mode picture we took the fundamental superselection rules into account and derived analytic formula for the physical mode entanglement between two sites/orbitals using symmetry. With the system of two sites/orbitals as a building block, we demonstrated our results to concrete systems.

First, we resolved the correlation paradoxes in the dissociation limits of molecular systems in a quantitative way: We have proven that thermal noise due to temperature will destroy the mode entanglement beyond a critical separation distance $r_{\mathrm{crit}}^{(m)}$($T$) and the total mode correlation at the dissociation limit entirely. This means that all correlation functions referring to different nuclear centers vanish in the dissociation limit, provided the temperature is finite. Similarly, we confirmed that in the particle picture thermal noise turns coherent superpositions of (quasi)degenerate configuration states into classical mixtures of them. As a matter of fact, the more general result \eqref{generallaw} emphasizes that any form of perturbation of the system would have the same effect in the dissociation limit as thermal noise. Hence, from a practical point of view, our findings emphasize that neither finite mode entanglement nor finite quantum nonfreeness can ever be observed in the dissociation limit in a laboratory. This also rationalizes and clarifies the perception that the ``correlation'' of the dissociated ground state vanishes.

Secondly, we quantified in Section \ref{sec:qchem} the different correlation types exemplarily in the ground states of the water, naphthalene and dichromium molecule in a numerically exact way. Our findings as presented in Figures \ref{fig:1rdm}-\ref{fig:CorrCr2_28} reveal the following: (i) Compared to the correlation between two (orthonormalized) atomic orbitals in single bonds (order $2\log{2}$), the correlation between most Hartree-Fock orbitals is quite small. This highlights the well-known fact that Hartree-Fock orbitals are a much better starting point for high precision ground state methods than atomic orbitals. (ii) Taking into account the important N-SSR has a drastic effect. It reduces the correlation and entanglement of one orbital with the remaining ones by about $86$-$96$\%. The effect on the two orbital level is significant as well but varies a lot more namely between no reduction and total cancellation (see Figure \ref{fig:2rdm}). Those particular findings raise first doubts about the usefulness of molecular systems as a source for correlation and entanglement. This conclusion may change to some extent, however, if one refers to localized atomic orbitals instead of delocalized molecular orbitals, which is yet to be explored in a precise manner. (iii) The overwhelming part of the total correlation between molecular orbitals is classical. This immediately raises questions about the role of entanglement in the description of chemical bonds and quantum chemistry in general, at least for the entanglement between highly nonlocal molecular orbitals.

There are of course many open questions on fermionic entanglement and correlation. On the fundamental level, one natural question one could ask is if there is a connection between the mode and particle picture. For example, does the existence of entanglement in one picture indicates the same in the other? On the applicational aspect, what entanglement pattern would we observe if we change the Hartree-Fock orbitals in Section \ref{sec:qchem} to localised atomic orbitals? Would this type of orbital entanglement be better suited for describing chemical bonds? In general, atomic orbitals on different nuclei have finite overlaps. To recover the notion of subsystems (similar to orthogonal and localised physical sites) one need to orthogonalise these atomic orbitals in a way that their locality is well preserved. Lastly, exactly quantifying all these correlation and entanglement quantities discussed in Section \ref{sec:ent_meas} and \ref{sec:quantvsclass} is of great independent interest and there are still many gaps to fill. For example, does a closest classical state to $\rho$ exhibit the same local symmetries as $\rho$ as its closest separable state does? If we managed to apply to the quantum discord \eqref{eqn:disc} the same symmetry argument used to quantify the relative entropy of entanglement in Section \ref{sec:symm}, despite the set of classical states $\mathcal{D}_\text{cl}$ being non-convex, it would greatly simplify the calculation of quantum discord and could even lead to a general analytic formula. 
  \begin{appendix}

\chapter{Proof of Proposotion \ref{prop:totalspin}}\label{app:totalspin}

In this section we prove Proposition \ref{prop:totalspin} in Section \ref{sec:derive}. Namely, we show that including $\vec{S}^2$ symmetry in the twirl when particle number $N$ and magnetisation $S^z$ symmetry is already present is not entanglement generating. We denote the unitary group associated with conserved quantities ${N}_A, {N}_B$ and ${S}^z$ as $G'$, and the one with the additional quantity ${\vec{S}}^2$ as $G$. The objective is to check if there exist an entangled states of the form $\sigma = T_{G}(\sigma)$, and a separable state $\sigma' = T_{G'}(\sigma')$ such that $T_{G}(\sigma') = \sigma$.

According to Table \ref{tab:sym} a state $\sigma'$ that satisfies $\sigma' = T_{G'}(\sigma')$ has its restriction to the sector $M = \mathrm{Span}\{|\psi_8\rangle, |\psi_9\rangle, |\psi_{10}\rangle, |\psi_{11}\rangle\}$ of the form
\begin{equation}
    \sigma'|_M = \begin{pmatrix}
        q_{10} & 0 & 0 & 0
        \\
        0 & q_8 & b & 0
        \\
        0 & \overline{b} & q_9 & 0
        \\
        0 & 0 & 0 & q_{11}
        \end{pmatrix},
\end{equation}
in the same eigen-basis $|\Psi_i\rangle$'s in sector $M$ which is solely responsible for all the entanglement in $\sigma'$, by Proposition \ref{prop:M}. Using the Peres-Horodecki criterion, $\sigma'|_M$ is separable if and only if
\begin{equation}
    \left(\frac{q_8-q_9}{2}\right)^2 + \mathrm{Im}(b)^2 \leq q_{10} q_{11}. \label{eqn:G'}
\end{equation}
When we apply $T_G$ to $\sigma'$, i.e. including the generator $|\hat{\vec{S}}|$, the coherence term $b$ vanishes,
\begin{equation}
    T_G(\sigma')|_M = \begin{pmatrix}
        q_{10} & 0 & 0 & 0
        \\
        0 & q_8 & 0 & 0
        \\
        0 & 0 & q_9 & 0
        \\
        0 & 0 & 0 & q_{11}
        \end{pmatrix},
\end{equation}
which satisfies the separability criterion \eqref{eqn:separability} due to \eqref{eqn:G'}. Therefore, if $\sigma'= T_{G'}(\sigma')$ is separable, then $T_G(\sigma')$ is also separable. We conclude that for a N-SSR covariant state $\rho^\text{N}$ with $SU(2)$ spin symmetry, the closest separable state $\sigma^\ast$ has the form of \eqref{eqn:coeff} and restricted by \eqref{eqn:separability}.

\chapter{Spectrum of Hubbard Dimer}\label{spectrum}
The Hubbard dimer model contains four spin-orbitals $\{\ket{L\!\uparrow}, \ket{L\!\downarrow}, \ket{R\!\uparrow}, \ket{R\!\downarrow}\}$ which span together the underlying one-particle Hilbert space $\mathcal{H}_1$. The total Fock space is given as the (direct) sum of various particle number sectors $\mathcal{H}_N=\wedge^N[\mathcal{H}_1]$,
\begin{equation}
    \mathcal{F} = \bigoplus_{N=0}^4 \mathcal{H}_N
\end{equation}
Since we consider the Hubbard dimer as an effective model for the hydrogen molecule in the dissociation limit, we restrict ourselves to the \(N=2\) sector $\mathcal{H}_2$ which has dimension $\binom{4}{2}=6$. We can divide $\mathcal{H}_2$ into spin sectors with magnetization \(M = -1, 0, 1\).
\begin{enumerate}
    \item \(M = \pm1\). Only one possible state in each sector:
    \begin{equation}
        |\Psi_{\uparrow/\downarrow}\rangle  = f^\dagger_{L\uparrow/\downarrow} f^\dagger_{R\uparrow/\downarrow} |0\rangle,
    \end{equation}
    which is therefore also an eigenstates of the Hamiltonian \eqref{Hubbard}. Its  energy is $0$ since no hopping is allowed and harboring two electrons with the same spin is forbidden by the Pauli exclusion principle.

    \item \(M = 0\). A basis of this sector contains four states, which can be grouped into different reflection parity sectors, denoted by $p=\pm$:
    \begin{eqnarray}
        |1_\pm \rangle = \frac{1}{\sqrt{2}}(f^\dagger_{L\uparrow}f^\dagger_{R \downarrow} \mp f^\dagger_{L\downarrow}f^\dagger_{R\uparrow})|0\rangle, \nonumber
        \\
        |2_\pm \rangle = \frac{1}{\sqrt{2}}(f^\dagger_{L\uparrow}f^\dagger_{L \downarrow} \pm f^\dagger_{R\uparrow}f^\dagger_{R\downarrow})|0\rangle.
    \end{eqnarray}
    The state $\ket{1_-}$ belongs to the triplet ($S=1$) while the other three are singlets ($S=0$).
    Energy eigenstates can be found via exact diagonalization:
    \begin{eqnarray}
    E_0 &=& \frac{U}{2} - W, \quad |\Psi_0\rangle = a |1_+\rangle + b |2_+\rangle, \nonumber
    \\
    E_1 &=& 0, \hspace{1.3cm} |\Psi_1\rangle = |1_-\rangle, \nonumber
    \\
    E_2 &=& 0, \hspace{1.3cm} |\Psi_2\rangle = |2_-\rangle, \nonumber
    \\
    E_3 &=& \frac{U}{2} + W, \quad |\Psi_3\rangle = c |1_+\rangle + d |2_+\rangle,
    \end{eqnarray}
    where
    \begin{equation}
     W = \sqrt{\frac{U^2}{4}+4t^2}
    \end{equation}
    and
    \begin{eqnarray}
    a = \sqrt{\frac{W+\frac{U}{2}}{2W}}, \quad b = \frac{2t}{\sqrt{2W\left(W+\frac{U}{2}\right)}}, \nonumber
    \\
    c = -\sqrt{\frac{W-\frac{U}{2}}{2W}}, \quad d = \frac{2t}{\sqrt{2W\left(W-\frac{U}{2}\right)}}.
    \end{eqnarray}

\end{enumerate}

\chapter{Divergences in Disentangling Separations}\label{divergence}

In Figure \ref{fig:ME_sudden} and Figure \ref{fig:SuddenDeath} we presented the curve $r_{\mathrm{crit}}^{(m/p)}(T)$ above which the mode entanglement and the quantum nonfreeness vanished. In particular, we observed a diverging behavior when $T$ approaches zero. In this section we will determine the leading order of these divergences.

\textit{1. Mode/orbital Picture.} When $T$ is small, only the ground state and the first excitation level is activated. The local particle number superselected Gibbs state $\rho^\text{N}$, whose entanglement gives the physical entanglement of the original Gibbs state under superselection rule, can be written as a sum of a separable state and a four-dimensional matrix which can be represented as
\begin{equation}
\begin{split}
    \rho^\text{N}|_{M_1} = \begin{pmatrix}
    e^{- \frac{\Delta E}{ T}} & 0 & 0 & 0
    \\
    0 & A & B & 0
    \\
    0 & B & A & 0
    \\
    0 & 0 & 0 & e^{- \frac{\Delta E}{T}}
    \end{pmatrix} \label{app:restricted}
\end{split}
\end{equation}
referring to the ordered basis states $f^\dagger_{L\uparrow} f^\dagger_{R\uparrow}|0\rangle$, $f^\dagger_{L\uparrow} f^\dagger_{R\downarrow}\ket{0}$, $f^\dagger_{L\downarrow} f^\dagger_{R\uparrow}\ket{0}$, $f^\dagger_{L\downarrow} f^\dagger_{R\downarrow}\ket{0}$ whose span is denoted by $M_1$. Here, we introduced
\begin{equation}
    A = \frac{1}{2}|a|^2 + \frac{1}{2} e^{- \frac{\Delta E}{ T}}, \quad B = -\frac{1}{2}|a|^2 + \frac{1}{2} e^{- \frac{\Delta E}{T}},
\end{equation}
and $a$ is as defined in Appendix \ref{spectrum} and $\Delta E = E_1 - E_0$. It is clear that $\tilde{\rho}$ is separable if and only if the expression in Eq.~\eqref{app:restricted} is separable. By the Peres-Horodecki criterion, Eq.~\eqref{app:restricted} is separable if and only if it has positive partial transpose. Then given a small temperature $T$, $r_{\mathrm{crit}}^{(m)}$ is the inter-nuclei distance such that the partial transpose of Eq.~\eqref{app:restricted} becomes rank deficient. That is, $r = r_{\mathrm{crit}}^{(m)}$ when
\begin{equation}
   A^2 (e^{- \frac{2\Delta E}{ T}} - B^2) = 0. \label{criticalcond}
\end{equation}
Since the factor $A^2$ cannot vanish we just need to solve $e^{- \frac{\Delta E}{ T}} = B$, leading to
\begin{equation}
3e^{- \frac{\Delta E}{T}}=a^2=\frac{1}{2} \left(\frac{1}{\sqrt{16 t^2+1}}+1\right).
\end{equation}
Resorting to the software Mathematica and recalling $t\equiv e^{-r}$ then yields the final result
\begin{equation}
    r_{\mathrm{crit}}^{(m)} = - \frac{1}{2} \log(T) + c_0 +c_1 T+\mathcal{O}(T^2), \quad T \rightarrow 0,
\end{equation}
where $c_0\equiv \log(2) - \frac{1}{2} \log(\log(3))$, $c_1\equiv -\frac{1}{2}(1+\log (3))$ are constants.

\textit{2. Particle Picture.} We recall the ``separability'' criterion from Eq.~\eqref{PartEnt}. We can similarly make the approximation by neglecting higher excitation contributions in the Gibbs state $\rho(T,r)$ as $T$ is small enough compare to $U\equiv 1$. In that case, the matrix $K(\rho)$ defined in Eq.~\eqref{C_SL} follows as
\begin{equation}
    K(\rho) = \begin{pmatrix}
    (b^2-a^2)p^2 & 0 & 0 & 0
    \\
    0 & q^2 & 0 & 0
    \\
    0 & 0 & 0 & -q^2
    \\
    0 & 0 & -q^2 & 0
    \end{pmatrix},
\end{equation}
where $a,b$ are defined as in Appendix \ref{spectrum}, and
\begin{equation}
    p = \sqrt{\frac{1}{1+3e^{- \frac{\Delta E}{ T}}}}, \quad q = \sqrt{\frac{e^{- \frac{\Delta E}{ T}}}{1 + 3 e^{- \frac{\Delta E}{ T}}}},
\end{equation}
with $\Delta E = E_1 - E_0$ as before. Recall the quantum nonfreeness $E^{(p)}(\rho)$ \eqref{PartEnt} is calculated as the largest absolute eigenvalue of $K(\rho)$ minus the absolute values of the rest. It is not hard to see that the absolute eigenvalues of $C(\rho)$ are $|a^2-b^2|p^2$ and $q^2$ where the latter is triply degenerate. At the point $r = r_{\mathrm{crit}}^{(p)}$ at which $E^{(p)}(\rho)$ reaches the value zero, the largest absolute eigenvalue can only be $|a^2-b^2|p^2$, and it must satisfies
\begin{equation}
    |a^2-b^2|p^2 = 3q^2.
\end{equation}
This leads to (using again the software Mathematica)
\begin{equation}
    r_{\mathrm{crit}}^{(p)} = - \frac{1}{2} \log(T) + d_0 +d_1 T+\mathcal{O}(T^2), \quad T \rightarrow 0,
\end{equation}
where $d_0\equiv \log(2) - \frac{1}{2} \log(\log(3))$, $d_1\equiv -\frac{1}{2}(2+\log (3))$ are constants.

\chapter{Single-Orbital Correlation and Entanglement} \label{app:single}
This section is devoted to deriving the formulas in Eq.~\eqref{eqn:single_SSR} for the single-orbital correlation and entanglement under P-SSR and N-SSR. In Section \ref{sec:SSRIncorp} we defined the physical part of a quantum state $\rho$ under P-SSR and N-SSR using the projections
\begin{equation}
\begin{split}
\rho^\textrm{P} &= \sum_{\tau, \tau' = \textrm{odd},\textrm{even}} P_\tau \otimes P_{\tau'} \rho P_\tau \otimes P_{\tau'},
\\
\rho^\textrm{N} &= \sum_{m=0}^{\nu} \sum_{n=0}^{\nu'} P_m \otimes P_n \rho P_m \otimes P_n,
\end{split}
\end{equation}
where $\nu$ and $\nu'$ are the maximal particle numbers allowed on the local subsystems. The total correlation and entanglement available in $\rho$ given the local algebras of observables are restricted by P-SSR and N-SSR, are quantified as the total correlation and entanglement in $\rho^\textrm{P}$ and $\rho^\textrm{N}$ respectively, without the restrictions of superselection rules, according to \eqref{eqn:SSRmeasures}.

By referring to the splitting between orbital $j$ and the remaining ones, resulting in factorizing the total Fock space as $\mathcal{F}= \mathcal{F}_j \otimes \mathcal{F}_{\setminus \{j\}}$, and also assuming particle number and spin symmetries, the ground state $|\Psi\rangle$ of the total system admits the following Schmidt decomposition
\begin{eqnarray}
|\Psi\rangle &=&\sqrt{ p_1} |\Omega\rangle \otimes |\Phi_{N,M}\rangle + \sqrt{p_2} |\!\uparrow\rangle \otimes |\Phi_{N-1,M-\frac{1}{2}}\rangle
\\
&& \quad + \sqrt{p_3} |\!\downarrow\rangle \otimes |\Phi_{N-1,M+\frac{1}{2}}\rangle + \sqrt{p_4} |\! \uparrow\downarrow\rangle \otimes |\Phi_{N-2,M}\rangle. \nonumber
\end{eqnarray}
If we consider the P-SSR, the coherent terms between different local parity sectors are excluded according to Eq.~\eqref{eqn:tilde} and Figure \ref{fig:sectors}, leading to the physical state
\begin{equation}
\rho^\textrm{P} = (p_1 + p_4) |\Psi_{\textrm{even}}\rangle\langle\Psi_{\textrm{even}}| + (p_2 + p_3) |\Psi_{\textrm{odd}}\rangle \langle \Psi_{\textrm{odd}}|,
\end{equation}
where $\rho = |\Psi\rangle\langle\Psi|$ and
\begin{equation}
\begin{split}
|\Psi_{\textrm{even}}\rangle &\equiv \sqrt{\frac{p_1}{p_1+p_4}} |\Omega\rangle \otimes |\Phi_{N,M}\rangle + \sqrt{\frac{p_4}{p_1+p_4}} |\! \uparrow\downarrow\rangle \otimes |\Phi_{N-2,M}\rangle,
\\
 |\Psi_{\textrm{odd}}\rangle &\equiv \sqrt{\frac{p_2}{p_2+p_3}}  |\!\uparrow\rangle \otimes |\Phi_{N-1,M-\frac{1}{2}}\rangle  + \sqrt{\frac{p_3}{p_2+p_3}}  |\!\downarrow\rangle \otimes |\Phi_{N-1,M+\frac{1}{2}}\rangle.
 \end{split}
\end{equation}
Similarly for N-SSR, the physical state is
\begin{equation}
\rho^\textrm{N} = p_1 |\Psi_0\rangle\langle\Psi_0| + (p_2+p_3) |\Psi_1\rangle\langle\Psi_1| + p_4 |\Psi_2\rangle\langle\Psi_2|,
\end{equation}
where
\begin{equation}
\begin{split}
|\Psi_0\rangle &\equiv  |\Omega\rangle \otimes |\Phi_{N,M}\rangle ,
\\
|\Psi_1\rangle & \equiv |\Psi_\textrm{odd}\rangle,
\\
|\Psi_2\rangle & \equiv |\! \uparrow\downarrow\rangle \otimes |\Phi_{N-2,M}\rangle.
\end{split}
\end{equation}
To calculate the total correlation, we first determine the spectra of the respective reduced density matrices $\rho^\textrm{P}_j$, $\rho^\textrm{P}_{\setminus \{j\}}$ and $\rho^\textrm{N}_j$, $\rho^\textrm{N}_{\setminus \{j\}}$. Due to the highly symmetric total state, all four matrices are isospectral as $\rho_1$ in \eqref{eqn:1rdm}. Using the definition of the quantum mutual information in \eqref{eqn:MI} we obtain
\begin{equation}
\begin{split}
I(\rho^\textrm{P}) &= (p_1 + p_4) \ln(p_1 + p_4) + (p_2 + p_3)\ln(p_2+p_3)
\\
 -& 2(p_1 \ln(p_1) + p_2 \ln(p_2) + p_3 \ln(p_3) + p_4 \ln(p_4)),
 \\
 I(\rho^\textrm{N}) & = p_1 \ln(p_1) + (p_2 + p_3)\ln(p_2+p_3) + p_4 \ln(p_4)
\\
 -& 2(p_1 \ln(p_1) + p_2 \ln(p_2) + p_3 \ln(p_3) + p_4 \ln(p_4)).
\end{split}
\end{equation}
For the single-orbital entanglement, we use Theorem \ref{thm:trace}, which allows us, given certain criteria are met, to separate the single-orbital entanglement into the entanglement of its pure state decomposition,
\begin{equation}
\begin{split}
E(\rho^\textrm{P}) &= (p_1+p_4) E(|\Phi_\textrm{even}\rangle\langle\Phi_\textrm{even}|) +  (p_2+p_3) E(|\Phi_\textrm{odd}\rangle\langle\Phi_\textrm{odd}|),
\\
E(\rho^\textrm{N}) & = (p_2+p_3) E(|\Phi_\textrm{odd}\rangle\langle\Phi_\textrm{odd}|).
\end{split}
\end{equation}
Using the von Neumann entropy as the entanglement measure for pure states, the single-orbital entanglement in the presence of P-SSR and N-SSR is also determined solely by the spectrum of the one-orbital reduced density matrix,
\begin{equation}
\begin{split}
E(\rho^\textrm{P}) &= (p_1 + p_4) \ln(p_1 + p_4) + (p_2 + p_3)\ln(p_2+p_3)
\\
& - p_1 \ln(p_1) - p_2 \ln(p_2) - p_3 \ln(p_3) - p_4 \ln(p_4).
\\
E(\rho^\textrm{N}) &= (p_2 + p_3) \ln(p_2 + p_3) - p_2 \ln(p_2) - p_3 \ln(p_3).
\end{split}
\end{equation}

\end{appendix}

  \backmatter
\bibliographystyle{jkthesis}
\bibliography{diss}
  \markboth{}{}

  \addcontentsline{toc}{chapter}{\protect Acknowledgment}

\chapter*{Acknowledgments}

I would like to express my gratitude to my supervisor Dr. Christian Schilling from whom I have learned a great deal, for his guidance and tremendous support. I would also like to thank Dr. Zoltán Zimborás, Dr. Sreetama Das and Sam Mardazad for many, many insightful discussions. Special thank goes to Sam Mardazad for providing the the molecular ground states data in Section \ref{sec:qchem} which laid the ground work for my entanglement analysis. I would like to thank my parents, who have always supported my pursuits in every possible way. Finally I want to thank my partner Gian Rossini, who has effortlessly bettered my best and worst days.

  \include{lebenslauf}

\end{document}